\definecolor{mygreen}{RGB}{20,120,60}
\title{Fully Dynamic Maximal Independent Set with\\ Polylogarithmic Update Time\footnote{A preliminary version of this paper is to appear in the proceedings of {\em The 60th Annual IEEE Symposium on Foundations of Computer Science} (FOCS 2019).\smallskip\newline Portions of the work were completed while some of the authors visited The Simons Institute for Theory of Computing.\smallskip\smallskip}}
\date{}
\author{Soheil Behnezhad\thanks{\texttt{\{soheil, mahsa, hajiagha\}@cs.umd.edu}. Supported in part by NSF CAREER award CCF-1053605, NSF AF:Medium grant CCF-1161365, NSF BIGDATA grant IIS-1546108, and NSF SPX grant CCF-1822738. Soheil Behnezhad was also supported in part by a Google PhD Fellowship.}\\University of Maryland
\and Mahsa Derakhshan\footnotemark[2]\\University of Maryland
\and MohammadTaghi Hajiaghayi\footnotemark[2]\\University of Maryland
\and Cliff Stein\thanks{{\tt cliff@ieor.columbia.edu}.
Research partly supported by
NSF Grants  CCF-1714818 and CCF-1822809.}\\Columbia University
\and Madhu Sudan\thanks{{\tt madhu@cs.harvard.edu}. Supported in part by a Simons Investigator Award and NSF Award CCF 1715187.}\\Harvard University
}
\newcommand{\E}[0]{\ensuremath{\mathbb{E}}}
\newcommand{\A}[0]{\ensuremath{\mathcal{A}}}
\newcommand{\affectedset}[0]{\ensuremath{\mathcal{A}}}
\newcommand{\flippedset}[0]{\ensuremath{\mathcal{F}}}
\renewcommand{\P}[1]{\ensuremath{\mathcal{P}(#1)}}
\renewcommand{\S}[0]{\ensuremath{\mathcal{S}}}
\DeclareMathOperator*{\argmin}{arg\,min}
\DeclareMathOperator{\poly}{poly}
\DeclareMathOperator{\polylog}{polylog}
\renewcommand{\O}[1]{\ensuremath{O(#1)}}
\DeclareMathOperator{\elim}{elim}
\let\originalleft\left
\let\originalright\right
\renewcommand{\left}{\mathopen{}\mathclose\bgroup\originalleft}
\renewcommand{\right}{\aftergroup\egroup\originalright}
\newtheorem{theorem}{Theorem}
\newtheorem{lemma}{Lemma}[section]
\newtheorem{proposition}[lemma]{Proposition}
\newtheorem{corollary}[lemma]{Corollary}
\newtheorem{definition}[lemma]{Definition}
\newtheorem{claim}[lemma]{Claim}
\newtheorem{observation}[lemma]{Observation}
\newtheorem{invariant}[lemma]{Invariant}
\def\thm@space@setup{%
  \thm@preskip= 0.2cm
  \thm@postskip=\thm@preskip 
}
\definecolor{mygreen}{RGB}{20,125,20}
\definecolor{myred}{RGB}{125,20,20}
\definecolor{linkcolor}{RGB}{0,0,230}
\definecolor{mylightgray}{RGB}{230,230,230}
\definecolor{verylightgray}{RGB}{240,240,240}
\definecolor{commentcolor}{RGB}{120,120,120}
\newcommand{\smparagraph}[1]{
\par\addvspace{0.2cm}
\noindent \textbf{#1}
}
\newcommand{\etal}[0]{\textit{et al.}}
\newcounter{myalgctr}
\newenvironment{tbox}{
\par\addvspace{0.2cm}
\begin{tcolorbox}[width=\textwidth,
                  enhanced,
                  boxsep=2pt,
                  left=1pt,
                  right=1pt,
                  top=4pt,
                  boxrule=1pt,
                  arc=0pt,
                  colback=white,
                  colframe=black,
                  unbreakable
                  ]
}{
\end{tcolorbox}
}
\newenvironment{graytbox}{
\par\addvspace{0.1cm}
\begin{tcolorbox}[width=\textwidth,
                  enhanced,
                  frame hidden,
                  boxsep=5pt,
                  left=1pt,
                  right=1pt,
                  top=2pt,
                  bottom=2pt,
                  boxrule=1pt,
                  arc=0pt,
                  colback=mylightgray,
                  colframe=black,
                  breakable
                  ]
}{
\end{tcolorbox}
}
\newcommand{\tboxhrule}[0]{\vspace{0.1cm} \hrule \vspace{0.2cm}}
\newenvironment{titledtbox}[1]{\begin{tbox}#1 \tboxhrule}{\end{tbox}}
\newenvironment{tboxalg2e}[1]{
\refstepcounter{myalgctr}
	\begin{titledtbox}{\textbf{Algorithm \themyalgctr.} #1}
	\vspace{-0.2cm}
}
{
	\vspace{-0.3cm}
	\end{titledtbox}
}
\newenvironment{highlighttechnical}[0]{
\vspace{0.1cm}
\begin{tcolorbox}[width=\textwidth,
                  enhanced,
                  boxsep=2pt,
                  left=1pt,
                  right=1pt,
                  top=4pt,
                  boxrule=0.8pt,
                  arc=0pt,
                  colback=verylightgray,
                  colframe=black,
                  unbreakable
                  ]
}{
\end{tcolorbox}
}
\newenvironment{highlighttechnicalwhite}[0]{
\vspace{0.1cm}
\begin{tcolorbox}[width=\textwidth,
                  enhanced,
                  boxsep=2pt,
                  left=1pt,
                  right=1pt,
                  top=4pt,
                  boxrule=0.8pt,
                  arc=0pt,
                  colback=white,
                  colframe=black,
                  unbreakable
                  ]
}{
\end{tcolorbox}
}
\newcommand{\restatedesc}[3]{
\par\addvspace{0.2cm}
\noindent\textbf{#1 {\normalfont (#2)}.} {\em #3}
\par\addvspace{0.2cm}
}
\newcommand{\restate}[2]{
\restatedesc{#1}{restated}{#2}
}
\begin{document}
\maketitle

\thispagestyle{empty}
\begin{abstract}
\setlength{\parskip}{0.8em}
We present the first algorithm for maintaining a {\em maximal independent set} (MIS) of a {\em fully dynamic} graph---which undergoes both edge insertions and deletions---in polylogarithmic time. Our algorithm is randomized and, per update, takes $O(\log^2 \Delta \cdot \log^2 n)$ expected time. Furthermore, the algorithm can be adjusted to have $O(\log^2 \Delta \cdot \log^4 n)$ worst-case update-time with high probability. Here, $n$ denotes the number of vertices and $\Delta$ is the maximum degree in the graph.

The MIS problem in fully dynamic graphs has attracted significant attention after a breakthrough result of Assadi, Onak, Schieber, and Solomon \mbox{[STOC'18]} who presented an algorithm with $O(m^{3/4})$ update-time (and thus broke the natural $\Omega(m)$ barrier) where $m$ denotes the number of edges in the graph. This result was improved in a series of subsequent papers, though, the update-time remained polynomial. In particular, the fastest algorithm prior to our work had $\widetilde{O}(\min\{\sqrt{n}, m^{1/3}\})$ update-time [Assadi~\etal{} SODA'19].

Our algorithm maintains the {\em lexicographically first MIS} over a random order of the vertices. As a result, the same algorithm also maintains a 3-approximation of {\em correlation clustering}. We also show that a simpler variant of our algorithm can be used to maintain a random-order {\em lexicographically first maximal matching} in the same update-time.
\end{abstract}
\clearpage

\thispagestyle{empty}
{
\hypersetup{linkcolor=black}
\tableofcontents{}
}
\clearpage

\setcounter{page}{1}

\section{Introduction}\label{sec:intro}

A maximal independent set (MIS) of a graph is a fundamental object with countless theoretical and practical applications. It is one of the most well-studied problems in distributed and parallel settings following the seminal works of \cite{DBLP:conf/stoc/Luby85, DBLP:journals/jal/AlonBI86}. MIS has also been studied in a variety of other models and has diverse applications such as approximating matching and vertex cover~\cite{DBLP:conf/focs/NguyenO08,DBLP:conf/stoc/YoshidaYI09}, graph coloring~\cite{DBLP:conf/stoc/Luby85,DBLP:conf/focs/Linial87}, clustering~\cite{correlationclustering}, leader-election~\cite{DBLP:conf/podc/DaumGKN12}, and many others.

In this paper, we consider MIS in {\em fully-dynamic} graphs. The graph is updated via both edge insertions and deletions and the goal is to maintain an MIS by the end of each update. Dynamic graphs constitute an active area of research and have seen a plethora of results over the past two decades. The MIS problem in dynamic graphs has also attracted a significant attention, especially recently \cite{Censor-HillelHK16,assadistoc, DBLP:journals/corr/abs-1804-01823, DBLP:journals/corr/abs-1804-08908,onakicalp,assadisoda}. We overview these works below.



\smparagraph{Related Work on Dynamic MIS.} In static graphs with $m$ edges, a simple greedy algorithm can find an MIS in $O(m)$ time. As such, one can trivially maintain MIS by recomputing it from scratch after each update, in $O(m)$ time. In a pioneering work, Censor-Hillel, Haramaty, and Karnin~\cite{Censor-HillelHK16} presented a round-efficient randomized algorithm for MIS in dynamic {\em distributed} networks. Implementing the algorithm of \cite{Censor-HillelHK16} in the sequential setting---the focus of this paper---requires $\Omega(\Delta)$ update-time (see \cite[Section 6]{Censor-HillelHK16}) where $\Delta$ is the maximum-degree in the graph which can be as large as $\Omega(n)$ or even $\Omega(m)$ for sparse graphs. Improving this bound was one of the major problems the authors left open. Later, in a breakthrough, Assadi, Onak, Schieber, and Solomon \cite{assadistoc} presented a deterministic algorithm with $O(m^{3/4})$ update-time; thereby improving the $O(m)$ bound for all graphs. This result was further improved in a series of subsequent papers \cite{DBLP:journals/corr/abs-1804-01823, DBLP:journals/corr/abs-1804-08908,onakicalp,assadisoda}. The current state-of-the-art is a randomized algorithm due to Assadi~\etal{}~\cite{assadisoda}, which requires $\widetilde{O}(\min\{\sqrt{n}, m^{1/3}\})$ amortized update-time in $n$-vertex graphs.

\smparagraph{Our Contribution.} In this paper, we show that it is possible to maintain an MIS of fully-dynamic graphs in polylogarithmic time. This exponentially improves over the prior algorithms, which all have polynomial update-time on general graphs. Our algorithm is randomized and requires the standard {\em oblivious adversary}\footnote{In the standard {\em oblivious adversarial} model, the adversary can feed in any sequence of edge updates and is aware of the algorithm to be used, but is unaware of the random-bits used by the algorithm. Equivalently, one can assume that the sequence of edge updates is picked adversarially {\em before} the dynamic algorithm starts to operate.} assumption (as do all previous randomized algorithms).

\newcommand{\thmmain}[0]{There is a data structure to maintain an MIS against an oblivious adversary in a fully-dynamic graph that, per update, takes $O(\log^2 \Delta \cdot \log^2 n)$ expected time. Furthermore, the number of adjustments to the MIS per update is $O(1)$ in expectation.
}
\begin{graytbox}
	\begin{theorem}[main result]\label{thm:main}
		\thmmain{}	
	\end{theorem}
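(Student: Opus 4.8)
The plan is to maintain, explicitly and at every step, the lexicographically-first MIS of the current graph with respect to a fixed uniformly random ranking of the vertices. Draw independent uniform ranks $\pi(v)\in[0,1]$ (lazily, which suffices against an oblivious adversary). The MIS $M=M(\pi)$ is then forced: scanning the vertices in increasing order of rank, put $v$ into $M$ iff no lower-ranked neighbour of $v$ is in $M$. Equivalently $v\in M$ iff the \emph{eliminator} $\elim(v)$ is undefined, where $\elim(v)$ is the lowest-ranked neighbour of $v$ that lies in $M$ (necessarily of smaller rank than $v$ when it exists). Following the eliminator pointers organises $V$ into the \emph{elimination forest}, whose roots are exactly the vertices of $M$. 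The data structure stores, for each vertex, its status, its eliminator, the set of vertices it eliminates (its children in the forest), and a balanced search tree over its adjacency list keyed by rank; these satisfy the obvious consistency invariants after every update.

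\textbf{Update procedure.} Consider an update to an edge $\{u,v\}$ with $\pi(u)<\pi(v)$. Inserting or deleting $\{u,v\}$ only alters the \emph{higher}-ranked neighbourhood of $u$, which never affects $u$'s greedy decision, so $v$ is the only vertex whose status can change immediately. We run a propagation loop: maintain a priority queue of candidate vertices keyed by rank, initialised to $\{v\}$; repeatedly pop the minimum-rank vertex $w$, recompute -- from the already-finalised statuses of $w$'s lower-ranked neighbours -- whether $w\in M$ and what $\elim(w)$ is, fix the forest accordingly, and, \emph{only if} $w$'s membership in $M$ actually toggled, enqueue the neighbours of $w$ whose evaluation this can change (the eliminees of $w$ if $w$ left $M$; the higher-ranked neighbours of $w$ that lie in $M$ if $w$ entered $M$). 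Since we enqueue only strictly higher-ranked vertices, and only upon a genuine toggle, the loop terminates; processing in rank order guarantees that when $w$ is popped all of its lower-ranked neighbours already carry their correct final status, so at the end $M$ is exactly the greedy MIS of the updated graph. Let $\flippedset$ be the set of vertices that toggled during the update, and $\affectedset\supseteq\flippedset$ the set ever popped from the queue.

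\textbf{Running time.} Beyond bounding $|\flippedset|$, the per-vertex cost reduces to a few primitive queries on the rank-indexed adjacency trees: ``lowest-ranked neighbour currently in $M$'' (to recompute an eliminator) and enumeration of the neighbours currently in $M$ (to find which higher-ranked neighbours a newly-entered vertex forces out), all of which must be supported under the stream of toggles in polylogarithmic amortised time -- a naive per-subtree ``meets $M$'' flag is ruled out since a toggle of $w$ would touch $\deg(w)$ such flags. Getting these right contributes the two $\log\Delta$ factors; the two $\log n$ factors come from priority-queue operations on up to $n$ candidates and from the fact that a single cascade reaches a sub-forest of depth $\O{\log n}$ with high probability (the logarithmic round complexity of random-greedy MIS), which is what turns ``$\O{1}$ flips in expectation'' into polylogarithmically many total operations in expectation. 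One also needs $\E\!\big[\sum_{w\in\flippedset}(\text{number of eliminees of }w)\big]=\Ot{1}$, i.e.\ that flipping a high-degree root is correspondingly improbable; this is established by essentially the same randomness argument as the flip bound below.

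\textbf{Main obstacle.} The crux is the combinatorial statement $\E|\flippedset|=\O{1}$ -- that toggling one edge changes the random-greedy MIS in only a constant number of vertices in expectation -- which is also exactly the $\O{1}$-adjustments claim of the theorem. The natural approach couples the greedy runs on the old and the new graph: the vertices whose outcome differs form a rank-increasing region explored from $\{u,v\}$ in the elimination forest, and for a fixed vertex $w$ one bounds $\Pr[w\in\flippedset]$ by exhibiting a canonical ``alternating chain'' $u=w_0,w_1,\dots,w_k=w$ with $\pi(w_0)<\dots<\pi(w_k)$ along which every successive toggle is forced. Conditioning on the ranks of the chain, each step occurs only if a particular ``this is the lowest-ranked $M$-neighbour'' event resolves a fixed way -- an event of probability inversely proportional to the number of competing candidates -- and these probabilities multiply along the chain; summing over all chains ending at $w$, and then over $w$, should telescope to $\O{1}$. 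The delicate part, where I expect the real work to lie, is making this union bound rigorous despite the dependence between which chain is ``canonical'' and the random ranks themselves, which forces one to fix a deterministic rule selecting a unique witness chain -- the ``principal witness'' device familiar from bounding the query complexity of greedy MIS in the sublinear-time model. Combining $\E|\flippedset|=\O{1}$ and the eliminee bound with the $\O{\log^2\Delta}$ per-vertex work and the $\O{\log n}$ cascade depth yields both conclusions of the theorem.
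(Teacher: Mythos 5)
Your algorithmic outline (maintain the random-order LFMIS together with eliminators, propagate changes through a rank-ordered priority queue, enqueue only above genuine toggles) matches the paper's Algorithm~1. However, there are two genuine gaps. First, you identify the crux as $\E[|\flippedset|]=O(1)$, but that bound is already known (it is \cite[Theorem~1]{Censor-HillelHK16} and is what the paper cites for the adjustment complexity); the actual crux is bounding the number of \emph{affected} vertices --- those whose eliminator changes --- which is $O(\log n)$ in expectation, not $O(1)$, and is proved by a delicate permutation-rotation/blaming argument (Theorem~\ref{thm:vertexA}) rather than by the witness-chain telescoping you sketch. More importantly, the per-affected-vertex processing cost is itself a random variable of order $\min\{\Delta,\frac{\log n}{\pi(a)}\}\log\Delta$ (via the degree-pruning Proposition~\ref{prop:sparseMIS}), and it is \emph{correlated} with $|\affectedset|$; your final step multiplies the two expectations, which is not valid. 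The paper resolves this by proving the affected-vertex bound for \emph{vertex} updates and deducing that $\E[|\affectedset| \mid \min\{\pi(a),\pi(b)\}=\lambda]=O(\log n)$ for every $\lambda$ (Lemma~\ref{lem:edgeA}); without some such conditional or decoupled bound the product of expectations argument does not go through.

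Second, you correctly observe that a subtree-flag augmentation of a rank-keyed adjacency tree is ruled out because a single toggle of $w$ would touch $\deg(w)$ flags, but you do not supply a replacement, and the queries you list (``lowest-ranked neighbour currently in $M$'') cannot be answered in polylogarithmic time from a tree keyed by static ranks alone. The paper's fix is to split each adjacency list into $N^-(v)$ (neighbours with smaller eliminator-rank, indexed by that dynamically changing eliminator-rank) and $N^+(v)$ (the rest, indexed by static ID), and then to argue via Proposition~\ref{prop:sparseMIS} that $|N^+(v)|$ and the re-indexing work in the $N^-$ trees are both bounded by $O(\min\{\Delta,\frac{\log n}{\pi(a)}\})$ per affected vertex. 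Some concrete mechanism of this kind is needed before the claimed $\log\Delta$ factors can be justified; as written, the data-structure portion of your argument raises the right obstruction but does not overcome it.
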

\end{graytbox}

Since our algorithm bounds the expected time  {\em per update} without amortization, we can use it as a black-box in a framework of Bernstein~\etal{}~\cite[Theorem 1.1]{DBLP:conf/soda/BernsteinFH19} to also get a worst-case  guarantee w.h.p.\footnote{Here and throughout the paper, ``w.h.p.'' abbreviates ``with high probability'' and implies probability at least $1-n^{-c}$ for any desirably large constant $c>1$ that may affect the hidden constants in the bounds.} (We note that this comes at the cost of losing the guarantee on the adjustment-complexity.)


\begin{corollary}
	There is a data structure to maintain an MIS against an oblivious adversary in a fully-dynamic graph that w.h.p. has $O(\log^2 \Delta \cdot \log^4 n)$ worst-case update-time.
\end{corollary}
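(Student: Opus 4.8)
The plan is to invoke the black-box de-amortization framework of Bernstein, Forster, and Henzinger \cite{DBLP:conf/soda/BernsteinFH19} with the data structure of Theorem~\ref{thm:main} as the input primitive. That framework takes any randomized dynamic data structure whose expected running time is bounded by some $T$ \emph{per update, without amortization}, and which is correct against an oblivious adversary, and produces a data structure for the same problem whose worst-case update time is $O(T \cdot \polylog n)$ with high probability. Concretely, I would set $T = O(\log^2 \Delta \cdot \log^2 n)$, which by Theorem~\ref{thm:main} is a valid non-amortized per-update expectation bound against an oblivious adversary, and read off the conclusion: a data structure with $O(\log^2 \Delta \cdot \log^4 n)$ worst-case update time w.h.p., the extra $\log^2 n$ factor being exactly the overhead of the reduction.

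Before reading off the bound I would check that Theorem~\ref{thm:main} actually meets the hypotheses of the reduction. The adversary model matches, since our algorithm is analyzed against an oblivious adversary. The bound we have is non-amortized and expected per update, which is precisely what the framework consumes; in particular our analysis never invoked a potential-function/amortization argument for the running time, so no strengthening is needed. The price, as noted in the statement of the corollary, is that the reduction maintains several independent copies of the algorithm and periodically rebuilds/restarts them with fresh randomness while serving queries from whichever copy is currently ``ahead'', so the reported MIS can change arbitrarily between consecutive updates; hence the $O(1)$-expected-adjustment guarantee of Theorem~\ref{thm:main} is not preserved by the transformed data structure.

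The one genuinely technical point to verify is that our implementation supports the operations the framework schedules on it: it must be possible to spin up a fresh copy with new random bits and feed it a prefix (or a sliding window) of the already-seen update sequence, within the time budget the framework allots. This is routine here because our update procedure is itself just a sequence of elementary per-update operations, each costing $O(\log^2 \Delta \cdot \log^2 n)$ in expectation, so replaying a window of $k$ updates costs $O(k \cdot \log^2 \Delta \cdot \log^2 n)$ in expectation; and since the update sequence is fixed in advance by the oblivious adversary, the expectation over the freshly drawn randomness of a restarted copy is unaffected by what the adversary has observed so far. With these checks in place the corollary follows immediately from \cite[Theorem~1.1]{DBLP:conf/soda/BernsteinFH19}, and the main obstacle is really just confirming this interface compatibility rather than any new algorithmic idea.
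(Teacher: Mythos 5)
Your proposal is correct and matches the paper's argument exactly: the corollary is obtained by plugging the non-amortized expected per-update bound of Theorem~\ref{thm:main} into the black-box de-amortization framework of \cite[Theorem~1.1]{DBLP:conf/soda/BernsteinFH19}, incurring the $\log^2 n$ overhead and sacrificing the $O(1)$ adjustment-complexity guarantee. Your additional checks on interface compatibility and the role of the oblivious adversary are sensible but not spelled out in the paper, which treats the reduction purely as a black box.
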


To prove Theorem~\ref{thm:main}, we give an algorithm that carefully simulates the {\em lexicographically first MIS} (LFMIS) over a random ranking of the vertices (see Section~\ref{sec:preliminaries} for definition). Once this order is fixed, the LFMIS of the graph becomes unique. This is particularly useful for dynamic graphs as it makes the output {\em history-independent}. That is, the order of edge insertions and deletions by the adversary cannot affect the reported MIS. See \cite[Section 5]{Censor-HillelHK16} for more discussion on this property and also \cite{DBLP:conf/spaa/BlellochFS12, DBLP:conf/focs/NguyenO08} for some other useful features of random-order LFMIS.

We note that maintaining LFMIS over a random permutation has been done before by Censor-Hillel~\etal{}~\cite{Censor-HillelHK16} and also partially by Assadi~\etal{}~\cite{assadisoda} who combined it with another deterministic algorithm. However, as discussed above, both these algorithms require a polynomial update-time. The novelty of our approach is in (1) {\em the algorithm and data structures} with which we maintain this MIS, and (2) the {\em analysis} of why polylogarithmic time is sufficient. The high-level intuitions behind both the algorithm and the analysis are presented in Section~\ref{sec:highlevel}.

\smparagraph{Independent Work.} Independently and concurrently, Chechik and Zhang \cite{independentwork} also came up with an algorithm for maintaining a fully-dynamic MIS in polylogarithmic update-time against an oblivious adversary. Similar to our algorithm, they also maintain a random-order LFMIS. For general graphs with arbitrary maximum degree, both our algorithm and that of \cite{independentwork} take $O(\log^4 n)$ worst-case expected update-time.

\subsection{Other Implications of our Approach}

\smparagraph{Correlation Clustering.} Due to a reduction of \mbox{Ailon \etal{}~\cite{correlationclustering}}, our algorithm with essentially no change also maintains a 3-approximation of min-disagreement {\em correlation clustering} using the same update-time. 

\begin{corollary}
	There is a data structure to maintain a 3-approximation of the min disagreement variant of correlation clustering on completely labeled graphs against an oblivious adversary in a fully-dynamic graph that per update, takes $O(\log^2 \Delta \cdot \log^2 n)$ expected time. Furthermore, the number of changes to the clusters  per update is $O(1)$ in expectation.
\end{corollary}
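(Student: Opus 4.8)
The plan is to obtain the corollary from Theorem~\ref{thm:main} through the classical reduction of Ailon, Charikar, and Newman~\cite{correlationclustering}. Recall that on a completely labeled graph, where every unordered pair of vertices is marked ``$+$'' (similar) or ``$-$'' (dissimilar), their \textsc{Pivot}/\textsc{KwikCluster} algorithm fixes a uniformly random ranking $\pi$ of the vertices and scans the vertices in increasing order of rank: when the current vertex $v$ has not yet been placed into a cluster, it becomes a \emph{pivot} and opens a new cluster consisting of $v$ together with every not-yet-clustered vertex $u$ for which $(u,v)$ is a ``$+$''-pair. Ailon et al.\ prove that the expected number of disagreements of the resulting clustering is at most three times the optimum, i.e.\ it is a $3$-approximation for min-disagreement correlation clustering.

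The key observation is that this algorithm is exactly random-order LFMIS in disguise, run on the graph $G^{+}=(V,E^{+})$ whose edge set $E^{+}$ consists of the ``$+$''-pairs. Indeed, a vertex $v$ becomes a pivot if and only if none of its lower-ranked $G^{+}$-neighbors is a pivot, so the set of pivots is precisely the LFMIS of $G^{+}$ with respect to $\pi$; and a non-pivot vertex $v$ is absorbed by the pivot of smallest rank among its $G^{+}$-neighbors, which is exactly the ``blocker'' $b(v)$ --- the minimum-rank neighbor of $v$ lying in the LFMIS --- that our data structure already maintains for every vertex while simulating LFMIS (and which is guaranteed to exist, with $\pi(b(v))<\pi(v)$, for every non-MIS vertex). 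Consequently the whole clustering is encoded by the pair $(\text{MIS-membership},\,\text{blocker})$ of each vertex: the cluster of $v$ is the singleton-centered cluster it opens if $v\in\mathrm{MIS}(G^{+})$, and otherwise $v$ belongs to the cluster centered at $b(v)$.

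It therefore suffices to run the data structure of Theorem~\ref{thm:main} on $G^{+}$. Each adversarial update to the labeled graph --- an insertion, a deletion, or a relabeling of a pair --- amounts to $O(1)$ edge updates of $G^{+}$, so the update time is $O(\log^{2}\Delta\cdot\log^{2}n)$ as claimed (with $\Delta$ the maximum ``$+$''-degree, which is at most $n$). Since the maintained clustering is, by the equivalence above, identical to the output of \textsc{KwikCluster} on the current $G^{+}$ under the fixed random ranking $\pi$, the guarantee of Ailon et al.\ applies at every point in time, giving a $3$-approximation in expectation. Finally, a vertex changes its cluster only when its MIS-membership or its recorded blocker changes, and Theorem~\ref{thm:main} bounds the expected number of such vertices per update by $O(1)$, which yields the stated adjustment bound.

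The one point that needs care --- and the natural ``main obstacle'' --- is this last step: one must be sure that the analysis supporting Theorem~\ref{thm:main} controls not merely how many vertices enter or leave the MIS per update, but also how many vertices change their stored blocker, since those are exactly the vertices that migrate between clusters. If the LFMIS data structure happened to store an arbitrary MIS-witness rather than the minimum-rank one, one would instead have to argue that the minimum-rank MIS-neighbor of each affected vertex can be recovered within the same time and adjustment budget. Granting the properties established for the random-order LFMIS data structure, no additional work is required.
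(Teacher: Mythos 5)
Your reduction is the right one, and it is all the paper itself offers for this corollary (the text just invokes the Ailon~\etal{} reduction): the \textsc{Pivot} pivots are exactly $\lfmis{G^{+},\pi}$, and a non-pivot $u$ is absorbed by its minimum-rank pivot neighbor, which is precisely the eliminator $\eliminator{G^+}{\pi}{u}$ that the data structure of Section~\ref{sec:algorithm} maintains as $k(u)$. So the clustering is read off the stored state with no extra work, a label flip is one edge update of $G^{+}$, and the update-time bound carries over verbatim.

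The gap is in your last step, the very one you flag as ``the main obstacle'' and then wave away. You write that Theorem~\ref{thm:main} bounds by $O(1)$ the expected number of vertices whose \emph{MIS-membership or recorded blocker} changes. It does not. The $O(1)$ adjustment bound in Theorem~\ref{thm:main} (inherited from \cite[Theorem~1]{Censor-HillelHK16}) concerns only the flipped set $\flippedset$, i.e.\ vertices whose MIS-\emph{membership} changes. The set of vertices whose eliminator changes is the affected set $\affectedset \supseteq \flippedset$, and the only bound the paper proves on it is $\E[|\affectedset|]=O(\log n)$ (Theorem~\ref{thm:vertexA} and Lemma~\ref{lem:edgeA}); a single pivot losing MIS status can force all of its absorbed neighbors to migrate. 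Since a vertex changes cluster exactly when its eliminator changes, your argument as written only yields $O(\log n)$ expected vertex migrations per update, not $O(1)$. To recover the stated $O(1)$ you must either interpret ``changes to the clusters'' as the number of clusters created or destroyed --- equivalently the number of pivots that flip, which is $|\flippedset|=O(1)$ in expectation --- and say so explicitly, or supply a genuinely new argument (not present in the paper) that $\E[|\affectedset|]=O(1)$.
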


\smparagraph{Maximal Matching.} There has been a huge body of work on the matching problem in dynamic graphs, see e.g. \cite{DBLP:conf/stoc/OnakR10,DBLP:journals/siamcomp/BaswanaGS18,DBLP:conf/stoc/NeimanS13,DBLP:conf/focs/GuptaP13,DBLP:journals/siamcomp/BhattacharyaHI18,DBLP:conf/stoc/BhattacharyaHN16,DBLP:conf/focs/Solomon16,DBLP:conf/soda/BhattacharyaHN17, DBLP:conf/ipco/BhattacharyaCH17,DBLP:conf/stoc/GuptaK0P17,DBLP:conf/icalp/CharikarS18,DBLP:conf/icalp/ArarCCSW18,DBLP:conf/soda/BernsteinFH19} and the references therein. Among these results, maintaining a maximal matching (MM) has been of special interest. MIS and MM are closely related. Despite all the similarities, however, the known algorithms for MM were much more efficient \cite{DBLP:journals/siamcomp/BaswanaGS18,DBLP:conf/focs/Solomon16,DBLP:conf/soda/BernsteinFH19}. Assadi~\etal{}~\cite[Section 1.1]{assadistoc} in part justified this by describing why the common techniques used for maintaining MM are not applicable to MIS, hinting also that MIS ``is inherently more complicated \cite{assadistoc}'' in fully dynamic graphs. We formalize this intuition further and show that indeed a simpler variant of our MIS algorithm can also maintain a {\em lexicographically first MM} (LFMM) over a random order on the edges, with essentially the same update-time:

\newcommand{\thmMM}[0]{There is a data structure to maintain a random-order lexicographically first maximal matching against an oblivious adversary in a fully-dynamic graph that per update, takes $O(\log^2 \Delta \cdot \log^2 n)$ expected time. Furthermore, per update, the adjustment-complexity is $O(1)$ in expectation.}

\begin{theorem}\label{thm:MM}
	\thmMM{}
\end{theorem}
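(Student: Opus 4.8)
The plan is to reduce the maximal matching problem to the maximal independent set problem on the \emph{line graph}, and then observe that the machinery developed for Theorem~\ref{thm:main} applies essentially verbatim, modulo a few degree/size bookkeeping changes. Concretely, given the dynamic graph $G=(V,E)$, consider the line graph $L(G)$, whose vertices are the edges of $G$ and in which two edges are adjacent iff they share an endpoint. A maximal independent set of $L(G)$ is exactly a maximal matching of $G$, and the LFMIS of $L(G)$ under a random ranking of its vertices is precisely the random-order LFMM of $G$ that we want to maintain. The first step is therefore to spell out this correspondence and to check that a single edge update in $G$ translates into a bounded, easily computable batch of updates in $L(G)$: inserting or deleting an edge $e=(u,v)$ in $G$ corresponds to inserting or deleting the vertex $e$ in $L(G)$ together with its incident edges, i.e. edges to all current edges of $G$ touching $u$ or $v$.

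The main technical point—and the reason this is not completely free—is that the degree of a vertex $e=(u,v)$ in $L(G)$ can be as large as $2\Delta-2$, and more importantly an edge \emph{insertion/deletion} in $G$ can change the edge set of $L(G)$ by up to $\Theta(\Delta)$ edges at once, so we cannot afford to literally run the MIS data structure on $L(G)$ as a black box fed one $L(G)$-edge at a time. Instead I would re-examine the MIS algorithm and argue that it supports a vertex-insertion/vertex-deletion primitive directly, with update time $O(\log^2 \Delta_{L(G)} \cdot \log^2 |V(L(G))|)$ per such operation; since $\Delta_{L(G)} \le 2\Delta$ and $|V(L(G))| = |E| \le n^2$, this is still $O(\log^2\Delta\cdot\log^2 n)$. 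The cleanest way to present this is to observe that the LFMIS data structure is naturally organized around \emph{vertices} and their ranks: inserting a new vertex $e$ with a fresh random rank only requires determining, along the usual ``eliminator'' chain, whether $e$ enters the MIS and which vertices it subsequently evicts, and this is exactly the same recursive/amortized exploration already analyzed for Theorem~\ref{thm:main}; deleting $e$ is symmetric. The edges incident to $e$ in $L(G)$ are never stored explicitly but are generated on demand from $G$'s adjacency lists at $u$ and $v$, which is the standard implicit-line-graph trick and costs only the time already charged by the exploration.

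Having established that, the proof of Theorem~\ref{thm:MM} is: (1) maintain $G$'s adjacency structure so that, given $e=(u,v)$, the list of edges of $G$ sharing an endpoint with $e$ can be enumerated efficiently; (2) on each update to $G$, perform the corresponding single vertex insertion or deletion in the (implicitly represented) line graph $L(G)$ using the MIS data structure; (3) read off the maintained LFMIS of $L(G)$ as the LFMM of $G$. Correctness is immediate from the MIS$\leftrightarrow$MM correspondence and the fact that a uniformly random ranking of $E$ is exactly a uniformly random ranking of $V(L(G))$; the update-time and the $O(1)$-expected adjustment-complexity bounds transfer directly from Theorem~\ref{thm:main} with the parameter substitutions above, using $\log|V(L(G))| = O(\log n)$ and $\log\Delta_{L(G)} = O(\log\Delta)$. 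I expect the one genuine obstacle to be verifying that the MIS data structure's internal invariants and amortized/expected-time analysis are indeed insensitive to whether the atomic operation is an edge update or a vertex update—i.e. that nothing in the analysis secretly relied on each update touching only $O(1)$ adjacencies rather than $O(\Delta)$—so the bulk of the write-up will consist of pointing to the relevant invariants from the MIS construction and noting that they are stated (or can be restated) at the granularity of vertex insertions/deletions, which is why we claim ``a simpler variant'' rather than a literal black-box reduction. In fact the matching case is \emph{simpler}: because every $L(G)$-vertex already has a canonical description as an unordered pair from $V$, several of the data-structural complications needed for general MIS (dynamic maintenance of rank-ordered neighbor lists under arbitrary insertions) become lighter, which we will remark on but not belabor.
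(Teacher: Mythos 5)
Your high-level framing (LFMM $=$ LFMIS on the line graph, implicitly represented) matches the paper's motivation, but the paper does \emph{not} obtain Theorem~\ref{thm:MM} by invoking the MIS machinery on $L(G)$ as a vertex-update primitive, and your plan has two concrete gaps at exactly the points where the real work lies. First, the data structures. You assert that a vertex insertion/deletion in $L(G)$ can be serviced in $O(\log^2\Delta\cdot\log^2 n)$ time by the Theorem~\ref{thm:main} structure with neighbors ``generated on demand,'' but the MIS algorithm is not agnostic to how adjacency is stored: every line-graph vertex must appear in the $N^+/N^-$ BSTs of its neighbors (Observation~\ref{obs:nplusnminus}), so inserting a new $L(G)$-vertex $e=(u,v)$ with up to $2\Delta-2$ neighbors forces $\Omega(\Delta)$ BST updates before any exploration even begins, destroying the polylogarithmic bound. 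The paper's fix is a genuine redesign, not a lightening: it keeps a \emph{single} BST $N(v)$ per original vertex $v$, indexed by the eliminator rank $k(e)$ of each incident edge, exploiting the fact that when $k(e)$ changes for $e=(u,v)$ only the two trees $N(u),N(v)$ need re-indexing (whereas in general MIS a vertex whose eliminator changes may need re-indexing in $\Delta$ neighbors' lists, which is why $N^+/N^-$ exist at all). You gesture at this (``become lighter'') but never specify the structure, and without it the update time is not polylogarithmic.

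Second, the expectation analysis does not ``transfer directly.'' The final step of Theorem~\ref{thm:main} multiplies the per-affected-vertex cost by $\E[|\affectedset|\mid \lambda]=O(\log n)$ (Lemma~\ref{lem:edgeA}), a \emph{conditional} bound proved only for edge updates, by reducing to an \emph{unconditional} vertex update and invoking Theorem~\ref{thm:vertexA}. Your reduction turns an edge update of $G$ into a vertex update of $L(G)$ whose rank $\pi(e)$ is precisely the quantity you must condition on to control the exploration radius $\min\{\Delta,\log n/\pi(e)\}$; the blaming argument behind Theorem~\ref{thm:vertexA} does not survive conditioning on the rank of the updated vertex (the rotation map sends permutations with $\pi(e)=\lambda$ to permutations with $\varphi(e)\neq\lambda$, so the double-counting collapses), and no conditional vertex-update analogue is proved in the paper. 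The paper sidesteps this entirely for matching: because every flipped edge lies in one of the two matchings, $\flippedset$ forms a single monotone path or cycle, which yields a \emph{with-high-probability} bound $|\flippedset|=O(\log n)$ via the parallel round complexity of random-order LFMM (Claim~\ref{cl:boundFMM}); a w.h.p.\ bound is insensitive to conditioning, so the correlation problem disappears and none of the Section~\ref{sec:analysis} machinery is needed. This path/cycle observation is the actual content of the ``simpler variant,'' and it is absent from your proposal; without it (or a new conditional vertex-update lemma) the expected-time bound is not established.
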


This also leads to the following worst-case guarantee when used as a black-box \cite[Theorem~1.1]{DBLP:conf/soda/BernsteinFH19}.

\begin{corollary}\label{cor:matching}
	There is a data structure to maintain a maximal matching against an oblivious adversary in a fully-dynamic graph that w.h.p. has $O(\log^2 \Delta \cdot \log^4 n)$ worst-case update-time.
\end{corollary}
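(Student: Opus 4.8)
The plan is to derive Corollary~\ref{cor:matching} directly from Theorem~\ref{thm:MM} by a black-box application of the worst-case reduction of Bernstein, Forster, and Henzinger~\cite[Theorem~1.1]{DBLP:conf/soda/BernsteinFH19}, mirroring exactly the argument already used to turn Theorem~\ref{thm:main} into its worst-case corollary. That reduction takes any fully-dynamic data structure whose \emph{expected} update time is $T$ \emph{per update}---crucially, not merely amortized---and produces a data structure for the same problem whose update time is $O(T \log^2 n)$ in the worst case with high probability, while still reporting a valid solution after every update.

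Concretely, I would first invoke Theorem~\ref{thm:MM}, which gives a data structure maintaining a random-order lexicographically first maximal matching whose expected cost on \emph{each} update is $O(\log^2 \Delta \cdot \log^2 n)$; the fact that this bound is genuinely per-update rather than amortized is precisely what makes the black box applicable. Feeding this into \cite[Theorem~1.1]{DBLP:conf/soda/BernsteinFH19} with $T = O(\log^2 \Delta \cdot \log^2 n)$ yields worst-case update time $O(T \log^2 n) = O(\log^2 \Delta \cdot \log^4 n)$ with high probability, with the maintained matching still maximal after every edge insertion or deletion. Exactly as in the MIS case, the transformation preserves correctness and the worst-case time bound but not the $O(1)$-expected adjustment guarantee of Theorem~\ref{thm:MM}; since the corollary does not claim any bound on adjustment complexity, this loss is immaterial here.

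The only hypotheses of the black box that need checking are that a correct solution can be extracted at all times and that the underlying algorithm handles arbitrary edge insertions and deletions against an oblivious adversary---both of which hold for the data structure of Theorem~\ref{thm:MM}. I do not expect any genuine obstacle: the entire content of the corollary already resides in Theorem~\ref{thm:MM}, and the passage from expected-per-update to worst-case time is an off-the-shelf reduction. The one mild subtlety worth stating explicitly in the write-up is the non-amortized nature of the bound in Theorem~\ref{thm:MM}, since an amortized bound alone would not license the application of \cite[Theorem~1.1]{DBLP:conf/soda/BernsteinFH19}.
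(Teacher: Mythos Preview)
Your proposal is correct and matches the paper's approach exactly: the paper derives this corollary as a one-line black-box application of \cite[Theorem~1.1]{DBLP:conf/soda/BernsteinFH19} to Theorem~\ref{thm:MM}, just as it did for the MIS corollary. Your emphasis on the per-update (non-amortized) nature of the expected bound and the loss of the adjustment-complexity guarantee are precisely the points the paper flags in the analogous MIS derivation.
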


We emphasize that if one allows amortization, then one can get much more efficient algorithms for MM due to the seminal works of  Baswana, Gupta, and Sen \cite{DBLP:journals/siamcomp/BaswanaGS18} and Solomon \cite{DBLP:conf/focs/Solomon16}. However, our approach of maintaining random-order LFMM significantly deviates from the prior works on MM in dynamic graphs. We believe this is an important feature on its own and may find further applications.
\section{Our Techniques}\label{sec:highlevel}

\newcommand{\msnote}[1]{{\color{red} Madhu: #1}}

As pointed out earlier, our main contribution is to show that it is actually possible to maintain the lexicographically first MIS (LFMIS), under a random ordering of the vertices, at an expected polylogarithmic cost per update. In this section we attempt to explain some of the barriers and how our work overcomes them.


The first hurdle behind maintaining the LFMIS is that it may change a lot under updates. But it is also well-known~\cite[Theorem~1]{Censor-HillelHK16} that for a random ordering, the expected alteration to the LFMIS after the insertion or deletion of a single edge is $O(1)$. This already shows that maintaining random order LFMIS is sufficient to get an algorithm with $O(1)$ expected adjustments per update. However, it is not clear how to {\em detect} these changes and maintain the LFMIS efficiently: The natural algorithm to do so would do a breadth-first-search (BFS) from the endpoints of the edge being updated, but even exploring the neighborhood of a single vertex of degree $\Delta$ might require $\Omega(\Delta)$ time which is prohibitively expensive for general graphs where $\Delta$ can be as large as $\Omega(n)$.\footnote{This is precisely the $\Omega(\Delta)$ barrier mentioned in Section 6 of \cite{Censor-HillelHK16}.}


Our first idea is to maintain not just the LFMIS, but also the ``eliminator'' of every vertex $v$ in the graph. Briefly, given a ranking $\pi:V \to [0,1]$, the eliminator of a vertex $v$ in a graph $G$ under $\pi$ is its neighbor in $G$ of the lowest rank that belongs to the LFMIS. (If $v$ is in the LFMIS, then its eliminator is defined to be itself.) Maintaining the eliminators only seems to complicate our task further: (1) Even if the MIS changes by a little, it is conceivable that the eliminators of many more vertices might change. (2) It is still unclear how to find the set of vertices whose eliminators have changed in $o(\Delta)$ time.

For problem (1) we extend the classical analysis \cite{Censor-HillelHK16,DBLP:conf/stoc/YoshidaYI09}, which showed that the MIS changes only by a little after each update, to show that the eliminators are also extremely robust under updates. We stress that this extension is not simple and requires many new ideas. Overall, we get the following  guarantee which may be of independent interest. (It is crucial for our analysis that we prove this bound on {\em vertex} updates---we will discuss this towards the end of this section.)

\restatedesc{Theorem~\ref{thm:vertexA}}{informal---see page \pageref{thm:vertexA} for the formal statement}{For any arbitrary vertex addition or deletion, the expected number of vertices whose eliminator changes is $O(\log n)$.}


We now turn to problem (2), i.e., the challenge of maintaining information such as membership in the MIS and eliminators of vertices. Consider an edge update $(a,b)$ with $\pi(a) < \pi(b)$ and suppose that this changes $b$'s MIS-status. A priori, this seems to require exploring every neighbor of $b$ (at the very least) and checking to see if their status or eliminator changes. But a quick examination reveals we only need to explore those neighbors $u$ of $b$ whose eliminators have rank larger than $\pi(a)$. (Vertices with rank less than $\pi(a)$ don't change their membership in the MIS, and so vertices with eliminators of rank less than $\pi(a)$ don't change their eliminator.) To help this prune our exploration space, it would make sense to store all neighbors of $b$ (and of every vertex for that matter) in a search tree indexed by the rank of their eliminator and indeed this is an idea we pursue. However maintaining every neighbor of $b$ indexed by its eliminator-rank leads to new maintenance problems: Up to $\Delta$ trees may need to be updated when $b$ changes its eliminator-rank! We overcome this barrier with the following solution (which is essentially our final solution): We only maintain the neighbors of {\em low-rank} in a search tree indexed by eliminator-ranks and maintain the neighbors of high-rank in a more static tree indexed by just their ID (i.e., their name). 

Specifically, for each vertex $v$, we partition its neighborhood (dynamically) in two parts, $N^-(v)$ and $N^+(v)$ as described next.
The set $N^-(v)$ includes neighbors of $v$ whose eliminators have smaller rank than the eliminator of $v$. Each vertex $u \in N^-(v)$ is indexed by the rank of its (dynamically changing) eliminator. The set $N^+(v)$ includes the rest of neighbors of $v$ and every vertex $u \in N^+(v)$ is indexed by its (static) ID. Armed with these data structures it turns out one can implement updates in expected time $\polylog n$ per {\em affected vertex}, i.e., those whose eliminator has changed. (See Lemma~\ref{lem:updatealg}). A key insight behind this analysis is that vertices whose eliminators have small rank are not likely to change their eliminators under many updates, allowing us to keep the cost of reindexing $N^-(v)$ small. Another insight is that the maximum degree in the graph induced on vertices whose eliminators have high ranks is small. Therefore, set $N^+(v)$ will be typically small and the fact that it is not indexed by the rank of its members' eliminators is not troublesome.

Theorem~\ref{thm:vertexA} and  Lemma~\ref{lem:updatealg} almost settle our analysis, with the former asserting that the expected number of affected nodes is small, and the latter asserting that the expected time to maintain the data structures, per affected node, is small. One final analytic hurdle emerges at this stage though: These two events are not a priori independent and so the product of the expectations is not an upper bound on the expected running time of an update! To overcome this, we introduce another twist in our analysis. Recall that Theorem~\ref{thm:vertexA} holds even if an entire node is updated (say deleted along with all its edges). When applied to an edge update $(a, b)$, this gives an upper bound of $O(\log n)$ on the expected number of affected vertices even if we condition on any value of $\pi(a)$. (See Lemma~\ref{lem:edgeA}.) The reason, roughly speaking, is that once we condition on $\pi(a)$, the edge update $(a, b)$ can now be regarded as insertion or deletion of vertex $b$. 

Overall, we use the randomization in $\pi(a)$ to bound the expected time per affected vertex by $\polylog n$ and, conditioned on this, still get an $O(\log n)$ upper bound on the expected number of affected vertices due to Lemma~\ref{lem:edgeA}. This allows us to prove an expected $\polylog n$ upper bound on the total running time (see Section~\ref{sec:wrapup}), thus concluding our analysis.

\section{Preliminaries}\label{sec:preliminaries}
\newcommand{\lfmis}[1]{\ensuremath{\mathsf{LFMIS}(#1)}}
\newcommand{\lfmm}[1]{\ensuremath{\mathsf{LFMM}(#1)}}

\newcommand{\eliminator}[3]{\ensuremath{\elim_{#1, #2}(#3)}}
\newcommand{\seliminator}[1]{\ensuremath{\elim(#1)}}

In this section, we formally define {\em lexicographically first MIS} and {\em MM}, mention some of their known properties, and define the notion of {\em eliminators} which are all important for the rest of paper.

\smparagraph{Notation.} For any positive integer $k$, we use $[k]$ to denote set $\{1, \ldots, k\}$. For a graph $G=(V, E)$ and a vertex $v \in V$, we use $N_G(v)$ or, in short, $N(v)$ to denote the set of neighbors of $v$ in $G$ and use $\Gamma(v)$ to denote the set $N(v) \cup \{ v\}$. This notation also extends to any subset  $U$ of $V$ where we use $N(U)$ and $\Gamma(U)$ to respectively denote $\cup_{v \in U} N(v)$ and $\cup_{v \in U} \Gamma(v)$. We also use $\deg_G(v)$ to denote the {\em degree} of vertex $v$, i.e., $\deg_G(v) = |N_G(v)|$.

\smparagraph{Lexicographically First MIS.} The {\em lexicographically first maximal independent set} (LFMIS) of a graph $G=(V, E)$ according to a ranking $\pi: V \to [0, 1]$ over the vertices in $V$ is obtained as follows. Initially, every vertex in $V$ is {\em alive}. We iteratively take the alive vertex $v$ with the minimum rank $\pi(v)$, add $v$ to the MIS, and {\em kill} $v$ and all of its alive neighbors. We use \lfmis{G, \pi} to refer to the subset of vertices that join this MIS. For each vertex $v$, we define the {\em eliminator} of $v$, denoted by $\eliminator{G}{\pi}{v}$, as the (unique) vertex that kills $v$. More precisely, \eliminator{G}{\pi}{v} is the lowest-rank vertex in $(N(v) \cup \{v \}) \cap \lfmis{G, \pi}$. Note that if $v$ is in the MIS, we have $\eliminator{G}{\pi}{v} = v$; otherwise, $\eliminator{G}{\pi}{v} \not= v$ and $\pi(\eliminator{G}{\pi}{v}) < \pi(v)$. When no confusion is possible, we may write $\seliminator{v}$ instead of $\eliminator{G}{\pi}{v}$ for brevity.

\smparagraph{Lexicographically First MM.} All definitions above can be extended to MM as well if we consider LFMIS over the line-graph. The resulting {\em lexicographically first MM} of a graph $G=(V, E)$, which we denote by \lfmm{G, \pi} where $\pi: E \to [0, 1]$ is a ranking over the {\em edges} of $G$ would be as follows. Initially, all the edges are alive. We iteratively pick the alive edge $e$ with the minimum rank $\pi(e)$, add it to the matching, and kill $e$ and all the alive incident edges to $e$. The eliminator $\eliminator{G}{\pi}{e}$ of an edge $e$ in this algorithm can similarly be defined as the lowest-rank edge incident to $e$ (including $e$ itself) that is in the maximal matching \lfmm{G, \pi}.

\smparagraph{LFMIS and LFMM over Random Ranks.}
The two algorithms above are particularly useful when the ranking $\pi$ maps to a random permutation, i.e., each entry of $\pi$ is a real chosen uniformly at random from $[0, 1]$. It is not hard to see that choosing $\Theta(\log n)$ bit reals is enough to guarantee no two entries assume the same rank w.h.p. From now on, when we use the term ``random ranking'' $\pi$, we indeed assume that each entry of $\pi$ has $\Theta(\log n)$ bits.

One useful property of LFMIS over random rankings is that once we, roughly speaking, process $p$ fraction of the vertices with the lowest ranks and remove their MIS nodes and their neighbors, the maximum degree in the remaining graph drops to $O(p^{-1} \cdot \log n)$ w.h.p. The same also holds for LFMM. This property is very well-known \cite{DBLP:conf/spaa/BlellochFS12,DBLP:conf/icml/AhnCGMW15,DBLP:conf/podc/GhaffariGKMR18,DBLP:journals/corr/abs-1901-03744,assadisoda}; when incorporating the definition of eliminators, it would read as follows:

\newcommand{\rlfmis}[2]{\ensuremath{\mathsf{LFMIS}_{#1}(#2)}}
\newcommand{\rlfmm}[2]{\ensuremath{\mathsf{LFMM}_{#1}(#2)}}

\newcommand{\sparseMIS}[0]{Consider a graph $G=(V, E)$, let $\pi: V \to [0, 1]$ be a random ranking, and for any real $p \in [0, 1]$, define $V_p$ as the subset of $V$ including any vertex $v$ with $\pi(\eliminator{G}{\pi}{v}) > p$. W.h.p., for all $O(\log n)$ bit values of $p \in [0, 1]$, the maximum degree in graph $G[V_p]$ is $O(p^{-1} \cdot \log n)$.}
\begin{proposition}\label{prop:sparseMIS}
	\sparseMIS{}
\end{proposition}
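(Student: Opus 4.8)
The plan is to show that a vertex $v \in V_p$ — i.e., a vertex whose eliminator has rank exceeding $p$ — survives the greedy process restricted to the low-rank vertices, and then bound how many such surviving neighbors any fixed vertex can have. The key observation is that $v \in V_p$ is equivalent to saying that $v$ is still alive (in the LFMIS process on $G$) at the moment we have finished processing all vertices of rank at most $p$: indeed, if $v$ were killed by some vertex $u$ of rank $\le p$, then $u \in \lfmis{G,\pi}$ and $u \in N(v)\cup\{v\}$, so $\pi(\seliminator{v}) \le \pi(u) \le p$, contradicting $v \in V_p$. Conversely if $v$ is alive after processing rank $\le p$, its eliminator is added later and hence has rank $> p$. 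So $G[V_p]$ is exactly the graph induced on the vertices that remain alive after the "first phase" of the greedy LFMIS.

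Next I would fix a vertex $v$ and bound $\deg_{G[V_p]}(v)$, i.e., the number of neighbors $u$ of $v$ with $\pi(u) > p$ (the condition $\pi(u) > p$ is automatic since $u \in V_p$ implies $\pi(u) \ge \pi(\seliminator{u}) > p$) that are \emph{also} still alive after processing the prefix of rank $\le p$. For a neighbor $u$ of $v$ to be in $V_p$, it is necessary that none of $u$'s neighbors of rank $\le p$ lies in the LFMIS. The standard argument: consider revealing the ranks in increasing order. Fix any set $S$ of $k := \lceil c\, p^{-1}\log n\rceil$ distinct neighbors of $v$; I claim that, with probability $\ge 1 - n^{-\Omega(c)}$, at least one vertex among $S \cup \{v\}$ and its neighborhood... — more cleanly, the cleanest route is: condition on which vertices have rank $\le p$ and on their relative order; the set alive after the prefix is a function of that; a fixed vertex $u$ with at least one neighbor of rank $\le p$ that joins the MIS gets killed. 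The probability that a fixed neighbor $u$ of $v$ has \emph{no} neighbor at all of rank $\le p$ is $(1-p)^{\deg(u)}$, which is the case we must worry about (if it has such a neighbor, whether it survives depends on further structure, but the dominant obstruction to surviving is simply having low-rank neighbors). Actually the right quantitative statement is the classical one: run the greedy process; after it has processed all vertices of rank $\le p$, the number of remaining neighbors of any fixed vertex is $O(p^{-1}\log n)$ w.h.p. I would prove this by the peeling/witness argument: if $v$ has $k$ neighbors surviving, consider them in rank order $u_1, u_2, \dots$; for $v$ itself to survive (or for these to survive), we can extract an independent-ish set among them and argue each independently would have been selected with decent probability within the prefix.

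Concretely, I would argue as follows. Suppose for contradiction $\deg_{G[V_p]}(v) \ge k$ with $k = \Theta(p^{-1}\log n)$. Process the vertices greedily by increasing rank, but think of it as: repeatedly, among all currently-alive vertices, the one of smallest rank joins the MIS and kills its alive neighbors. Let $u_1$ be the alive neighbor of $v$ of smallest rank at the start; when the process reaches $u_1$ (if $v$ and $u_1$ are both still alive), one of them — the smaller — joins the MIS, killing the other's... hmm, this kills $v$ too. The clean formulation avoiding this: condition on $v$ being alive after the prefix and bound the neighbors. Look at the $k$ surviving neighbors $u_1, \dots, u_k$ of $v$. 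Greedily extract a maximal independent subset $I = \{u_{i_1}, \dots, u_{i_\ell}\}$ among them in $G$; since $G[\{u_1,\dots,u_k\}]$ has the property that every vertex in it is in $V_p$ (so by Proposition-type reasoning... no). Simpler: among $u_1, \dots, u_k$, because each has at most... we don't have a degree bound. OK — the honest classical proof: For each of the $k$ neighbors $u_j$, the event "$u_j$ is alive after processing prefix $\le p$" implies in particular that $u_j$ is not killed, which implies that among $\{u_j\} \cup N(u_j)$, the minimum rank vertex does not have rank $\le p$ OR is not in the MIS. The dominant term: $\Pr[\text{all of } \Gamma(u_j) \text{ has rank} > p] = (1-p)^{|\Gamma(u_j)|}$. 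If $\deg(u_j) \ge p^{-1}\log n$ this is $\le n^{-1}$, so w.h.p. every surviving $u_j$ has $\deg(u_j) < p^{-1}\log n$ — but this bounds degrees in the surviving graph, not the count. To bound the count: among $u_1, \dots, u_k$, pick a maximal independent set $I$ in $G[\{u_1,\dots,u_k\}]$; since all these vertices have degree $< p^{-1}\log n$ within this set (w.h.p., as just shown), $|I| \ge k / (p^{-1}\log n) = \Theta(1) \cdot (\text{large})$, in particular $|I| \ge 2$ suffices... I need $|I|$ large enough that the probability all of $I$ survives is tiny. For an independent set $I$, the events are still not independent, but: condition on the ranks of $V \setminus \Gamma(I)$; then for each $u \in I$, independently, $u$ joins the MIS if it has the minimum rank in its closed neighborhood restricted to... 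This needs care — and \textbf{this is the main obstacle}: handling the dependencies to turn "surviving requires roughly $p^{-1}$ low-rank-free neighborhood" into a high-probability bound on the \emph{number} of surviving neighbors, uniformly over all $n$ vertices $v$ and all $n^{O(1)}$ choices of $p$.

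Finally, I would take a union bound over all vertices $v$ and over all $O(\log n)$-bit values $p \in [0,1]$ (there are $n^{O(1)}$ of them), losing only a constant factor in the exponent $c$, to conclude that w.h.p. the maximum degree in $G[V_p]$ is $O(p^{-1}\log n)$ simultaneously for all such $p$. I would also remark that since this is a known statement (the excerpt cites \cite{DBLP:conf/spaa/BlellochFS12,DBLP:conf/icml/AhnCGMW15,DBLP:conf/podc/GhaffariGKMR18,assadisoda}), the proof is essentially a re-derivation with the eliminator notation substituted in, the only genuinely new content being the elementary equivalence "$v \in V_p \iff v$ alive after the rank-$\le p$ prefix" established in the first step.
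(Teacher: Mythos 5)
There is a genuine gap: you set up the right reduction (the equivalence $v \in V_p \iff v$ is alive after the greedy process has consumed the rank-$\le p$ prefix, plus the final union bound over the $n^{O(1)}$ pairs $(v,p)$ are both exactly what the paper does), but the core probabilistic estimate --- that a fixed $v$ has at most $O(p^{-1}\log n)$ alive neighbors except with probability $n^{-\Omega(1)}$ --- is never actually proved. You try an independent-set-extraction route and a conditioning route, and you explicitly flag that you cannot handle the dependencies; that flagged obstacle is precisely the content of the proposition, so the proposal as written does not constitute a proof.

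The missing idea in the paper is a deferred-decisions reformulation of the prefix process. Observe that restricting the LFMIS to vertices of rank $\le p$ is distributionally identical to: sample each vertex into a set $S$ independently with probability $p$, take a uniformly random permutation of $S$, and run greedy LFMIS on $G[S]$. This in turn is realized by the following sequential process: scan all of $V$ in a uniformly random order $\psi$; when a vertex $u$ is reached, if it already has a neighbor in the growing independent set $I$ it is \emph{irrelevant} and discarded, and otherwise it is \emph{relevant} and flips a fresh independent Bernoulli$(p)$ coin to decide whether it joins $I$. The point of this reformulation is that the coin of a relevant vertex is independent of everything revealed so far. Now every neighbor $u$ of $v$ that survives into $G[V_p]$ must satisfy $u \notin \Gamma(I)$, which forces $u$ to have been relevant (irrelevant vertices lie in $\Gamma(I)$) \emph{and} unlucky (a lucky relevant vertex joins $I$). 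Hence the event $\deg_{G[V_p]}(v) \ge p^{-1}\beta$ requires the first $p^{-1}\beta$ relevant neighbors of $v$ encountered to all come up unlucky, which happens with probability at most $(1-p)^{\beta/p} \le e^{-\beta}$; taking $\beta = \Theta(\log n)$ and union bounding finishes the proof. This one-step independence argument replaces, and renders unnecessary, both your attempted maximal-independent-subset extraction (for which you correctly note you lack a degree bound) and your conditioning on the ranks of $V \setminus \Gamma(I)$.
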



\newcommand{\sparseMM}[0]{Consider a graph $G=(V, E)$, let $\pi: E \to [0, 1]$ be a random ranking, and for any real $p \in [0, 1]$, define $E_p$ to be the subset of $E$ including any edge $e$ with $\pi(\eliminator{G}{\pi}{e}) > p$.  W.h.p., for all $O(\log n)$ bit values of $p \in [0, 1]$, every vertex has $O(p^{-1} \cdot \log n)$ incident edges in $E_p$.
}
\begin{proposition}\label{prop:sparseMM}
	\sparseMM{}
\end{proposition}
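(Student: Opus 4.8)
The plan is to deduce this from Proposition~\ref{prop:sparseMIS} by passing to the line graph, since LFMM on $G$ under an edge-ranking $\pi$ is literally LFMIS on the line graph $L(G)$ under $\pi$ viewed as a vertex-ranking. Concretely, let $H = L(G)$ be the line graph: the vertices of $H$ are the edges of $G$, and two such vertices are adjacent in $H$ iff the corresponding edges of $G$ share an endpoint. The iterative peeling process defining $\lfmm{G,\pi}$ is exactly the peeling process defining $\lfmis{H,\pi}$, so for every edge $e$ of $G$ we have $\eliminator{G}{\pi}{e} = \eliminator{H}{\pi}{e}$, where on the right side $e$ is thought of as a vertex of $H$. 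In particular, the set $E_p$ defined in the proposition coincides with the set $V_p$ of Proposition~\ref{prop:sparseMIS} instantiated with the graph $H$.

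Next I would apply Proposition~\ref{prop:sparseMIS} to $H$. It gives that w.h.p., for all $O(\log n)$-bit values of $p$, the maximum degree of $H[V_p] = L(G)[E_p]$ is $O(p^{-1}\log n)$. (One should note that $H$ has at most $\binom{n}{2}$ vertices, so ``$O(\log n)$ bits'' and ``w.h.p.\ in $n$'' mean the same thing for $H$ as for $G$; the polylog factors and the failure probability are unaffected.) Now translate the degree bound on $L(G)[E_p]$ back to a statement about $G$: the degree of a vertex $e = (x,y)$ in $L(G)[E_p]$ is the number of edges in $E_p$ other than $e$ that are incident to $x$ or to $y$. Hence for every vertex $u$ of $G$, the number of edges of $E_p$ incident to $u$ is at most one more than the $H[E_p]$-degree of any single edge incident to $u$ that lies in $E_p$ (and is $0$ or $1$ if no such edge exists), so it is $O(p^{-1}\log n)$. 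This is exactly the claimed conclusion.

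The one place that needs a small argument rather than a one-line translation is the last step: the $L(G)[E_p]$-degree bound is a bound on pairs $(e, e')$ of incident edges both in $E_p$, whereas we want a bound on the number of edges of $E_p$ at a fixed vertex $u$. If $u$ has $d$ incident edges in $E_p$, then each of them has $L(G)[E_p]$-degree at least $d-1$ (all the others at $u$ are neighbors in the line graph), so $d - 1 \le O(p^{-1}\log n)$, giving $d = O(p^{-1}\log n)$. I expect this to be the ``main obstacle'' only in the sense that it is the only non-formal step; it is genuinely routine. An alternative, self-contained route that avoids the line graph entirely would be to redo the proof of Proposition~\ref{prop:sparseMIS} directly for matchings: fix $p$ and a vertex $u$; an edge $e$ incident to $u$ survives into $E_p$ only if no edge incident to $e$ has rank $\le p$ and is in the matching, and in particular only if, among the (at most $2\Delta$) edges incident to $u$ or to the other endpoint of $e$, the minimum-rank one that would be picked has rank $> p$; a union bound over the at most $\binom n2$ candidate edges at $u$ together with the observation that if $k$ edges at $u$ all had ranks exceeding the matched neighborhood then an easy Chernoff/union-bound estimate forces $k = O(p^{-1}\log n)$ w.h.p., and finally a union bound over all $n$ vertices $u$ and all $O(\log n)$-bit values $p$. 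But the line-graph reduction is cleaner, so that is the route I would present.
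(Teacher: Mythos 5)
Your proposal is correct and takes essentially the same route as the paper: apply Proposition~\ref{prop:sparseMIS} to the line graph of $G$, using the equivalence of eliminators under the MM-to-MIS correspondence, and then translate the line-graph degree bound back to a per-vertex bound on incident edges in $E_p$. The small final translation step you single out (bounding the number of $E_p$-edges at a vertex via the line-graph degree of any one of them) is exactly how the paper concludes as well.
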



For the minor differences between the statements of these propositions and those in the literature, we prove them in Appendix~\ref{sec:sparsificationproof}. We emphasize that the changes are straightforward and we claim no novelty on this part.

%
%

%

\section{Fully Dynamic MIS: Data Structures \& The Algorithm}\label{sec:algorithm}
In this section, we present the data structures and the algorithm required for maintaining LFMIS after each update.

We fix a random ranking $\pi$ in the pre-processing step and maintain \lfmis{G, \pi} after each update. Throughout the rest of this section, we focus on the data structures required for maintaining \lfmis{G, \pi} and the algorithm we use to update them. Fix an arbitrary $t$ and suppose that we have to address edge update number $t$. We use ``time $t$'' to refer to the moment {\em after} the first $t$ edge updates. Moreover, we use $G_t = (V, E_t)$ to denote the resulting graph at time $t$.  The following definitions are crucial both for the algorithm's description and its analysis.

\begin{highlighttechnicalwhite}
	\begin{itemize}[leftmargin=15pt, itemsep=0pt]
		\item $\affectedset := \{v \mid \eliminator{G_{t-1}}{\pi}{v} \not= \eliminator{G_t}{\pi}{v} \}$: The set of vertices whose eliminator changes after the update; we call these the {\em affected} vertices.
		\item $\flippedset$: The set of vertices $w$ that belong to exactly one of \lfmis{G_t, \pi} or \lfmis{G_{t-1}, \pi}. We call these the {\em flipped} vertices. Note that $\flippedset \subseteq \affectedset$.
	\end{itemize}
\end{highlighttechnicalwhite}

Our main result in this section is the following algorithm.

\newcommand{\updatealg}[0]{There is an algorithm to update $\lfmis{G, \pi}$ and the data structures required for it after insertion or deletion of any edge $e=(a, b)$ in $O\left(|\affectedset|\min\{\Delta, \frac{\log n}{\min\{ \pi(a), \pi(b) \}} \}\log \Delta\right)$ time w.h.p.
}

\begin{lemma}\label{lem:updatealg}
\updatealg
\end{lemma}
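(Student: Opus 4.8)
The goal is an update algorithm whose running time is, up to a $\log\Delta$ factor, $|\affectedset|$ times $\min\{\Delta,\, \log n/\min\{\pi(a),\pi(b)\}\}$. The plan is to (1) describe the data structures maintained for each vertex, (2) describe the update procedure itself as a priority-driven re-simulation of the LFMIS restricted to the relevant part of the graph, and (3) charge the work to affected vertices and bound the per-affected-vertex cost.

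\emph{Data structures.} For each vertex $v$ I would store $\seliminator{v}$ (hence $\pi(\seliminator{v})$), its MIS-status, and the two adjacency structures described in Section~\ref{sec:highlevel}: a balanced search tree holding $N^-(v)$ keyed by $\pi(\seliminator{u})$ for $u\in N^-(v)$, and a balanced search tree holding $N^+(v)$ keyed by $\ID(u)$. Recall $N^-(v)=\{u\in N(v):\pi(\seliminator{u})<\pi(\seliminator{v})\}$ and $N^+(v)=N(v)\setminus N^-(v)$. Each of the $O(\Delta)$-size trees supports insert/delete/search/successor in $O(\log\Delta)$ time. A key structural observation to record first: because $N^+(v)$ consists of neighbors whose eliminator-rank is at least $\pi(\seliminator{v})$, the vertices in $N^+(v)$ all survive the processing of ranks up to $\pi(\seliminator{v})$ in the LFMIS; by Proposition~\ref{prop:sparseMIS} (applied with $p=\pi(\seliminator{v})$), w.h.p. $|N^+(v)|=O(\pi(\seliminator{v})^{-1}\log n)$. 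This is what keeps the ID-indexed part cheap to scan.

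\emph{The update procedure.} On insertion/deletion of $e=(a,b)$ with $\pi(a)\le\pi(b)$, first update the raw adjacency (moving $b$ into the correct one of $N^-(a)/N^+(a)$ and symmetrically for $a$), which is $O(\log\Delta)$ work. Now re-simulate the LFMIS, but only on the ``dangerous'' subgraph: no vertex $w$ with $\pi(\seliminator{w})<\pi(a)$ can change its eliminator, since the prefix of the LFMIS process up to rank $\pi(a)$ is unaffected by touching an edge both of whose endpoints have rank $\ge\pi(a)$ (note $\pi(b)\ge\pi(a)$ as well). So maintain a priority queue of ``candidate'' vertices to re-examine, ordered by rank, seeded with $\{a,b\}$ (more precisely with the endpoint(s) whose eliminator might change). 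When a vertex $v$ is popped: recompute its eliminator from the lowest-rank MIS-neighbor — here is the crucial point, we only need to inspect neighbors $u$ with $\pi(\seliminator{u})\le$ (old or new) $\pi(\seliminator{v})$, and those are exactly the ones reachable via the $N^-$-trees of the relevant vertices (plus possibly a scan of $N^+$, which is small by the observation above). If $v$'s eliminator changed, then $v$ moved in the $N^\pm$-partition of each of its $O(\Delta)$ neighbors — BUT only neighbors $u$ for which the inequality defining membership flips are affected, and we push onto the queue only those neighbors whose eliminator might now change, reindexing $v$ in the trees of the others as we go. The running time of processing one popped-and-affected vertex $v$ is $O\big(\min\{\Delta,\,\pi(\seliminator{v})^{-1}\log n\}\cdot\log\Delta\big)$: we touch at most $\deg(v)\le\Delta$ neighbors, but also at most $|N^-(v)|+|N^+(v)|$ where the re-simulation only ever involves vertices of eliminator-rank $\ge\pi(a)$, so the relevant neighbor-sets have size $O(\pi(a)^{-1}\log n)$ w.h.p. by Proposition~\ref{prop:sparseMIS}.

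\emph{Charging.} Every vertex popped from the queue that turns out \emph{not} to be affected costs only $O(\log\Delta)$ (a single eliminator recomputation confirming no change, via a successor query) and can be charged to the affected neighbor that enqueued it, of which there are $O(\deg)$... — here I would be careful: to avoid an extra $\Delta$ blowup, enqueue a vertex $u$ only when $v$ becomes an MIS vertex or leaves the MIS and $u$'s eliminator genuinely points at or could point at $v$, so the number of non-affected pops is $O(|\flippedset|\cdot\min\{\Delta,\pi(a)^{-1}\log n\})=O(|\affectedset|\cdot\min\{\Delta,\pi(a)^{-1}\log n\})$. Summing the per-vertex bound over the $|\affectedset|$ affected vertices (each of which has eliminator-rank $\ge\pi(a)=\min\{\pi(a),\pi(b)\}$ in the new or old graph) gives the claimed
\[
O\!\left(|\affectedset|\cdot\min\Big\{\Delta,\ \tfrac{\log n}{\min\{\pi(a),\pi(b)\}}\Big\}\cdot\log\Delta\right)
\]
bound, w.h.p., the ``w.h.p.'' coming entirely from the invocation of Proposition~\ref{prop:sparseMIS}.

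\emph{Main obstacle.} The delicate part is arguing that the re-simulation explores \emph{only} affected vertices (plus $O(1)$ confirmatory neighbors per relevant flip), and that reindexing a vertex $v$ in its neighbors' trees when $\seliminator{v}$ changes does not itself cost $\Theta(\Delta\log\Delta)$ unconditionally — we must exploit that $v$ is only moved in the $N^-$-tree of neighbors $u$ with $\pi(\seliminator{u})$ on the "high" side, and a vertex can only be reindexed in trees it belongs to the $N^-$-part of, so the total reindexing cost for a single affected $v$ is again $O(|N^-(v)\cup N^+(v)|\cdot\log\Delta) = O(\min\{\Delta,\pi(a)^{-1}\log n\}\log\Delta)$ w.h.p. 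Getting the bookkeeping of the $N^-/N^+$ reclassification exactly right — in particular that every vertex whose classification genuinely changes is either affected itself or adjacent to a flipped vertex — is where the real care is needed, and I would isolate it as an invariant proved by induction over the priority-queue processing order.
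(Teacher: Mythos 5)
Your proposal matches the paper's proof in all essentials: the same pair of adjacency structures ($N^-$ keyed by eliminator-rank, $N^+$ keyed by ID), the same rank-ordered priority-queue propagation seeded at the updated edge, the same invocation of Proposition~\ref{prop:sparseMIS} with $p=\pi(a)$ to bound both the relevant neighborhood $\{u\in N(v): k_{t-1}(u)\ge\pi(a)\}$ and $|N^+(v)|$ by $O(\min\{\Delta,\log n/\pi(a)\})$ w.h.p., and the same double-counting charge of non-affected pops to the flipped vertices that enqueued them. The one place your sketch is looser than the paper --- deciding affectedness of a popped vertex, which the paper does in $O(|\P{v}|)$ time by recording the flipped lower-rank neighbors that enqueued $v$ rather than by a single successor query --- is exactly absorbed by the charging argument you already give, so the approach is sound.
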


Note that the bound on the update-time in the statement above is parametrized by two random variables $|\affectedset|$ and $\min\{ \pi(a), \pi(b) \}$ of the ranking $\pi$. To provide a concrete bound on the update-time, we need to analyze how these two random variables are related. We prove the necessary tools for this analysis in Section~\ref{sec:analysis} and finally prove that this quantity is in fact $\polylog n$ in Section~\ref{sec:wrapup}. 

In the rest of this section, we only focus on proving Lemma~\ref{lem:updatealg}. We describe the data structures in Section~\ref{sec:datastructures}, describe the algorithm in Section~\ref{sec:algalg}, and prove the correctness and running time of the algorithm in Section~\ref{sec:parametrizedrunningtime}.

\subsection{Data Structures}\label{sec:datastructures}

As described before, our algorithm starts with a pre-processing step where we choose a random ranking $\pi$ over the $n$ fixed vertices in $V$, i.e., as discussed in Section~\ref{sec:preliminaries}, we pick a $\Theta(\log n)$ bit real $\pi(v) \in [0, 1]$ for each vertex $v$. The ranking $\pi$ will then be used to maintain \lfmis{G, \pi} after each update to graph $G$. To update this MIS  efficiently, we maintain the following data structures for each vertex $v \in V$.

\begin{highlighttechnical}
	\begin{itemize}[leftmargin=15pt]
		\item $m(v)$: A binary variable that is 1 if $v \in \lfmis{G, \pi}$ and 0 otherwise.
		\item $k(v)$: The rank of $v$'s eliminator, i.e., $k(v) = \pi(\eliminator{G}{\pi}{v})$. Note that $m(v)=1$ iff $k(v) = \pi(v)$.
		\item $N^-(v)$: The set of neighbors $u$ of $v$ where $k(u) \leq k(v)$. The set $N^-(v)$ is stored as a self-balancing binary search tree (BST) and each vertex $u$ in it is indexed by $k(u)$.
		\item $N^+(v)$: The set of neighbors $u$ of $v$ where $k(u) \geq k(v)$. The set $N^+(v)$ is also stored as a BST, but unlike $N^-(v)$, each member $u$ in $N^+(v)$ is indexed by its ID.
	\end{itemize}
\end{highlighttechnical}

It has to be noted that each vertex $u \in N^-(v)$ is indexed by $k(u)$, a property that may change after an edge update and thus we may need to re-order the vertices in $N^-(v)$. However, the vertices in $N^+(v)$ are simply indexed by their IDs which are static. Also, observe that:

\newcommand{\obsnplusnminus}[0]{For any two neighbors $u$ and $v$, $u \in N^+(v)$ if and only if $v \in N^-(u)$.}

\begin{observation}\label{obs:nplusnminus}
	\obsnplusnminus{}
\end{observation}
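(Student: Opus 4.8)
\textbf{Proof proposal for Observation~\ref{obs:nplusnminus}.}

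The plan is to unfold the definitions of $N^-$ and $N^+$ and observe that the condition ``$u \in N^+(v)$'' and the condition ``$v \in N^-(u)$'' are, after translation, the \emph{same} inequality between $k(u)$ and $k(v)$ together with adjacency of $u$ and $v$. So the argument is essentially a symmetry observation with one boundary case to resolve.

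First I would fix two vertices $u$ and $v$ that are neighbors (if they are not neighbors then neither set contains the other, and there is nothing to prove). Recall that $N^+(v) = \{w \in N(v) : k(w) \ge k(v)\}$ and $N^-(u) = \{w \in N(u) : k(w) \le k(u)\}$. Since $u \in N(v) \iff v \in N(u)$, the assertion $u \in N^+(v) \iff v \in N^-(u)$ reduces to showing that $k(u) \ge k(v)$ if and only if $k(v) \le k(u)$, which is of course trivially true. Hence forward and backward directions both follow: $u \in N^+(v)$ means $u \in N(v)$ and $k(u) \ge k(v)$, which gives $v \in N(u)$ and $k(v) \le k(u)$, i.e.\ $v \in N^-(u)$; the converse is identical with the roles of $u$ and $v$ swapped.

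The only point that needs a word of care is the boundary case $k(u) = k(v)$, since the definitions use non-strict inequalities on \emph{both} sides and so a vertex could in principle land in both $N^-$ and $N^+$ of a neighbor. I would note that this is harmless for the statement: when $k(u) = k(v)$ we have both $u \in N^+(v)$ and $u \in N^-(v)$, and symmetrically $v \in N^+(u)$ and $v \in N^-(u)$, so the biconditional $u \in N^+(v) \iff v \in N^-(u)$ still holds (both sides are true). In fact, because $\pi$ is injective w.h.p.\ (the $\Theta(\log n)$-bit ranking from Section~\ref{sec:preliminaries}), the equality $k(u) = k(v)$ only occurs when $\seliminator{u} = \seliminator{v}$, e.g.\ when $u$ and $v$ share the same MIS neighbor; one may even break this tie by a fixed convention if a strict partition is desired, but it is not needed here.

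I do not anticipate a real obstacle: this is a definitional observation, and the ``hard part'' — to the extent there is one — is merely being explicit that the non-strict inequalities make the two neighborhood sets overlap on ties rather than partition $N(v)$, and checking that the stated equivalence is nonetheless exact. The observation is used later to argue that updating $k(v)$ forces updates only to the $N^-$-trees of $v$'s neighbors in $N^+(v)$ (equivalently, to vertices that hold $v$ in their $N^-$), so it suffices to have the clean two-way correspondence established here.
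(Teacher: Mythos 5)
Your proof is correct and follows essentially the same route as the paper's: unfold the definitions of $N^+(v)$ and $N^-(u)$ and note that, for adjacent $u$ and $v$, both memberships reduce to the single inequality $k(u) \geq k(v)$. Your extra remark that ties $k(u)=k(v)$ place each vertex in both lists of the other is consistent with how the paper handles this case (e.g.\ in \textsc{UpdateAdjacencyLists}) and does not change the argument.
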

\begin{proof}
	If $u \in N^+(v)$, then $k(u) \geq k(v)$; since $N^-(u)$ includes every neighbor $w$ of $u$ with $k(w) \leq k(u)$, and $k(v) \leq k(u)$, we have $v \in N^-(u)$. Similarly, if $v \in N^-(u)$, then $k(v) \leq k(u)$; since $N^+(v)$ includes every neighbor $w$ of $v$ with $k(w) \geq k(v)$ and $k(u) \geq k(v)$, we have $u \in N^+(v)$.
\end{proof}

From now on, we use $m_t(v)$, $k_t(v)$, $N^-_t(v)$ and $N^+_t(v)$ to respectively refer to data structures $m(v)$, $k(v)$, $N^-(v)$ and $N^+(v)$ by time $t$. Before describing the algorithm, we describe the pre-processing step in more details.

\smparagraph{Pre-processing Step.}  Apart from choosing random ranking $\pi$, we initialize an array $\P{v} \gets \emptyset$ for every vertex $v$ in the pre-processing step. This array will later be used in the update algorithm in Section~\ref{sec:algalg}. Moreover, we construct LFMIS over the original graph $G_0=(V, E_0)$ via the trivial approach:  We iterate over the vertices according to $\pi$ to construct \lfmis{G_0, \pi} and set $m(v)$ for each vertex $v$. Then for each vertex $v$, we iterate over all of its neighbors to fill in $k(v)$, $N^+(v)$, and $N^-(v)$. We initially spend $O(n \log n)$ time for sorting the vertices, then for each vertex $v$, we spend $O(\deg(v))$ time to fill in its data structures. This process, overall, takes $O((|V|+|E_0|)\log n)$ time which is clearly optimal (up to a logarithmic factor) as it is required to read the input.

\subsection{The Algorithm}\label{sec:algalg}
We now turn to describe how we maintain the data structures defined in the previous section after each edge update. Consider update number $t$, and suppose that an edge $e=(a, b)$ is either inserted or deleted. Moreover, assume w.l.o.g. that $\pi(a) < \pi(b)$. We show how our data structures can be adjusted accordingly in the time specified by Lemma~\ref{lem:updatealg}.


Since we are maintaining the lexicographically first MIS---and not just any MIS---of a dynamically changing graph, a single edge update can potentially affect vertices that are multiple-hops away. To detect these vertices efficiently, we use an iterative approach with which, intuitively, we do not ``look'' at too many unaffected vertices.   Before formalizing this, we start with an observation. The proof is a simple consequence of the structure of LFMIS and thus we defer it to Section~\ref{sec:proofsMIS}.

\newcommand{\obsinAiff}[0]{For any vertex $v \in \affectedset$, the following properties hold:
\begin{enumerate}[itemsep=0pt,topsep=7pt]
	\item $k_{t-1}(v) \geq \pi(a)$ and $k_{t}(v) \geq \pi(a)$.
	\item if $v \not= b$, then $v$ has a neighbor $u$ such that $\pi(u) < \pi(v)$ and $u \in \flippedset$.
\end{enumerate}
}

\begin{observation}\label{obs:inAiffalowerchanges}
	\obsinAiff
\end{observation}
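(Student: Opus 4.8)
The plan is to run the greedy LFMIS process on $G_{t-1}$ and on $G_t$ in parallel and track where they diverge, using the two facts that these graphs differ only in the pair $\{a,b\}$ and that $\pi(a)<\pi(b)$. The basic principle I will invoke repeatedly is the ``local determination'' of the greedy process: the fate of a vertex $w$ (whether it enters the MIS) depends only on the edges from $w$ to neighbors of \emph{smaller} rank, since $w$ is killed exactly when such a neighbor joins the MIS. Consequently, for every threshold $r$, the set $\lfmis{G,\pi}\cap\{v:\pi(v)\le r\}$ equals $\lfmis{G[\{v:\pi(v)\le r\}],\pi}$; I expect to establish this by a short induction on the processing order.

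\textbf{Part 1.} I argue the contrapositive: if $k_{t-1}(v)<\pi(a)$ then $v\notin\affectedset$. Let $z:=\eliminator{G_{t-1}}{\pi}{v}$, so $\pi(z)=k_{t-1}(v)<\pi(a)<\pi(b)$. Since neither $a$ nor $b$ has rank $\le\pi(z)$, the only edge in which $G_{t-1}$ and $G_t$ differ lies outside $\{w:\pi(w)\le\pi(z)\}$, so these graphs induce the same subgraph there; by the principle above and $z\in\lfmis{G_{t-1},\pi}$ we get $z\in\lfmis{G_t,\pi}$. Moreover $z$ is still a neighbor of $v$ in $G_t$ (for the updated pair to be $\{v,z\}$ we would need $z\in\{a,b\}$, impossible as $\pi(z)<\pi(a)<\pi(b)$), hence $k_t(v)\le\pi(z)=k_{t-1}(v)$. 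Exchanging the roles of $t-1$ and $t$ -- legitimate since the hypothesis of the observation is symmetric in the two graphs -- gives $k_{t-1}(v)\le k_t(v)$ whenever $k_t(v)<\pi(a)$. These two statements force $k_{t-1}(v)=k_t(v)$, which establishes both inequalities of Part 1.

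\textbf{Part 2.} Let $v\in\affectedset$ with $v\ne b$, and put $z:=\eliminator{G_{t-1}}{\pi}{v}$ and $z':=\eliminator{G_t}{\pi}{v}$; these have distinct ranks, and by the symmetry just mentioned I may assume $\pi(z)<\pi(z')$. An eliminator never has larger rank than its vertex, so $\pi(z)<\pi(z')\le\pi(v)$; in particular $z\ne v$, so $z$ is a genuine neighbor of $v$ in $G_{t-1}$ with $\pi(z)<\pi(v)$. Since $z$ is $v$'s eliminator in $G_{t-1}$ but not in $G_t$, either $z\notin\lfmis{G_t,\pi}$ or $z\notin N_{G_t}(v)$. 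The latter would force the updated pair to be $\{v,z\}$, hence $v=a$ and $z=b$ (as $v\ne b$), contradicting $\pi(z)<\pi(a)$; so $z\notin\lfmis{G_t,\pi}$, while $z\in\lfmis{G_{t-1},\pi}$, i.e. $z\in\flippedset$. Finally $z$ is also a neighbor of $v$ in $G_t$ (again $\{v,z\}=\{a,b\}$ is impossible), so $u:=z$ is the required witness.

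\textbf{Main obstacle.} There is no deep difficulty, but the bookkeeping is delicate: every step needs a check that the single edge touched by the update is not the edge linking $v$ to its would-be witness, and the exclusion of $v=b$ in Part 2 is precisely where such a check genuinely fails. Everything else rides on the one inequality $\pi(a)<\pi(b)$, which simultaneously freezes the portion of the LFMIS below rank $\pi(a)$ and rules out all the degenerate incidences. I anticipate that spelling out the induction behind the local-determination principle, and keeping careful track of vertices versus their incident edges when $v\in\{a,b\}$, are the only parts requiring real attention.
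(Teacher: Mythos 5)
Your proof is correct and follows essentially the same route as the paper's: both parts rest on the fact that the LFMIS (and hence each eliminator) restricted to ranks below $\pi(a)$ is unaffected by the update, and in Part 2 both arguments exhibit the lower-ranked of the two eliminators of $v$ as the flipped low-rank neighbor. The only cosmetic difference is that you phrase Part 1 as a contrapositive combined with a $t \leftrightarrow (t-1)$ symmetry, whereas the paper argues directly about the common prefix MIS $I_U$ on $\{w : \pi(w) < \pi(a)\}$ and its (unchanged) neighborhood.
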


We start with an intuitive and informal description of the algorithm. The algorithm's formal description and the proofs are given afterwards.

\smparagraph{Algorithm Outline.} Observation~\ref{obs:inAiffalowerchanges} part 2 implies that if a vertex $u$ is in set $\affectedset$, then there should be a path from vertex $b$ to $u$ where all the vertices in the path (except $u$) belong to $\flippedset$ and the ranks in the path are monotonically increasing. This motivates us to use an iterative approach. We start by a set $\S$ which originally only includes vertex $b$. Then we iteratively take the minimum rank vertex $v$ from $\S$, detect whether $v \in \flippedset$ and if so, we add all the ``relevant neighbors'' of $v$ that may continue these monotone paths to set $\S$. Clearly, we cannot add all neighbors of $v$ to $\S$ since there could be as many as $\Omega(\Delta)$ such nodes. Rather, we only consider neighbors $u$ of $v$ where $k_{t-1}(u) \geq \pi(a)$. Observation~\ref{obs:inAiffalowerchanges} part 1 guarantees that every vertex $u \in \affectedset$ has $k_{t-1}(u) \geq \pi(a)$ and thus this set of relevant neighbors is sufficient to ensure any vertex in $\affectedset$ will be added to $\S$ at some point. Note that by definition, for any vertex $u \in V \setminus \affectedset$, both $k(u)$ and $m(u)$ will remain unchanged after the update. Therefore, once we handle all vertices in set \S, for every vertex $u$ in the graph, $k(u)$ and $m(u)$ should be updated. However, note that the adjacency lists of vertices outside \affectedset{} may require to be updated if they have a neighbor in $\affectedset$. We do this at the end of the algorithm. Algorithm~\ref{alg:update} below formalizes the structure of this algorithm and the subroutines used are formalized afterwards.

We use $k_{t-1}(v)$ and $k_t(v)$ to refer to the value $k(v)$ should hold before and after the update respectively. In the process of updating $k(v)$ from $k_{t-1}(v)$ to $k_{t}(v)$, whenever we use $k(v)$ without any subscript in the algorithm, we refer to the value of this data structure at that specific time. In particular, since we update the vertices iteratively, it could happen that in a specific time during the algorithm, for some vertex $u$, $k(u) = k_{t}(u)$ and for another vertex $w$, $k(w) = k_{t-1}(w)$. The same notation extends to $m(v), N^+(v)$, and $N^-(v)$ in the natural way.

\newcommand{\alginitialize}[0]{\textsc{Initialize}\ensuremath{()}}
\newcommand{\updateadjlists}[0]{\textsc{UpdateAdjacencyLists}{\ensuremath{()}}}
\newcommand{\isaffected}[1]{\textsc{IsAffected}\ensuremath{(#1)}}
\newcommand{\updateeliminator}[1]{\textsc{UpdateEliminator}\ensuremath{(#1)}}
\newcommand{\findrelneighbors}[1]{\textsc{FindRelevantNeighbors}\ensuremath{(#1)}}

\begin{tboxalg2e}{Maintaining the data structures after insertion or deletion of an edge $e=(a ,b)$.}
\label{alg:update}
	\begin{algorithm}[H]
	\DontPrintSemicolon
	\SetAlgoSkip{bigskip}
	\SetAlgoInsideSkip{}

	$\S \gets \{b\}$\;
	For each vertex $v$, we have an array $\P{v} = \emptyset$.\tcp*{Initialized in the pre-processing step.}
	\While{\S{} is not empty}{
		Let $v \gets \argmin_{u \in \S} \pi(u)$ be the minimum rank vertex in $\S$.\;
		\If(\tcp*[f]{Checks whether $v \in \affectedset$ in time $O(|\P{v}|)$.}) {\isaffected{v}\label{line:if}}{
			$\mathcal{H}_v \gets \findrelneighbors{v, \pi(a)}$ \;\tcp{$\mathcal{H}_v$ includes neighbors $u$ of $v$ with $k_{t-1}(u) \geq \pi(a)$ and has size $O(\frac{\log n}{\pi(a)})$ w.h.p.} 
			\updateeliminator{v, \mathcal{H}_v}\tcp*{Updates $k(v)$ and $m(v)$ by iterating over $\mathcal{H}_v$.}
			\If{$v \in \flippedset$}{
				\For{any vertex $u \in \mathcal{H}_v$}{
					\lIf{$\pi(u) > \pi(v)$}{
						insert $u$ to \S{} and insert $v$ to $\P{u}$. \label{line:addthings}
					}
				}
			}
		}
		Remove $v$ from $\S{}$ and set $\P{v} \gets \emptyset$.\;
	}
	\updateadjlists{} \tcp*{Updates adjacency lists $N^+$ and $N^-$ where necessary.}\label{line:updateN}
	\end{algorithm}
\end{tboxalg2e}

We use {\em iteration} to refer to iterations of the while loop in Algorithm~\ref{alg:update}. The following invariants hold at the beginning of the algorithm when $\S = \{ b \}$ and, as we will show in Claim~\ref{cl:invshold} via an induction, will continue to hold throughout.

\begin{invariant}\label{inv:S1}
	Consider the start of any iteration and let $v$ be the lowest-rank vertex in $\S$. It holds true that $k(u) = k_t(u)$ and $m(u) = m_t(u)$ for every vertex $u$ with $\pi(u) < \pi(v)$, i.e., $k(u)$ and $m(u)$ already hold the correct values. Moreover, $k(u) = k_{t-1}(u)$ and $m(u) = m_{t-1}(u)$ for every other vertex $u$ with $\pi(u) \geq \pi(v)$.
\end{invariant}

\begin{invariant}\label{inv:S2}
	Consider the start of any iteration and let $v$ be the lowest-rank vertex in $\S$. The set \P{v} includes a vertex $u$ iff: (1) $\pi(u) < \pi(v)$, and (2) $u \in \flippedset$, and (3) $u$ and $v$ are adjacent.
\end{invariant}

\begin{invariant}\label{inv:N}
	For any vertex $u$, before reaching Line~\ref{line:updateN} of Algorithm~\ref{alg:update}, adjacency lists $N^+(u)$ and $N^-(u)$ respectively hold values $N^+_{t-1}(u)$ and $N^-_{t-1}(u)$.
\end{invariant}

We continue by formalizing all the subroutines used in Algorithm~\ref{alg:update}.

\smparagraph{Subroutine {\normalfont \isaffected{v}}.} This function returns true if $v \in \affectedset$ and returns false otherwise. We consider two cases where $v = b$ and $v \not= b$ individually. For the former case, we show that $b \in \affectedset$ if and only if $m(a) = 1$ and $k(b) \geq \pi(a)$. For the latter case, we first scan the set \P{v} to see if there exists a vertex $u \in \P{v}$ with $\pi(u) = k(v)$. If such vertex $u$ exists, then $v \in \affectedset$. Otherwise, let $u$ be the lowest-rank vertex in $\P{v}$ such that $m(u) = 1$. If $\pi(u) < k(v)$, then $v \in \affectedset$ and otherwise $v \not\in \affectedset$. This subroutine clearly takes $O(|\P{v}|)$ time. We also prove its correctness in Claim~\ref{cl:isaffectedcorrectnessandtime}.

\smparagraph{Subroutine {\normalfont \findrelneighbors{v, \pi(a)}}.} The goal in this subroutine is to find the set 
\begin{equation}\label{eq:defH}
	\mathcal{H}_v := \{ u \in N(v) \mid  k_{t-1}(u) \geq \pi(a)\}.
\end{equation}
By definition of $N^+(v)$ and $N^-(v)$, each neighbor $u \in N(v)$ is at least in one of these two sets. Therefore, to construct set $\mathcal{H}_v$, we have to find neighbors $u$ of $v$ with $k_{t-1}(u) \geq \pi(a)$ in both $N^+(v)$ and $N^-(v)$. For the former, we simply iterate over all neighbors $u$ of $v$ in set $N^+(v)$ and if $k_{t-1}(u) \geq \pi(a)$, we add $u$ to $\mathcal{H}_v$. For the latter, recall from Invariant~\ref{inv:N} that $N^-(v) = N^-_{t-1}(v)$; thus, the vertices $u$ in $N^-(v)$ are indexed by $k_{t-1}(u)$. To find only those in $N^-(v)$ with $k_{t-1}(u) \geq \pi(a)$, it suffices to search for index $\pi(a)$ and traverse over all vertices whose index is at least $\pi(a)$. The correctness and an analysis of the running time of this algorithm is provided in Claim~\ref{cl:correctfindrelevantneighbors}.

\smparagraph{Subroutine {\normalfont \updateeliminator{v, \mathcal{H}_v}}.} Given that the minimum-rank vertex $v \in \S$ is in set $\affectedset$, this subroutine updates $k(v)$ assuming that the set $\mathcal{H}_v$ is already computed and given. To do this, let $u$ be the lowest-rank vertex in $\mathcal{H}_v$ with $m(u) = 1$. If no such vertex exists, or if $\pi(u) > \pi(v)$, $v$ has to join the MIS and thus we set $k(v) \gets \pi(v)$ and $m(v) \gets 1$. Otherwise, $u$ has to be the new eliminator of $v$ and we set $k(v) \gets \pi(u)$ and $m(v) \gets 0$. This subroutine clearly takes $O(|\mathcal{H}_v|)$ time. We also prove its correctness in Claim~\ref{claim:updatek}.

\smparagraph{Subroutine {\normalfont \updateadjlists{}}.} If $e$ is deleted, we remove $a$ from $N^+(b)$ and $N^-(b)$, and remove $b$ from $N^+(a)$ and $N^-(a)$ (note that some of these sets may not include the removing vertex). If $e$ is inserted, we insert $a$ and $b$ into each other's ``appropriate'' adjacency list according to the current values of $k(a)$ and $k(b)$; namely:
		\begin{itemize}[topsep=5pt,itemsep=-0.3ex,partopsep=0ex,parsep=1ex]
			\item If $k(a) < k(b)$, insert $a$ into $N^-(b)$, and insert $b$ into $N^+(a)$.
			\item If $k(a) > k(b)$, insert $a$ into $N^+(b)$, and insert $b$ into $N^-(a)$.
			\item If $k(a) = k(b)$, insert $a$ into $N^-(b)$ and $N^+(b)$, and insert $b$ into $N^-(a)$ and $N^+(a)$.
		\end{itemize}
	We also need to update the adjacency lists of any affected vertex $v$, since after changing $k(v)$, some neighbors of $v$ may have to move from $N^+(v)$ to $N^-(v)$ or vice versa. Moreover, if an affected vertex $v$ is in $N^-(u)$ of some vertex $u$, we also need to recompute the position of $v$ in $N^-(u)$, since recall that $v$ should be indexed by $k(v)$ in $N^-(u)$ which has now changed.
	
	To address the changes above, the crucial property is that for any vertex $v \in \affectedset$, any vertex $u$ that has to move between $N^+(v)$ and $N^-(v)$ or has $v$ in its set $N^-(u)$, has to belong to $\mathcal{H}_v$ (see Claim~\ref{cl:adjlistscorrect} for the proof). Therefore in the algorithm, for any vertex $v \in \affectedset$, we only iterate over the vertices $u \in \mathcal{H}_v$ and based on $k(u)$ and $k(v)$, which at this point in the algorithm are correctly updated, determine the membership of vertex $v$ in adjacency lists of vertex $u$ and vice versa. We then update $N^-(v)$, $N^+(v)$, $N^-(u)$ and $N^+(u)$ accordingly.

\subsection{Overview of Correctness \& The (Parametrized) Running Time}\label{sec:parametrizedrunningtime}

The correctness of Algorithm~\ref{alg:update} follows mainly from the greedy structure of LFMIS and does not require a sophisticated analysis. As such, we defer it to Section~\ref{sec:proofsMIS}. Here, we focus on the main ideas required for bounding the running time of the algorithm stated in Lemma~\ref{lem:updatealg}. A complete proof of this lemma is also presented in Section~\ref{sec:proofsMIS}.

One particularly important property is that, w.h.p., the size of set $\mathcal{H}_v$ for every vertex $v \in \affectedset$ is $O\big(\min\{ \Delta, \frac{\log n}{\pi(a)}\}\big)$. This is formally proved in Claim~\ref{cl:correctfindrelevantneighbors} of Section~\ref{sec:proofsMIS}; but the main intuition is as follows. From definition of $\mathcal{H}_v$, every vertex $u \in \mathcal{H}_v$ has $k_{t-1}(u) \geq \pi(a)$. Moreover, since $v \in \affectedset$, by Observation~\ref{obs:inAiffalowerchanges} part 1, we also have $k_{t-1}(v) \geq \pi(a)$. This means that if we construct LFMIS in graph $G_{t-1}$ on the prefix of vertices with rank in $[0, \pi(a))$, then vertex $v$ will survive and will have a remaining degree of at least $|\mathcal{H}_v|$. Since the adversary is oblivious and the ranking $\pi$ and graph $G_{t-1}$ are independently chosen, we can use Proposition~\ref{prop:sparseMIS} to argue that in this remaining graph, maximum degree is, w.h.p., at most $O\big(\min\{ \Delta, \frac{\log n}{\pi(a)}\}\big)$ implying the same upper bound on $|\mathcal{H}_v|$. 

Observe that in the algorithm, only for vertices $v \in \flippedset$ we insert (a subset of) their relevant neighbors $\mathcal{H}_v$ to $\S$. Therefore, the total number of vertices inserted to $\S$ is at most 
$O\big( |\flippedset| \min\{\Delta, \frac{\log n}{\pi(a)} \}\big)$, w.h.p. However, this is not an upper bound on the algorithm's running time since each vertex in $\S$ is not simply processed in constant time. We summarize these procedures below.

\smparagraph{Subroutine {\normalfont \isaffected{v}}.} This subroutine is called for every vertex $v \in \S$. It is clear from description that \isaffected{v} takes $O(|\P{v}|)$ time. Therefore, the aggregated running time of this function for all vertices in $\S$ is $\sum_{v \in S} |\P{v}|$. Observe that each vertex $u \in \P{v}$ is in \flippedset{}. Furthermore, each vertex $u \in \flippedset$ belongs to $\P{v}$ of at most $|\mathcal{H}_u|$ vertices due to Line~\ref{line:addthings}. Therefore, a simple double-counting argument shows that w.h.p., $\sum_{v \in S} |\P{v}| \leq O\big(|\flippedset| \min\{\Delta, \frac{\log n}{\pi(a)} \}\big)$.

\smparagraph{Subroutine {\normalfont \findrelneighbors{v, \pi(a)}}.} This is called for every vertex $v \in \affectedset$. Thanks to the fact that $N^-(v)$ is indexed by $k_{t-1}(.)$ and that $N^+(v)$ has size at most $O(|\mathcal{H}_v|)$ (we show this in the proof of   Claim~\ref{cl:correctfindrelevantneighbors}) this subroutine takes $O(|\mathcal{H}_v| \log\Delta)$ time where the extra $\log \Delta$ factor is for iterating over BST $N^-(v)$. Thus, the aggregated running time is $ O\big(|\affectedset|\min\{\Delta, \frac{\log n}{\pi(a)} \}\log \Delta\big)$.

\smparagraph{Subroutine {\normalfont \updateeliminator{v, \mathcal{H}_v}}.} This subroutine is only called for vertices $v \in \affectedset$ and takes $O(|\mathcal{H}_v|)$ time. Clearly, the aggregated running time is $ O\big(|\affectedset|\min\{\Delta, \frac{\log n}{\pi(a)} \}\big)$, w.h.p. 

\smparagraph{Subroutine {\normalfont \updateadjlists}.} As described in the subroutine, for any vertex $v \in \affectedset$, $v$ has to be re-indexed or moved in adjacency lists of at most $|\mathcal{H}_v|$ of its neighbors. Each such operation requires $O(\log \Delta)$ time. Therefore, the aggregated running time is w.h.p. $ O\big(|\affectedset|\min\{\Delta, \frac{\log n}{\pi(a)} \}\log \Delta\big)$.

The total running time of the algorithm is the sum of the aggregated running time of each of the procedures above which is $O\big(|\affectedset|\min\{\Delta, \frac{\log n}{\pi(a)} \}\log \Delta\big)$ as required by Lemma~\ref{lem:updatealg}.

\section{An Analysis of Affected Vertices: Proof of Theorem~\ref{thm:vertexA}}\label{sec:analysis}

In this section, we prove Theorem~\ref{thm:vertexA} which we briefly highlighted in Section~\ref{sec:highlevel}. In this regard, for any two graphs $G=(V, E)$ and $G'=(V', E')$ with $V' \subseteq V$ and a ranking $\pi$ over $V$, we define $\affectedset_\pi(G, G') := \{v \in V \mid \eliminator{G}{\pi}{v} \not= \eliminator{G'}{\pi}{v} \}$ to be the set of vertices with different eliminators in the two graphs. Note that this is analogous to the definition of ``affected vertices'' in the previous section and hence the choice of notation. A more formal statement of Theorem~\ref{thm:vertexA} reads as follows:

\newcommand{\thmvertexA}[0]{Fix an arbitrary graph $G=(V, E)$ and let $G'=G[V \setminus \{v\}]$ be obtained by removing an arbitrary vertex $v$ from $G$. If $\pi$ is a random ranking over $V$,
	$
	\E_{\pi}[|\affectedset_\pi(G, G')|] \leq O(\log n).
	$
}

\begin{theorem}\label{thm:vertexA}
	\thmvertexA{}
\end{theorem}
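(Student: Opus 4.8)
The plan is to reduce to an edge-like setting and then run a "charging via monotone paths" argument, exploiting the greedy structure of LFMIS. First I would recall the key structural fact (Observation~\ref{obs:inAiffalowerchanges}-style): if a vertex $u$ has $\eliminator{G}{\pi}{u}\neq\eliminator{G'}{\pi}{u}$, then there is a path $v=u_0,u_1,\dots,u_k=u$ in $G$ along which the ranks strictly increase and every internal vertex $u_0,\dots,u_{k-1}$ \emph{flips} its MIS-membership between $G$ and $G'$. So it suffices to bound the expected number of flipped vertices reachable from $v$ by such monotone increasing-rank paths; since every affected vertex sits at the end of such a path whose penultimate vertex is flipped (or is $v$ itself), $|\affectedset_\pi(G,G')|$ is at most the number of flipped vertices plus their neighbors... but that reintroduces a $\Delta$ factor, so instead I would charge each affected $u$ to the \emph{last flipped vertex} on its path and bound, for each flipped $w$, the number of affected $u$'s whose monotone path has $w$ as its last flipped vertex — these $u$'s must all have $\eliminator{}{}{u}=w$ in one of the two graphs, hence the count is controlled. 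Actually the cleanest route: bound $\E[|\flippedset_\pi|]$ directly by $O(\log n)$, where $\flippedset_\pi$ is the set of vertices flipping membership, and separately argue $\E[|\affectedset_\pi|]=O(\E[|\flippedset_\pi|])$ using that an affected non-flipped $u$ has its new eliminator among its flipped neighbors, so summing over flipped $w$ the number of $u$ with new eliminator $w$ equals (number of vertices $w$ eliminates) — and I'd bound $\sum_w (\text{size of }w\text{'s new elimination set})$ in expectation.

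The core estimate is $\E_\pi[|\flippedset_\pi(G,G')|]=O(\log n)$, and here I would use the classical prefix-exposure / "random greedy" argument in the style of \cite{Censor-HillelHK16, DBLP:conf/stoc/YoshidaYI09}. Think of building the LFMIS by revealing ranks in increasing order. A vertex $w$ can only flip if, at the moment we reach $\pi(w)$, the decisions made so far already differ between $G$ and $G'$ in a way that reaches $w$; the first difference is forced by $v$ itself (present in $G$, absent in $G'$). Condition on $\pi(v)=p$. Then in $G$, $v$ is added to the MIS and kills its then-alive neighbors; in $G'$ these vertices may instead survive. But by Proposition~\ref{prop:sparseMIS}, after processing the prefix of vertices with rank $<p$, the maximum degree in the remaining graph is $O(p^{-1}\log n)$ w.h.p., so $v$ has only $O(p^{-1}\log n)$ such neighbors. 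Each of those neighbors flips only with probability proportional to... — the standard argument gives that the expected total "influence" propagated out of a single perturbation, summed over all levels, telescopes to $O(\log n)$ because the probability that the discrepancy reaches depth $j$ decays geometrically while the branching at each level is compensated by the shrinking survival probability. Integrating over $p\in[0,1]$ (since $\pi(v)$ is uniform) keeps the bound $O(\log n)$.

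The main obstacle I expect is making the "discrepancy propagation" rigorous while keeping the bound $O(\log n)$ rather than $O(\log^2 n)$ or worse: the naive union bound over levels times the per-level expected branching does not obviously telescope, and the events "vertex $w$ flips" are correlated across $w$ in a subtle way because they all depend on the same ranking. I would handle this by a coupling argument — couple the two executions on $G$ and $G'$ so they agree except on the "contaminated" set, define the contaminated set inductively (a vertex is contaminated if its membership-decision differs, which can only happen if one of its lower-rank neighbors is contaminated), and then bound $\E[|\text{contaminated}|]$ by a direct summation: for each vertex $w$, $\Pr[w\text{ contaminated}]$ is at most the expected number of monotone increasing-rank paths from $v$ to $w$ weighted by the probability all intermediate vertices are "pivotal", and this sum over $w$ is exactly the kind of quantity the random-greedy analysis bounds by $O(\log n)$. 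A secondary obstacle is the passage from flipped to affected vertices (the $N^-/N^+$ reindexing later needs a bound on all of $\affectedset$, not just $\flippedset$); I would dispatch this by noting each affected-but-not-flipped $u$ has $\eliminator{G}{\pi}{u}$ or $\eliminator{G'}{\pi}{u}$ equal to some flipped vertex $w$ with $\pi(w)<\pi(u)$, so $|\affectedset_\pi|\le |\flippedset_\pi| + \sum_{w\in\flippedset_\pi}|\{u: w \text{ newly (de)eliminates } u\}|$, and the inner sum telescopes with the same level-by-level bookkeeping, again costing only an extra constant (or at worst another $\log n$ factor, which is still within the claimed $O(\log n)$ if one is careful — indeed the theorem statement only asks for $O(\log n)$, so I would aim to show the dominant term is $\E[|\flippedset_\pi|]=O(\log n)$ and that affected $\le$ flipped up to lower-order terms).
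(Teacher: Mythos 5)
You correctly identify the two structural ingredients---monotone increasing-rank propagation paths rooted at $v$, and the distinction between flipped and affected vertices---but your proposal has a genuine gap exactly where the paper says the difficulty lies. Bounding $\E[|\flippedset_\pi|]$ is not the issue: the expected number of vertices whose MIS-status flips under a vertex update is $O(1)$ and already known (\cite[Theorem~1]{Censor-HillelHK16}). The new content of Theorem~\ref{thm:vertexA} is the passage to $\affectedset_\pi$, and your treatment of it---``the inner sum telescopes with the same level-by-level bookkeeping, again costing only an extra constant (or at worst another $\log n$ factor)''---is an assertion, not an argument. Concretely, in your bound $|\affectedset_\pi| \le |\flippedset_\pi| + \sum_{w\in\flippedset_\pi}|\{u : w \text{ newly (de)eliminates } u\}|$ each summand can be as large as $\deg(w)$. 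The natural way to control it is to note that every affected vertex has eliminator rank at least $\pi(v)$ in both graphs and invoke Proposition~\ref{prop:sparseMIS} to cap the relevant degrees by $O(\pi(v)^{-1}\log n)$, giving $|\affectedset_\pi|\lesssim |\flippedset_\pi|\cdot \pi(v)^{-1}\log n$ w.h.p.; but this is a product of correlated random variables, $\E[\pi(v)^{-1}]$ diverges, and a uniform conditional bound $\E[|\flippedset_\pi| \mid \pi(v)=p]=O(1)$ is not available off the shelf---indeed, producing conditional bounds of exactly this type is what Theorem~\ref{thm:vertexA} is later \emph{used for} (Lemma~\ref{lem:edgeA}), so assuming one here risks circularity. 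Your fallback, a coupling/contamination argument whose expected size ``telescopes,'' is precisely the step that needs a proof: the Yoshida et al.\ / Censor-Hillel et al.\ machinery bounds flips, not eliminator changes, and the paper explicitly notes that this extension ``is not simple and requires many new ideas.''

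For comparison, the paper's actual proof takes a different, purely combinatorial route that sidesteps all conditional-probability estimates. Permutations are split into likely ones (every propagation path has length at most $\beta\log n$; the complement has probability at most $n^{-2}$ by the Fischer--Noever dependency-path bound) and unlikely ones. For each likely $\pi$ and each $u\in\affectedset_\pi$, a distinct permutation $\varphi_{\pi,u}$ is constructed by rotating the ranks along the canonical propagation path $P_\pi(u)$, and a case analysis on the ``branching vertex'' of two colliding paths shows that every permutation is blamed at most $\beta\log n$ times. Double counting then yields $\E_{\pi\sim\Pi_L}[|\affectedset_\pi|]=O(\log n)$ directly, flipped vertices never being counted separately. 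If you wish to salvage your route, you would need to actually establish $\E[|\affectedset_\pi|\mid \pi(v)=p]=O(\log n)$ uniformly in $p$, which is essentially the full strength of the theorem.
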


The fact that Theorem~\ref{thm:vertexA} bounds the number of affected vertices as a result of a vertex update can be used to bound the affected vertices by $O(\log n)$ as a result of an edge update $e=(a, b)$, {\em even when we condition on any value for $\min\{\pi(a), \pi(b)\}$}.

\newcommand{\lemedgeA}[0]{Fix an arbitrary graph $G=(V, E)$ and let $G'=(V, E')$ be the graph obtained by adding or removing an arbitrary edge $e=(a, b)$ to $G$. If  $\pi$ is a random ranking over $V$,  then for any value of $\lambda \in [0, 1]$, it holds that
	$
	\E_{\pi}[|\affectedset_\pi(G, G')| \mid \min\{\pi(a), \pi(b)\}=\lambda] \leq O(\log n).
	$
}

\begin{lemma}\label{lem:edgeA}
	\lemedgeA{}
\end{lemma}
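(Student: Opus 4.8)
\textbf{Proof plan for Lemma~\ref{lem:edgeA}.}
The plan is to reduce the edge update $e=(a,b)$ to a \emph{vertex} update so that Theorem~\ref{thm:vertexA} can be applied directly, with the conditioning on $\min\{\pi(a),\pi(b)\}=\lambda$ handled by an averaging/symmetrization argument over which of $a,b$ is the low-rank endpoint and over the precise positions of the two ranks. First I would condition on the unordered pair of ranks $\{\pi(a),\pi(b)\}=\{\lambda,\mu\}$ with $\lambda<\mu$ (the event $\min\{\pi(a),\pi(b)\}=\lambda$ is the union over $\mu>\lambda$ of these finer events, so it suffices to bound the conditional expectation given each such refined event and then average). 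Write $a$ for the endpoint with the smaller rank $\lambda$ and $b$ for the one with rank $\mu$; by the obliviousness of the adversary this relabeling is w.l.o.g.\ once we have conditioned on the rank values. The key structural observation (essentially Observation~\ref{obs:inAiffalowerchanges} part~1, or a direct argument) is that no vertex with eliminator-rank $<\lambda$ can be affected: running the greedy LFMIS process on the prefix of vertices with rank in $[0,\lambda)$ produces the same partial MIS in $G$ and in $G'$, since the edge $e$ has its lower endpoint $a$ at rank exactly $\lambda$ and hence is never examined during that prefix. Therefore $\affectedset_\pi(G,G') \subseteq \affectedset_{\pi'}(H, H')$, where $H, H'$ are the graphs induced on the vertices that \emph{survive} the prefix $[0,\lambda)$ (together with the already-decided MIS vertices feeding in as forced kills), and in $H,H'$ the vertex $a$ is present and the \emph{only} difference between them is the single edge $e=(a,b)$ incident to $a$.

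Next I would argue that in this residual instance, adding or removing the single edge $e$ incident to $a$ is dominated by adding or removing the vertex $a$ itself. Concretely: if $a\notin\lfmis{\cdot}$ in the relevant graph then $a$'s presence or the edge $e$ is irrelevant and $\affectedset=\emptyset$; if $a\in\lfmis$, then toggling the edge $(a,b)$ can only change things through $b$ and its descendants, and the set of vertices whose eliminator changes is contained in the set of vertices whose eliminator changes when we instead \emph{delete} $a$ outright (which removes all of $a$'s edges, a superset of the single toggled one) --- here one uses the monotone, prefix-by-prefix determinism of LFMIS: the symmetric difference of the eliminator functions is monotone in the symmetric difference of the edge sets, in the sense that more edge-changes incident to a single low-rank vertex can only create more affected vertices downstream. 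This yields $|\affectedset_\pi(G,G')| \le |\affectedset_{\pi''}(\widehat H, \widehat H \setminus \{a\})| + O(1)$ for an appropriate graph $\widehat H$ and an induced ranking $\pi''$, the $O(1)$ slack absorbing $a$ and $b$ themselves. Now I can invoke Theorem~\ref{thm:vertexA}: conditioned on the ranks of $a$ and $b$, the remaining ranks $\pi''$ restricted to the other vertices still form a uniformly random ranking of those vertices (ranks are i.i.d.\ uniform, so conditioning on two coordinates leaves the rest uniform and independent), so $\E_{\pi''}[|\affectedset_{\pi''}(\widehat H,\widehat H\setminus\{a\})|] \le O(\log n)$. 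Averaging over the conditioning on $\mu$ and over the $a\leftrightarrow b$ labeling preserves the $O(\log n)$ bound, giving the claim.

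The main obstacle I expect is making the "edge update is dominated by the corresponding vertex update" step fully rigorous, i.e.\ the monotonicity claim that $\affectedset_\pi(G, G+e) \subseteq \affectedset_\pi(G, G - a)$ (up to adding $a,b$) when $e$ is incident to $a$. This is intuitively clear --- deleting $a$ is "more disruptive" than removing one of its edges --- but LFMIS is not obviously monotone under edge deletion in the naive sense, so one has to be careful: the right statement is about the \emph{downstream propagation} of the single change at $a$'s level, and the cleanest way is probably to show that the set of affected vertices is exactly the set reachable from $\{a,b\}$ along the "influence paths" described by Observation~\ref{obs:inAiffalowerchanges} part~2, and that this reachable set only grows when we enlarge the initial perturbation from "one edge at $a$" to "all edges at $a$". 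A secondary technicality is bookkeeping the residual instance $H$ correctly (the forced kills coming from MIS vertices in the $[0,\lambda)$ prefix), but since those kills are identical in $G$ and $G'$ they contribute nothing to the affected set and can be folded in cleanly. I would also double-check the edge case where $b$'s rank $\mu$ is itself below some relevant threshold, but Theorem~\ref{thm:vertexA}'s bound is uniform over the graph, so this causes no trouble.
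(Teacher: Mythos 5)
There is a genuine gap at precisely the step you flag as the main obstacle: the claim that the edge toggle $(a,b)$ is ``dominated'' by deleting the vertex $a$, i.e., that $\affectedset_\pi(G,G')$ is contained (up to $O(1)$) in the affected set of the vertex update $\widehat{H}\mapsto \widehat{H}\setminus\{a\}$. This monotonicity is false for LFMIS. Deleting $a$ outright removes all of its edges, which can cause some other neighbor $c$ of $a$ with $\pi(c)<\pi(b)$ to join the MIS; that $c$ may then eliminate $b$, so $b$ is not flipped under the vertex deletion even though it is flipped under the single edge toggle, and the two downstream cascades can be entirely disjoint. More perturbation does not yield more affected vertices---effects can cancel---so there is no set containment, and the influence paths of Observation~\ref{obs:inAiffalowerchanges} part~2 emanating from the edge toggle need not appear at all under the vertex deletion. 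Since the whole proposal funnels through this containment, the argument does not go through as written.

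The fix (and the paper's route) is to put $a$ itself inside the processed prefix: take $U=\{w : \pi(w)\le \pi(a)\}$ rather than the open prefix $[0,\lambda)$. Since $b\notin U$, the edge $(a,b)$ is invisible to the induced subgraph on $U$, so $\lfmis{G[U],\pi}=\lfmis{G'[U],\pi}=:I_U$ and in particular $a$'s MIS status is decided identically in both graphs. The presence or absence of the edge $(a,b)$ then affects only whether $b$ is killed when $\Gamma(I_U)$ is removed: either the two residual graphs coincide (if $a\notin I_U$, or $b$ has another neighbor in $I_U$ of rank below $\pi(a)$), or they differ in exactly the one vertex $b$. Thus on the residual instance the edge update becomes an insertion or deletion of the \emph{vertex $b$}---not of $a$---and Theorem~\ref{thm:vertexA} applies verbatim, using (as you correctly note) that conditioned on $U$ and the ranks inside it, the relative order of the surviving vertices is still uniformly random. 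Your first half---restricting to the residual graph after the low-rank prefix and observing that the leftover ranking is fresh---matches the paper; the missing idea is solely which vertex update to reduce to.
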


Lemma~\ref{lem:edgeA} is crucial for our analysis as it implies that the two random variables $|\affectedset_\pi(G, G')|$ and $\min\{\pi(a), \pi(b)\}$, which recall are used in the statement of Lemma~\ref{lem:updatealg}, can be regarded as ``almost'' independent. We elaborate more on this in Section~\ref{sec:wrapup}. 

We first prove Lemma~\ref{lem:edgeA} given the correctness of Theorem~\ref{thm:vertexA}. The bulk of analysis is then concentrated around proving Theorem~\ref{thm:vertexA}.

\begin{proof}[Proof of Lemma~\ref{lem:edgeA}]
	Suppose w.l.o.g. that $\pi(a) < \pi(b)$, i.e., $\pi(a) = \lambda$ by the conditional event. Let $U$ be the subset of $V$ containing vertices $w$ with $\pi(w) \leq \pi(a)$. We prove the lemma even when the set $U$, and the rank of the vertices in it are chosen adversarially. 
	
	We have $G'[U] = G[U]$ since $u \not\in U$ and the only difference of the two graphs $G'$ and $G$ which is in edge $(a, b)$ does not belong to either of the two induced graphs. Therefore, we have $\lfmis{G'[U], \pi} = \lfmis{G[U], \pi}$; let $I_U$ be this MIS. Furthermore, let $H'(V'_H, E'_H)$ and $H(V_H, E_H)$ be the residual graphs after we remove vertices in $I_U$ and their neighbors from $G'$ and $G$ respectively. It is not hard to see that either $H = H'$ (if $a \not\in I_U$ or if $b$ has another neighbor $w$ with $\pi(w) < \pi(a)$ in $I_U$) or $H'$ has exactly one extra vertex than $H$ which has to be $b$, i.e., $H = H'[V'_H \setminus \{ b \}]$. In the former case, since the two graphs are equal, no matter how $\pi$ is chosen, the eliminators of all vertices will be the same. In the latter case, the two graphs $H$ and $H'$ differ in only one vertex and no information about the relative order of the vertices in $V_H$ or $V'_H$ in $\pi$ is revealed. Therefore, by Theorem~\ref{thm:vertexA}, the expected number of vertices whose eliminators are different in $H$ and $H'$ is at most $O(\log n)$.
\end{proof}

\newcommand{\parent}[2]{\ensuremath{p_{#1}(#2)}}
\renewcommand{\path}[2]{\ensuremath{P_{#1}(#2)}}

We now, turn to prove Theorem~\ref{thm:vertexA} and start with some notation. Throughout the rest of this section, vertex $v$ should be regarded as fixed. We use $I$ and $I'$ to respectively denote independent sets \lfmis{G, \pi} and \lfmis{G', \pi}. Also, for brevity, we use $\affectedset_\pi$ instead of $\affectedset_\pi(G, G')$. Furthermore, we define $\flippedset_{\pi}$ as the subset of vertices in $\affectedset_{\pi}$ whose MIS-status is \underline{f}lipped, i.e., $u \in \flippedset_{\pi}$ if and only if $u$ belongs to exactly one of $I$ or $I'$.

Instead of rankings, it will be more convenient to consider permutations for the arguments of this section. That is, we assume that a permutation $\pi: V \to [n]$ from the set $\Pi$ of all $n!$ possible permutations is drawn uniformly at random and the LFMIS is constructed according to this permutation. It is clear that LFMIS according to a random rank follows exactly the same distribution as that according to a random permutation.\footnote{To see this, observe that to draw a random permutation $\pi: V \to [n]$, one can first draw a random rank $\rho: V \to [0, 1]$ and then sort the vertices based on $\rho$.}

The following observation is very similar to Observation~\ref{obs:inAiffalowerchanges} of the previous section and will be very useful here too.

\newcommand{\obseveryvertexhasneighborinF}[0]{If $\affectedset_\pi$ is non-empty, then $v \in I$ and $v \in \flippedset_\pi$. Furthermore, for every vertex $u \in \affectedset_\pi \setminus \{ v \}$, there is another vertex $w \in \flippedset_\pi$ that is adjacent to $u$ and $\pi(w) < \pi(u)$.}

\begin{observation}\label{obs:everyvertexhasneighborinF}
	\obseveryvertexhasneighborinF{}
\end{observation}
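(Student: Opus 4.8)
\textbf{Proof proposal for Observation~\ref{obs:everyvertexhasneighborinF}.}

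The plan is to argue directly from the greedy structure of LFMIS, mirroring the reasoning behind Observation~\ref{obs:inAiffalowerchanges}. First I would handle the claim that if $\affectedset_\pi$ is non-empty then $v \in I$ and $v \in \flippedset_\pi$. Here $G' = G[V \setminus \{v\}]$, so $G'$ differs from $G$ only by the removal of $v$. If $v \notin I$, I claim running the greedy procedure on $G$ and on $G'$ produces identical outcomes: process vertices in increasing $\pi$-order; at the moment we would reach $v$ in $G$, it is already killed (since $v \notin I$), so $v$ contributes nothing, and every other vertex is added/killed at exactly the same step in both runs. Hence $I = I'$ and all eliminators agree, contradicting $\affectedset_\pi \neq \emptyset$. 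So $v \in I$. Since $v \notin V'$, $v$ is trivially not in $I'$, so $v$ is in exactly one of $I, I'$, i.e.\ $v \in \flippedset_\pi$.

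Next, for the second part, fix $u \in \affectedset_\pi \setminus \{v\}$, so $\seliminator[G]{u} \neq \seliminator[G']{u}$. I would again compare the two greedy runs and take $\ell := \min\{\pi(\seliminator[G]{u}), \pi(\seliminator[G']{u})\}$; since the eliminators differ and one of them is the true eliminator of lower rank, in one of the two graphs $u$ is killed at ``time $\ell$'' by some vertex $z$ while in the other graph $u$ is \emph{not} yet killed at time $\ell$ (it either survives longer or joins the MIS itself or is killed by a different, necessarily higher-rank, vertex). The vertex $z$ that does the killing in one run but whose behaviour differs across the two runs must itself have changed MIS-status — otherwise it would be present in the MIS of both graphs and would kill $u$ identically in both. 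More carefully: let $w$ be the lowest-rank vertex on which the two runs diverge in MIS-membership among $\Gamma(u)$; such a $w$ exists because the eliminator of $u$ is determined precisely by $\lfmis{\cdot,\pi} \cap \Gamma(u)$, and these intersections differ. Then $w \in \flippedset_\pi$, $w$ is adjacent to $u$ (as $w \in \Gamma(u)$ and $w \neq u$ since $u$'s own status could differ but then we pick the neighbor causing it), and I need $\pi(w) < \pi(u)$. For the rank inequality: if $\pi(w) > \pi(u)$ then at the time $u$ is processed in either run, its eliminator is already determined by lower-rank vertices only (either $u$ was killed earlier, or $u$ joined the MIS), so a change at $w$ cannot change $\seliminator{u}$ — contradicting that $w$ was chosen as a divergence point relevant to $u$'s eliminator. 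This forces $\pi(w) < \pi(u)$, and also rules out $w = u$ since we need $w$ adjacent to $u$ with smaller rank.

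The main obstacle I expect is the bookkeeping in the second part: cleanly identifying the neighbor $w$ whose flipped status is ``responsible'' for $u$'s changed eliminator, and ruling out the degenerate case $w=u$ and the case $\pi(w) \geq \pi(u)$ simultaneously. The clean way to organize this is by induction on rank: process vertices in increasing $\pi$-order and maintain the invariant that a vertex's MIS-status in $G$ versus $G'$ can only differ if some strictly-lower-rank neighbor already differs (with the base of the induction being $v$ itself, handled in part one). Given that invariant, taking $w$ to be the lowest-rank flipped vertex in $N(u)$ — which must exist by the invariant applied to $u$ — and noting $\pi(w) < \pi(u)$ is immediate. This is essentially the same argument structure already used for Observation~\ref{obs:inAiffalowerchanges}, and the paper defers that proof to Section~\ref{sec:proofsMIS}, so I would expect this observation's proof to be similarly short, invoking the greedy/prefix-stability property of LFMIS rather than a delicate new computation.
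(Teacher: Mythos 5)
Your proposal is correct and follows essentially the same route as the paper: part one rests on the fact that removing a non-MIS vertex cannot change anything (the paper phrases this as a minimal-counterexample argument over $\affectedset_\pi$ rather than your direct induction on the greedy process), and part two boils down to the same key observation you reach, namely that the lower-ranked of the two eliminators $\elim_{G,\pi}(u)$ and $\elim_{G',\pi}(u)$ must lie in exactly one of $I$ and $I'$, is adjacent to $u$, and has rank below $\pi(u)$. The paper's part two is just a more direct version of your ``divergence point'' argument---it takes $x$ and $x'$ to be the two eliminators, assumes w.l.o.g.\ $\pi(x)<\pi(x')$, and notes $x\in I$ but $x\notin I'$---so no new idea is missing from your sketch.
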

\begin{proof}
	We first prove that if $\affectedset_\pi \not= \emptyset$ then $v \in I$. Assume for contradiction that $v \not\in I$ and $\affectedset_\pi \not= \emptyset$. Since $v$ does not belong to $G'$, we also have $v \not\in I'$, i.e., $v$ is in neither of the two maximal independent sets $I$ and $I'$. Now take the minimum rank vertex $u$ in $\affectedset_\pi$ (which exists since $\affectedset_\pi \not= \emptyset$). Since $u \in \affectedset_\pi$, by definition, its eliminators should be different in $I$ and $I'$. Therefore, there should exist a vertex $w$ with $\pi(w) < \pi(u)$, that is in exactly one of the two maximal independent sets. Since $v$ is in neither of $I$ and $I'$, $w \not= v$. However, in this case, $w$ would also belong to $\affectedset_\pi$, contradicting that $u$ is the minimum rank vertex in $\affectedset_\pi$, and completing the proof of this part.
	
	For the second part, fix a vertex $u \in \affectedset_\pi$ and let $x$ and $x'$ be its eliminators in $I$ and $I'$ respectively. Note that $x$ and $x'$ cannot be the same vertex or otherwise $u \not\in \affectedset_\pi$. Suppose that $\pi(x) < \pi(x')$. The fact that $x$ is an eliminator of $u$ in $I$ means that $x \in I$. On the other hand, the fact that $x'$, instead of $x$, is the eliminator of $u$ in $I'$ means that $x \not\in I'$. This means that $x$ has to belong to $\flippedset_\pi$. A similar argument holds for the case where $\pi(x') < \pi(x)$.
\end{proof}

For a vertex $u \in \affectedset_\pi \setminus \{ v \}$, we define the {\em parent} of $u$, denoted by $\parent{\pi}{u}$, as its neighbor in $\flippedset_\pi$ (which exists by observation above) with the lowest rank, i.e., $\parent{\pi}{u} = \argmin_{w \in N(u) \cap \flippedset_\pi} \pi(w)$. Furthermore, we define the {\em propagation path} $\path{\pi}{u}$ of each vertex $u \in \affectedset_\pi$ as:
$$
\path{\pi}{u} = \begin{cases}
	(v) & \text{if $u = v$,}\\
	(\path{\pi}{\parent{\pi}{u}}, u) & \text{otherwise.}
\end{cases}
$$
With a slight abuse of notation, $\path{\pi}{u}$ can be denoted by a sequence $(w_1, \ldots, w_k)$ where $w_1 = v$, $w_k = u$, and for every $i \in [k-1]$, $w_i = \parent{\pi}{w_{i+1}}$. Note that this sequence is a valid path of the graph because by definition each vertex is a neighbor of its parent and $\pi(\parent{\pi}{u})$ is strictly smaller than $\pi(u)$ by Observation~\ref{obs:everyvertexhasneighborinF}, thus, no vertex can be visited twice in the sequence. Furthermore, $w_1 = v$ because every vertex $w \in \affectedset_\pi$ has a parent $\parent{\pi}{w}$ except $v$.

\begin{claim}\label{cl:alternateinMIS}
	Fix an arbitrary permutation $\pi$, an arbitrary vertex $u \in \affectedset_\pi$, and let $P_\pi(u) = (w_1, \ldots, w_k)$. For odd $i \in [k-1]$, $w_i \in \lfmis{G, \pi}$ and for even $i \in [k-1]$, $w_i \not\in \lfmis{G, \pi}$.
\end{claim}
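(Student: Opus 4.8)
The plan is to prove Claim~\ref{cl:alternateinMIS} by a straightforward induction on the index $i$ along the propagation path $\path{\pi}{u}=(w_1,\dots,w_k)$, using three ingredients: the base case $w_1=v\in I$, which is handed to us by the first part of Observation~\ref{obs:everyvertexhasneighborinF} (applicable since $u\in\affectedset_\pi$, so $\affectedset_\pi\neq\emptyset$); the structural facts about the path itself; and the plain fact that an independent set contains no edge.

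First I would record the two structural facts, both immediate from the recursive definitions of $\path{\pi}{\cdot}$ and $\parent{\pi}{\cdot}$: for every $j\in[k-1]$ we have $w_j=\parent{\pi}{w_{j+1}}$, hence (i) $w_j\in\flippedset_\pi$, i.e.\ $w_j$ lies in exactly one of $I=\lfmis{G,\pi}$ and $I'=\lfmis{G',\pi}$, and (ii) $w_j$ is a neighbor of $w_{j+1}$. I would also dispose of the degenerate case $k=1$, where $u=v$ and $[k-1]=\emptyset$, so there is nothing to prove; from now on $k\geq 2$.

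The induction then runs as follows. For the base case, $w_1=v\in I$ by Observation~\ref{obs:everyvertexhasneighborinF}, and $1$ is odd, as required. For the inductive step, fix $i$ with $1\leq i\leq k-2$ and assume the claim holds for $w_i$. If $i$ is odd then $w_i\in I$; since $w_{i+1}$ is adjacent to $w_i$ and $I$ is independent, $w_{i+1}\notin I$, which is exactly the claim for the even index $i+1$. If $i$ is even then $w_i\notin I$, and because $w_i\in\flippedset_\pi$ this forces $w_i\in I'$; since $w_{i+1}$ is adjacent to $w_i$ and $I'$ is independent, $w_{i+1}\notin I'$; finally, since $i+1\leq k-1$ we have $w_{i+1}\in\flippedset_\pi$, so being outside $I'$ it must lie in $I$, which is the claim for the odd index $i+1$. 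This closes the induction and proves the claim.

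There is no genuinely hard step here; the one point that requires care is the bookkeeping on which vertices are guaranteed to be \emph{flipped}, namely $w_1,\dots,w_{k-1}$ but not necessarily $w_k=u$: the implications ``$w_i\notin I\Rightarrow w_i\in I'$'' and, in the next round, ``$w_{i+1}\notin I'\Rightarrow w_{i+1}\in I$'' both rely on membership in $\flippedset_\pi$, and it is exactly the indices in $[k-1]$ for which this holds — which is also exactly the range over which the claim is stated.
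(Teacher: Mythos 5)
Your proof is correct and follows essentially the same route as the paper: the base case $w_1=v\in\lfmis{G,\pi}$ from Observation~\ref{obs:everyvertexhasneighborinF}, and then the alternation along the path via independence of $\lfmis{G,\pi}$ in one direction and the flipped status of the parents $w_1,\dots,w_{k-1}$ plus independence of $\lfmis{G',\pi}$ in the other. The paper phrases the step as ``exactly one of $w_i,w_{i+1}$ lies in the MIS'' for $i\in[k-2]$ rather than as an explicit two-case induction, but the content is identical.
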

\begin{proof}
	Since $u \in \affectedset_\pi$ and thus $\affectedset_\pi \not= \emptyset$, we already know from Observation~\ref{obs:everyvertexhasneighborinF} that vertex $v = w_1$ has to belong to \lfmis{G, \pi}, proving the claim for $i=1$. To complete the proof, we show that for any $i \in [k-2]$, exactly one of $w_i$ and $w_{i+1}$ is in $\lfmis{G, \pi}$.
	
	First, observe that since $\lfmis{G, \pi}$ is an independent set, no two adjacent vertices can belong to it. Therefore, we only have to show that for any $i \in [k-2]$, it cannot be the case that neither of $w_i$ and $w_{i-1}$ are in \lfmis{G, \pi}. Suppose for contradiction that this holds. By definition of propagation paths, and since $i \in [k-2]$, we get that $w_i$ is the parent of $w_{i+1}$ and $w_{i+1}$ is the parent of $w_{i+2}$. Every vertex that is a parent of another vertex has to be in $\flippedset_\pi$ by definition. Therefore, both $w_i$ and $w_{i+1}$ belong to $\flippedset_\pi$. Combined with the assumption that neither of $w_i$ and $w_{i+1}$ are in \lfmis{G, \pi}, both have to belong to \lfmis{G', \pi} (by definition of $\flippedset_\pi$) which cannot be possible since \lfmis{G', \pi} is also an independent set.
\end{proof}

Let $\Pi$ denote the set of all permutations over $V$. We say a permutation $\pi \in \Pi$ is {\em unlikely}, if for some vertex $u \in V$, $|P_\pi(u)| > \beta \log n$ where $\beta$ is a constant that we fix later, and {\em likely} otherwise. Denoting the set of likely and unlikely permutations by $\Pi_L$ and $\Pi_U$ respectively, we have
\begin{equation}\label{eq:832921234}
\E_\pi[|\affectedset_\pi|] = \Pr[\pi \in \Pi_L] \cdot \E_{\pi \sim \Pi_L}[|\affectedset_\pi|] + \Pr[\pi \in \Pi_U] \cdot \E_{\pi \sim \Pi_U}[|\affectedset_\pi|].
\end{equation}
We prove $\E_\pi[|\affectedset_\pi|] = O(\log n)$ by bounding the two terms in (\ref{eq:832921234}) individually.

\begin{lemma}[\cite{DBLP:conf/spaa/BlellochFS12,DBLP:conf/soda/FischerN18}]\label{lem:unlikeliesareunlikely}
	If $\beta$ is a large enough constant, $\Pr[\pi \in \Pi_U] \leq n^{-2}$.
\end{lemma}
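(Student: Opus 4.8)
The statement is attributed to \cite{DBLP:conf/spaa/BlellochFS12,DBLP:conf/soda/FischerN18}, so the plan is to reduce $\pi\in\Pi_U$ to the event that the randomized greedy MIS on $G$ has a long ``dependency chain'', and then invoke the known $O(\log n)$ bound on such chains. Concretely: if $\pi\in\Pi_U$ then some $u\in\affectedset_\pi$ has a propagation path $P_\pi(u)=(w_1,\dots,w_k)$ with $k>\beta\log n$. By construction this is a path of $G$ with $w_1=v$ and $\pi(w_1)<\pi(w_2)<\cdots<\pi(w_k)$, and by Claim~\ref{cl:alternateinMIS} the vertices alternate membership in $I:=\lfmis{G,\pi}$: $w_i\in I$ for odd $i\le k-1$ and $w_i\notin I$ for even $i\le k-1$. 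Hence
$$
\Pi_U\ \subseteq\ \bigl\{\pi:\ G\text{ has a path }w_1=v,w_2,\dots,w_k,\ k>\beta\log n,\ \pi(w_1)<\cdots<\pi(w_k),\ \text{alternating w.r.t. }I\bigr\},
$$
and the right-hand event depends only on $(G,\pi)$ — the auxiliary graph $G'$ drops out entirely. (If $\affectedset_\pi=\emptyset$, e.g.\ when $v\notin I$, there is nothing to bound, by Observation~\ref{obs:everyvertexhasneighborinF}.) Such an alternating monotone-rank path is exactly a dependency chain of length $\Theta(k)$ in the greedy process: $w_{2i}\notin I$ is forced once $w_{2i-1}\in I$ is adjacent to it with smaller rank, and $w_{2i+1}$'s fate is determined only after $w_{2i}$'s, whose fate is determined only after $w_{2i-1}$'s, so the chain of length-$2$ hops $w_1\to w_3\to\cdots$ consists of $\gtrsim \tfrac12\beta\log n$ MIS vertices of strictly increasing rank.

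\textbf{Invoking the known bound.} The results of \cite{DBLP:conf/spaa/BlellochFS12,DBLP:conf/soda/FischerN18} on randomized greedy MIS state that, w.h.p.\ over $\pi$, every dependency chain in an $n$-vertex graph has length $O(\log n)$; equivalently the number of rounds of parallel greedy MIS is $O(\log n)$ w.h.p. Since ``w.h.p.'' there can be taken with any constant exponent, there is a constant $\beta$ for which the probability that some dependency chain exceeds $\beta\log n$ is at most $n^{-2}$. By the containment above this bounds $\Pr[\pi\in\Pi_U]$. The small bridging steps to verify are: (i) that a propagation path of length $k$ in our sense really yields a dependency chain of length $\Omega(k)$ in whichever equivalent formulation the cited papers use; (ii) that the union over the endpoint $u$ (and in fact over all starting vertices) is already absorbed, because the cited bound holds simultaneously for the whole graph; and (iii) handling the degenerate cases ($\affectedset_\pi=\emptyset$, or $k\le 2$).

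\textbf{Main obstacle.} The delicate point — and the reason this is quoted rather than reproved — is that one cannot get $\Pr[\pi\in\Pi_U]\le n^{-2}$ by a naive union bound over the $\le n^{k}$ candidate vertex sequences: the factor $1/k!$ coming from the requirement that ranks increase along the path does not beat $n^{k}$ when $k=\Theta(\log n)$, so the MIS-membership conditions must genuinely be exploited. The cited proofs instead reveal the ranks in increasing order and run a charging/witness argument: each time the chain passes through an MIS vertex it must ``win'' against its entire already-revealed neighborhood, and the residual graph sparsifies as more ranks are revealed (in the spirit of Proposition~\ref{prop:sparseMIS}), so that each additional level of the chain contributes a multiplicative loss which, compounded over $\Omega(\log n)$ levels, pushes the probability below $n^{-2}$. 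If a self-contained proof in our notation were desired, reproducing this tail bound — rather than black-boxing it — would be the technically demanding part; for the purposes of Lemma~\ref{lem:unlikeliesareunlikely} we simply cite it.
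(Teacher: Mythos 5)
Your proposal is correct and follows essentially the same route as the paper: Section~\ref{sec:unlikely} likewise converts a long propagation path into a dependency-path of comparable length (via Claim~\ref{cl:alternateinMIS} and the fact that each vertex's parent is its eliminator, which is your bridging step (i), stated as Observation~\ref{obs:prop-dep}) and then black-boxes the $O(\log n)$ bound on dependency-path lengths from \cite{DBLP:conf/spaa/BlellochFS12,DBLP:conf/soda/FischerN18} with failure probability $n^{-2}$. Your additional remarks on why a naive union bound fails are accurate commentary but not needed for the argument.
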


\begin{lemma}[likely permutations]\label{lem:likely}
	$\E_{\pi \sim \Pi_L}[|\affectedset_\pi|] = O(\log n)$.
\end{lemma}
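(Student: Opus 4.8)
The plan is to bound $\E_{\pi \sim \Pi_L}[|\affectedset_\pi|]$ by a union bound over candidate propagation paths. The key structural facts are already in hand: every $u \in \affectedset_\pi$ has a propagation path $\path{\pi}{u} = (w_1, \dots, w_k)$ with $w_1 = v$, consecutive vertices adjacent, ranks strictly increasing along the path, and — by Claim~\ref{cl:alternateinMIS} — the vertices at odd positions $i \in [k-1]$ lie in $\lfmis{G, \pi}$ while those at even positions do not. For a likely permutation, $k \le \beta \log n$. So I would write
\[
\E_{\pi \sim \Pi_L}[|\affectedset_\pi|] \;\le\; \frac{1}{\Pr[\pi \in \Pi_L]} \sum_{u \in V} \Pr_\pi[\,u \in \affectedset_\pi \text{ and } \pi \in \Pi_L\,] \;\le\; 2 \sum_{u \in V} \Pr_\pi[\,u \in \affectedset_\pi,\ |\path{\pi}{u}| \le \beta\log n\,],
\]
using Lemma~\ref{lem:unlikeliesareunlikely} to say $\Pr[\pi \in \Pi_L] \ge 1 - n^{-2} \ge 1/2$. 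The remaining task is to show that for each fixed $u$, this probability is $O(\log n / n)$, which summed over $n$ vertices gives $O(\log n)$.

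For a fixed target vertex $u$ and a fixed potential path $(w_1 = v, w_2, \dots, w_k = u)$ of length $k \le \beta \log n$ that forms a path in $G$, I would bound the probability that this is exactly the realized propagation path $\path{\pi}{u}$. The event forces a specific alternating membership pattern in $\lfmis{G,\pi}$ along the path together with the parent-selection constraints, and crucially the ranks must satisfy $\pi(w_1) < \pi(w_2) < \cdots < \pi(w_k)$. The monotonicity alone contributes a factor of $1/(k-1)!$ over the relative orderings, but we need something quantitatively sharper because there are up to $\Delta^{k-1}$ paths of a given length and $\Delta$ can be huge. The right way to extract the saving is to expose the ranks of $w_1, \dots, w_k$ one at a time in increasing order and argue that, conditioned on the prefix, the next vertex on the path is "pinned down" with small probability — this is the standard witness-tree / revealing argument underlying Lemma~\ref{lem:unlikeliesareunlikely} and the classical LFMIS analyses of \cite{DBLP:conf/spaa/BlellochFS12, DBLP:conf/soda/FischerN18}. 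Concretely, for $w_i$ to be the parent $\parent{\pi}{w_{i+1}}$ it must be the \emph{lowest-rank} flipped neighbor of $w_{i+1}$, and being flipped means $w_i$'s own eliminator differs between $G$ and $G'$; following this chain back, each such coincidence costs a factor roughly $1/(\text{something})$ that, telescoped along the path, beats the $\Delta^{k-1}$ count.

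The main obstacle — and where I expect the real work to be — is making that per-path probability bound rigorous, i.e. showing $\sum_{\text{paths to } u,\ |path| \le \beta\log n} \Pr[\path{\pi}{u} = \text{this path}] = O(\log n / n)$. The cleanest route is probably to reuse the machinery of Lemma~\ref{lem:unlikeliesareunlikely} rather than re-derive it: that lemma already controls, for a \emph{random} permutation, the probability that \emph{any} vertex has a long propagation path, and its proof implicitly bounds the total "weight" of propagation paths. I would try to set up a charging scheme where $u \in \affectedset_\pi$ is charged to $v = w_1$ through its propagation path, observe that the expected number of vertices reachable from $v$ via such increasing alternating paths of bounded length is $O(\log n)$ (this is essentially a branching-process / influence bound: the expected size of the "dependency ball" of $v$ in random-order greedy MIS is $O(\log n)$, a fact from \cite{DBLP:conf/stoc/YoshidaYI09, Censor-HillelHK16}), and conclude. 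The delicate points are (i) verifying that the alternating structure and parent-minimality are genuinely what the influence/dependency-ball argument controls, and (ii) handling the conditioning on $\pi \in \Pi_L$ without circularity — which is why the factor-2 slack from Lemma~\ref{lem:unlikeliesareunlikely} is introduced up front, so that the likely-permutation restriction only truncates path lengths and never otherwise distorts the measure.
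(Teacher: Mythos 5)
Your reduction in the first display is fine, and your instinct that the whole lemma comes down to bounding $\sum_{u}\Pr_\pi[\,u\in\affectedset_\pi,\ |\path{\pi}{u}|\le\beta\log n\,]$ by $O(\log n)$ is the right way to frame the problem. But that bound is precisely the hard content of the lemma, and your proposal does not actually establish it. You offer two routes, and neither closes the gap. First, you suggest a per-path union bound with a ``revealing'' argument to beat the $\Delta^{k-1}$ count of candidate paths; this is exactly the obstacle you name, and it is not carried out. The events involved (``$w_i$ is the \emph{lowest-rank} flipped neighbor of $w_{i+1}$'' and ``$w_i\in\flippedset_\pi$'') are global functions of the entire permutation, not locally decidable from the ranks exposed so far, so the telescoping conditional-probability computation you sketch is not available off the shelf. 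Second, you propose to import an ``influence / dependency-ball'' bound of $O(\log n)$ from \cite{DBLP:conf/stoc/YoshidaYI09,Censor-HillelHK16}. No such bound exists in the form you need: those works bound the expected number of \emph{status flips} (which is $O(1)$) and the expected query complexity of local simulation, not the expected number of vertices whose \emph{eliminator} changes when a vertex is deleted. The latter is exactly Theorem~\ref{thm:vertexA}, which the paper emphasizes is \emph{not} a simple extension of the classical analyses; citing it as known is circular. Likewise, Lemma~\ref{lem:unlikeliesareunlikely} only controls the \emph{length} of dependency paths, not their \emph{number}: a length bound of $O(\log n)$ is compatible with exponentially many affected vertices if the propagation branches, which it can for MIS.

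The paper's proof takes an entirely different, non-probabilistic route: for each likely $\pi$ and each $u\in\affectedset_\pi$ it constructs a permutation $\varphi_{\pi,u}$ by cyclically rotating the ranks along $\path{\pi}{u}$, shows this map is injective in $u$ for fixed $\pi$ (property~\ref{prop:blamesatleastD}), and then — via the case analysis of Claim~\ref{cl:p2satisfied} resting on the structural properties of Claim~\ref{cl:propofequalperms} — shows that any fixed permutation $\rho$ has at most $\beta\log n$ preimages $(\pi,u)$, because distinct preimages must have propagation paths of distinct lengths, all at most $\beta\log n$. A double count then gives $\sum_{\pi\in\Pi_L}|\affectedset_\pi|\le|\Pi|\beta\log n=O(|\Pi_L|\log n)$. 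If you want to salvage your approach you would need to prove, from scratch, a bound on the expected number of increasing alternating flipped-parent paths emanating from $v$; the rotation-and-count argument is the paper's substitute for exactly that missing estimate.
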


Lemma~\ref{lem:unlikeliesareunlikely} almost directly follows from the earlier works of \cite{DBLP:conf/spaa/BlellochFS12,DBLP:conf/soda/FischerN18} on bounding parallel round complexity of LFMIS over a random permutation; we provide the details in  Section~\ref{sec:unlikely}. Lemma~\ref{lem:likely}, which is proven in Section~\ref{sec:likely}, constitutes the novel part of the proof and is indeed where bulk of the whole analysis is concentrated on. Below, we first show why Lemmas~\ref{lem:likely} and \ref{lem:unlikeliesareunlikely} are sufficient to prove Theorem~\ref{thm:vertexA}.

\begin{proof}[Proof of Theorem~\ref{thm:vertexA}]
	By Lemma~\ref{lem:likely}, we have $\E_{\pi \sim \Pi_L}[|\affectedset_\pi|] = O(\log n)$. Since $\Pr[\pi \in \Pi_L] \leq 1$ for being a probability, we get $\Pr[\pi \in \Pi_L] \cdot \E_{\pi \sim \Pi_L}[|\affectedset_\pi|] \leq O(\log n)$, i.e., the first term in (\ref{eq:832921234}) is bounded by $O(\log n)$. On the other hand, by Lemma~\ref{lem:unlikeliesareunlikely}, we have $\Pr[\pi \in \Pi_U] \leq n^{-2}$. Using this, we can bound the second term in (\ref{eq:832921234}) to be as small as $n^{-1}$ even if $\affectedset_\pi$ includes all $n$ vertices for any $\pi \in \Pi_U$. Therefore overall, we get $\E_\pi[|\affectedset_\pi|] \leq O(\log n) + n^{-1} = O(\log n)$, which is the desired bound.
\end{proof}

\subsection{Handling Likely Permutations: Proof of Lemma~\ref{lem:likely}}\label{sec:likely}

In the rest of this section, we focus on proving Lemma~\ref{lem:likely}. The overall plan is as follows. For each permutation $\pi \in \Pi_L$, we {\em blame} a set of permutations $B(\pi) \subseteq \Pi$ such that:
\begin{enumerate}[label={(P\arabic*)}, ref=P\arabic*, itemsep=0pt]
	\item $|B(\pi)| \geq |\affectedset_\pi|$.\label{prop:blamesatleastD}
	\item For each permutation $\pi' \in \Pi$, there are at most $\beta\log n$ permutations $\pi \in \Pi_L$ where $\pi' \in B(\pi)$.\label{prop:notblamedbymuch}
\end{enumerate}

We first prove that having such blaming sets satisfying properties \ref{prop:blamesatleastD} and \ref{prop:notblamedbymuch} is sufficient for proving Lemma~\ref{lem:likely} and then describe how the blaming sets are constructed.

\begin{proof}[Proof of Lemma~\ref{lem:likely}]
	Defining $X$ as the sum $\sum_{\pi \in \Pi_L} |\affectedset_\pi|$, we have:
	\begin{equation}\label{eq:1000}
	\E_{\pi \sim \Pi_L}[|\affectedset_\pi|] = \sum_{\pi \in \Pi_L} \Pr[\text{drawing  $\pi$} \mid \pi \in \Pi_L] \cdot |\affectedset_\pi| = \frac{1}{|\Pi_L|} \sum_{\pi \in \Pi_L} |\affectedset_\pi|= \frac{X}{|\Pi_L|}.
	\end{equation}
	By property \ref{prop:blamesatleastD}, $|B(\pi)| \geq |\affectedset_\pi|$ for every $\pi \in \Pi_L$. Thus, $$Y := \sum_{\pi \in \Pi_L} |B(\pi)| \geq \sum_{\pi \in \Pi_L} |\affectedset_\pi| = X.$$
	On the other hand, since by property \ref{prop:notblamedbymuch}, each permutation $\pi' \in \Pi$ belongs to $B(\pi)$ of at most $\beta\log n$ other permutations $\pi$, a simple double counting argument gives $Y \leq |\Pi|\beta \log n$; implying also that $X \leq |\Pi|\beta \log n$. Moreover, since $\Pi_L = \Pi \setminus \Pi_U$ and by Lemma~\ref{lem:unlikeliesareunlikely}, $\frac{|\Pi_U|}{|\Pi|} < n^{-2}$, it holds that $\frac{|\Pi_L|}{|\Pi|} > 1- n^{-2}$, thus, $|\Pi| = O(|\Pi_L|)$. For this, $X \leq |\Pi|\beta \log n$ implies $X = O(|\Pi_L| \log n)$. Plugging this into (\ref{eq:1000}), we get $\E_{\pi \sim \Pi_L}[|\affectedset_\pi|] \leq \frac{O(|\Pi_L|\log n)}{|\Pi_L|}  = O(\log n)$ as desired.
\end{proof}

For every permutation $\pi \in \Pi_L$, and each vertex $u \in \affectedset_\pi$, we construct a permutation $\varphi_{\pi, u} \in \Pi$. The blaming set of $\pi$ will then be the set $B(\pi) = \bigcup_{u \in \affectedset_\pi} \{ \varphi_{\pi, u} \}$. For a vertex $u \in \affectedset_\pi$, with $P_\pi(u) = (v=w_1, w_{2}, \ldots, w_k = u)$, we construct permutation $\varphi_{\pi, u}$ as follows:
\begin{highlighttechnicalwhite}
	\begin{itemize}[leftmargin=15pt, itemsep=0pt]
	\item For each vertex $w \not\in P_\pi(u)$, $\varphi_{\pi, u}(w) \gets \pi(w)$.
	\item $\varphi_{\pi, u}(w_1) \gets \pi(w_k)$.
	\item For any $2 \leq i \leq k$, $\varphi_{\pi, u}(w_i) \gets \pi(w_{i-1})$.
\end{itemize}
\end{highlighttechnicalwhite}

In other words, permutation $\varphi_{\pi, u}$ on all vertices outside $P_\pi(u)$ is exactly the same as $\pi$, however for the vertices in $P_\pi(u)$, $\varphi_{\pi, u}$ is obtained by rotating the $\pi$ ranks by one index towards $u$. An example is shown in the figure below.

\begin{figure}[h]
  \centering
  \includegraphics{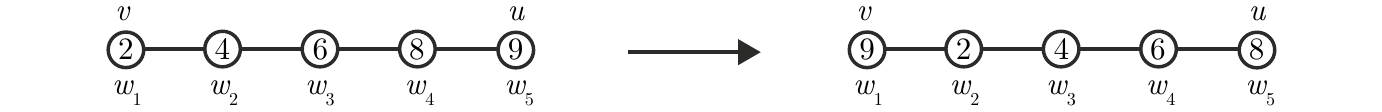}
\end{figure}

As captured by the following observation, it is not hard to show that with this construction, property \ref{prop:blamesatleastD} is indeed satisfied:

\begin{observation}\label{obs:p1satisfied}
	By construction above, property \ref{prop:blamesatleastD} is satisfied.
\end{observation}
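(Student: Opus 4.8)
The plan is to establish Observation~\ref{obs:p1satisfied}, namely that $|B(\pi)| \ge |\affectedset_\pi|$ for every $\pi \in \Pi_L$. Since $B(\pi) = \{\varphi_{\pi, u} \mid u \in \affectedset_\pi\}$ is by definition the \emph{image} of $\affectedset_\pi$ under the map $u \mapsto \varphi_{\pi,u}$, we trivially have $|B(\pi)| \le |\affectedset_\pi|$, so the whole content is to show that this map is \emph{injective} on $\affectedset_\pi$ (which in fact yields equality). If $\affectedset_\pi = \emptyset$ there is nothing to prove; otherwise, by Observation~\ref{obs:everyvertexhasneighborinF} we have $v \in \affectedset_\pi$, and every propagation path $\path{\pi}{u} = (w_1,\ldots,w_k)$ has $w_1 = v$ and $w_k = u$, so the construction of $\varphi_{\pi,u}$ is always a cyclic rotation of the $\pi$-ranks along a path headed by the fixed vertex $v$.

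The key identity I would use is that the rotation defining $\varphi_{\pi,u}$ deposits the $\pi$-rank of $u$ onto $v$: directly from the rule $\varphi_{\pi,u}(w_1) \gets \pi(w_k)$, together with $w_1 = v$ and $w_k = u$, we get $\varphi_{\pi,u}(v) = \pi(u)$. Because $\pi$ is a bijection, this lets us \emph{recover} $u$ from $\varphi_{\pi,u}$ and the fixed data $\pi, v$ via $u = \pi^{-1}\bigl(\varphi_{\pi,u}(v)\bigr)$. Injectivity then follows in one line: if $\varphi_{\pi,u} = \varphi_{\pi,u'}$ for $u,u' \in \affectedset_\pi$, then evaluating both sides at $v$ gives $\pi(u) = \pi(u')$, hence $u = u'$. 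Therefore $|B(\pi)| = |\affectedset_\pi| \ge |\affectedset_\pi|$, which is exactly property~\ref{prop:blamesatleastD}.

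The only points needing a sentence of care are degenerate cases, none of which is a genuine obstacle: when $u = v$ (i.e.\ $k = 1$) the path is just $(v)$ and $\varphi_{\pi,v} = \pi$, and the recovery identity still holds since $\pi^{-1}(\varphi_{\pi,v}(v)) = v$; and the claim that every propagation path is headed by $v$ — needed so that $\varphi_{\pi,u}(v)$ carries the rank of $u$ — is precisely Observation~\ref{obs:everyvertexhasneighborinF} combined with the recursive definition of $\path{\pi}{\cdot}$. In short, the entire proof reduces to spotting the recovery map; I expect no real difficulty here, in contrast to property~\ref{prop:notblamedbymuch}, whose proof is where the structural work (e.g.\ Claim~\ref{cl:alternateinMIS}) will actually be needed.
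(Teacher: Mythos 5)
Your proof is correct and uses exactly the same argument as the paper: both show the map $u \mapsto \varphi_{\pi,u}$ is injective by evaluating at $v$, since $\varphi_{\pi,u}(v) = \pi(u)$ identifies $u$ uniquely. Your framing via the explicit recovery map $u = \pi^{-1}(\varphi_{\pi,u}(v))$ is just a restatement of the paper's observation that $\varphi_{\pi,u}(v) \neq \varphi_{\pi,w}(v)$ for distinct $u,w \in \affectedset_\pi$.
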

\begin{proof}
	The reason is that we indeed construct $|\affectedset_\pi|$ permutations to include in $B(\pi)$: $\varphi_{\pi, u}$ for each $u \in \affectedset_\pi$. Note, however, that we still need to argue that for any two vertices $u$ and $w$ in $\affectedset_\pi$, permutations $\varphi_{\pi, u}$ and $\varphi_{\pi, w}$ are not the same so that the set containing them has size $|\affectedset_\pi|$. This follows because $\varphi_{\pi, w}(v) = \pi(w)$ and $\varphi_{\pi, u}(v) = \pi(u)$ but $\pi(u) \not= \pi(w)$, implying that $\varphi_{\pi, w}(v) \not= \varphi_{\pi, u}(v)$ and thus the two permutations $\varphi_{\pi, w}$ and $\varphi_{\pi, u}$ are not equal.
\end{proof}

The harder part is to show that our construction also satisfies property \ref{prop:notblamedbymuch}:

\begin{claim}\label{cl:p2satisfied}
	By construction above, property~\ref{prop:notblamedbymuch} is also satisfied. 
\end{claim}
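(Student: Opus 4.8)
The plan is to show that the construction is essentially "invertible": given a target permutation $\pi' \in \Pi$, I want to recover all pairs $(\pi, u)$ with $\pi \in \Pi_L$, $u \in \affectedset_\pi$, and $\varphi_{\pi, u} = \pi'$, and argue there are at most $\beta \log n$ of them. The key observation is that the rotation is localized to the propagation path $\path{\pi}{u}$, which starts at the fixed vertex $v$. So if $\varphi_{\pi, u} = \pi'$, then $\pi'$ and $\pi$ differ only on the vertex set $W := \{w_1, \ldots, w_k\} = \path{\pi}{u}$, and on $W$ the permutation $\pi$ is obtained from $\pi'$ by rotating the ranks by one index in the opposite direction (towards $w_1 = v$). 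Thus $\pi$ is completely determined by $\pi'$ together with the ordered path $(w_1, \ldots, w_k)$; and since $w_1 = v$ is fixed, $\pi$ is determined by $\pi'$ and the set-with-order data of the path.

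First I would establish that the path $\path{\pi}{u}$ can be reconstructed from $\pi'$ alone, up to at most $\beta \log n$ choices. The idea: in $\pi'$, vertex $v = w_1$ has received rank $\pi(w_k) = \pi'(w_1)$, which is the \emph{largest} rank along the original path (ranks strictly increase along a propagation path by Observation~\ref{obs:everyvertexhasneighborinF}). I would argue that running the LFMIS process under $\pi'$ reveals the path: the rotation makes $w_1$ behave (under $\pi'$) the way $w_k$ behaved under $\pi$, and each $w_i$ (for $i \ge 2$) behaves the way $w_{i-1}$ did. Using Claim~\ref{cl:alternateinMIS} — which says the path alternates in and out of the MIS — together with the fact that $w_1 = v$ and that consecutive path vertices are adjacent and parent-related, I would show that from $\pi'$ one can identify a candidate for $w_k$ (the vertex that, under $\pi'$, plays the role $v$ played, i.e.\ sits at rank $\pi'(v)$ but in a "flipped" state), then walk backwards along the path via the parent relation. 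The number of possible endpoints / path-lengths is bounded because in a likely permutation every propagation path has length at most $\beta \log n$; this is exactly where the restriction to $\Pi_L$ is used, and it is what caps the blame count.

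The main obstacle — and the delicate part of the argument — is making precise the claim that the path is \emph{recoverable} from $\pi'$: a priori, the rotation could create a permutation $\pi'$ whose own LFMIS structure bears no obvious relation to $\pi$'s, so one must carefully track how the MIS membership and eliminators of path vertices and their neighbors change when we pass from $\pi$ to $\varphi_{\pi,u}$. Concretely, I would prove a structural lemma: under $\varphi_{\pi, u}$, the set $\affectedset_{\varphi_{\pi,u}}(G, G')$ and its propagation structure relative to $\pi$ are controlled, so that $W$ appears as (a prefix of) a propagation path under $\pi'$ as well, traversed in the same order. Once that is in hand, the counting is routine: each $\pi' \in \Pi$ together with a choice of which prefix-length $j \le \beta \log n$ and which reconstructed path yields at most one $(\pi, u)$, and summing over the at most $\beta\log n$ choices gives property~\ref{prop:notblamedbymuch}. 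I expect the bookkeeping around "which vertex of $\pi'$ corresponds to which $w_i$" and the verification that no two distinct $(\pi, u)$ map to the same $\pi'$ for a fixed path-length to be the steps requiring the most care.
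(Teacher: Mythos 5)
There is a genuine gap. Your endgame is the same as the paper's: parametrize the preimages of a fixed $\rho$ by the length of the propagation path, and use the fact that on likely permutations every propagation path has length at most $\beta\log n$ to cap the blame count. You also correctly observe that a pair $(\pi,u)$ is fully determined by $\rho$ together with the ordered path (undo the rotation), so everything reduces to showing that at most one valid path exists per length. But that reduction is exactly the hard part, and your proposal only names it ("I would prove a structural lemma\ldots") without supplying it. Worse, the specific lemma you propose --- that the path $W$ is recoverable from the LFMIS structure of $\rho$ itself, because "the rotation makes $w_1$ behave under $\pi'$ the way $w_k$ behaved under $\pi$" --- is not obviously true and is not what the paper proves. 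The LFMIS is a global object: rotating the ranks on $W$ can flip the status of vertices off $W$, which can feed back and change which vertices on $W$ are in the MIS under $\rho$, so there is no a priori guarantee that $W$ reappears as a propagation path (or any recognizable structure) under $\rho$. Your counting step also quietly assumes uniqueness of the reconstructed path for each length ("which reconstructed path yields at most one $(\pi,u)$"); without uniqueness the bound would be $\beta\log n$ times the number of candidate paths per length.

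The paper avoids analyzing $\rho$ altogether. It takes two arbitrary preimages $(\pi,u)$ and $(\pi',u')$ with $\varphi_{\pi,u}=\varphi_{\pi',u'}$, looks at the \emph{branching vertex} $w_j$ where their propagation paths first diverge, and proves (Claim~\ref{cl:propofequalperms}) a list of rank identities forced by the equality of the two rotated permutations --- in particular that $\pi$ and $\pi'$ agree below rank $\pi(w_j)$ and that $w_{j+1}\in\lfmis{G,\pi'}$. From these it derives contradictions in the two scenarios where neither endpoint equals $w_j$ (using the alternation of MIS membership along paths, Claim~\ref{cl:alternateinMIS}, and the minimality in the definition of the parent), concluding that one path must be a prefix of the other; hence distinct preimages have distinct lengths. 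To repair your argument you would either need to carry out this pairwise comparison, or actually prove your recoverability lemma --- and the latter requires controlling the LFMIS under $\rho$ on and off the path, which is at least as delicate as what the paper does and is currently entirely missing.
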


Suppose that a permutation $\rho$ is blamed by permutations $\pi$ and $\pi'$, i.e., $\rho \in B(\pi) \cap B(\pi')$. This means that there should exist vertices $u \in \affectedset_\pi$ and $u' \in \affectedset_{\pi'}$ where $\varphi_{\pi, u} = \varphi_{\pi', u'} = \rho$. To prove Claim~\ref{cl:p2satisfied}, we analyze the circumstances under which this may occur. Consider the propagation paths $P_\pi(u) = (w_1, w_{2}, \ldots, w_k)$ and $P_{\pi'}(u') = (w'_{1}, w'_{2}, \ldots, w'_{k'})$ and recall that $w_k = u$, $w'_{k'} = u'$ and $w_1 = w'_1 = v$. Let $j$ be the largest integer where for any $i \in \{1, \ldots, j\}$, we have $w_i = w'_i$. Note that clearly $j \geq 1$ since $w'_1 = w_1 = v$. We call $w_j$ (or equivalently $w'_j$) {\em the branching vertex} and analyze the following scenarios which cover all possibilities individually (see Figure~\ref{fig:branchingvertex}):
\begin{itemize}
	\item \textbf{Scenario 1:} $j$ is odd, $w_j \not= u$, and $w_j \not= u'$.
	\item \textbf{Scenario 2:} $j$ is even, $w_j \not= u$, and $w_j \not= u'$.
	\item \textbf{Scenario 3:} at least one of $u$ or $u'$ is the same as $w_j$.
\end{itemize}

\begin{figure}[h]
  \includegraphics[width=\textwidth]{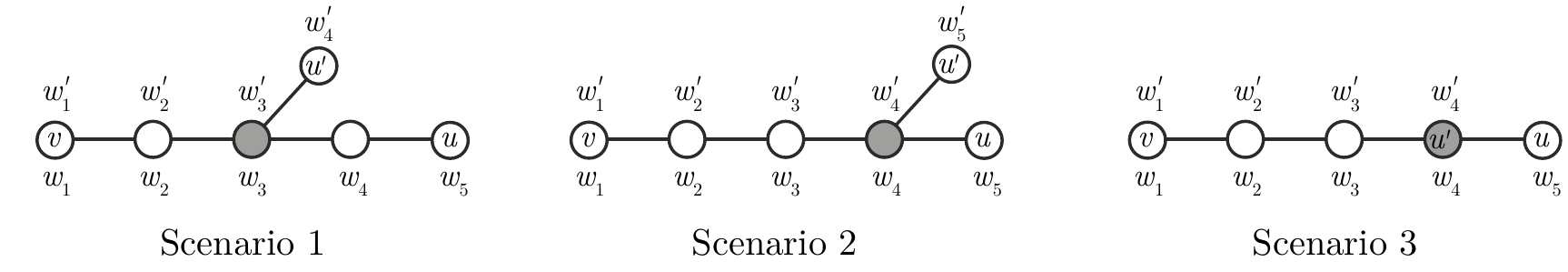}
  \caption{The grey vertex in each scenario, denotes the corresponding branching vertex $w_j$.}
  \label{fig:branchingvertex}
\end{figure}


The claim below unveils several important structural properties of propagation paths and will be our main tool to prove Claim~\ref{cl:p2satisfied}. See Figure~\ref{fig:props} for an illustration of some of these properties.

\begin{claim}\label{cl:propofequalperms}
	Consider two different permutations $\pi$ and $\pi'$ in $\Pi_L$ and two (possibly the same) vertices $u$ and $u'$.	Let $w_j$ be the branching vertex for propagation paths $P_\pi(u)=(w_1, \ldots, w_k)$ and $P_{\pi'}(u') = (w'_{1}, \ldots, w'_{k'})$. If $\varphi_{\pi, u} = \varphi_{\pi',u'}$ and $\pi'(w_j) \geq \pi(w_j)$, then:
	\begin{enumerate}[itemsep=0pt,topsep=5pt]
		\item for every vertex $w$ that does not belong to either of $P_\pi(u)$ and $P_{\pi'}(u')$, $\pi(w)=\pi'(w)$.\label{prop:alloutsideequalrank}
		\item $\pi(w) = \pi'(w)$ for every vertex $w$ with $\pi(w) < \pi(w_j)$.\label{prop:allbelowpiwjareequal}
		\item $\pi(w_k) = \pi'(w'_{k'})$.\label{prop:piuandpipupareeq}
		\item $k \geq j+1$ (i.e., vertex $w_{j+1}$ should exist) and $\pi'(w_{j+1}) = \pi(w_j)$.\label{prop:wjp1existsandetc}
		\item $w_{j+1} \in \lfmis{G, \pi'}$.\label{prop:wjp1inmis}
	\end{enumerate}
\end{claim}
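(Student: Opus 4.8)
The plan is to prove the five items essentially in the order \ref{prop:alloutsideequalrank}, \ref{prop:piuandpipupareeq}, \ref{prop:allbelowpiwjareequal}, \ref{prop:wjp1existsandetc}, \ref{prop:wjp1inmis}, since each leans on the earlier ones. Write $\rho := \varphi_{\pi,u} = \varphi_{\pi',u'}$, $P_\pi(u) = (w_1,\ldots,w_k)$, $P_{\pi'}(u') = (w'_1,\ldots,w'_{k'})$, recall $w_1 = w'_1 = v$, that ranks strictly increase along a propagation path, and that $w_{i-1} = \parent{\pi}{w_i}$ is the lowest-$\pi$-rank neighbor of $w_i$ in $\flippedset_\pi$. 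Items \ref{prop:alloutsideequalrank} and \ref{prop:piuandpipupareeq} are immediate from the definition of the two rotations: for $w \notin P_\pi(u) \cup P_{\pi'}(u')$ we get $\pi(w) = \rho(w) = \pi'(w)$, and evaluating $\rho$ at $v = w_1 = w'_1$ gives $\pi(w_k) = \rho(v) = \pi'(w'_{k'})$.

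For \ref{prop:allbelowpiwjareequal} I would first use the rotation identity on the common prefix: for $2 \le i \le j$, $w_i = w'_i$ so $\pi(w_{i-1}) = \rho(w_i) = \pi'(w_{i-1})$, hence $\pi$ and $\pi'$ agree on $w_1,\ldots,w_{j-1}$, which (since ranks increase) are exactly the vertices of $P_\pi(u)$ of $\pi$-rank $< \pi(w_j)$. For a vertex $w$ with $\pi(w) < \pi(w_j)$ lying off $P_\pi(u)$, I claim it also lies off $P_{\pi'}(u')$: otherwise $w = w'_m$ with $m > j$ (the common prefix rules out $m \le j$), and then $\rho(w) = \pi'(w'_{m-1}) \ge \pi'(w'_j) = \pi'(w_j) \ge \pi(w_j)$, using the hypothesis, which contradicts $\rho(w) = \pi(w) < \pi(w_j)$; so $\pi(w) = \rho(w) = \pi'(w)$. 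As an immediate corollary, with $t := \pi(w_j)$ the sets $\{w : \pi(w) < t\}$ and $\{w : \pi'(w) < t\}$ coincide (one contains the other and both have size $t-1$); call this $L$, on which $\pi = \pi'$, and consequently $\lfmis{G,\pi} \cap L = \lfmis{G,\pi'} \cap L$ because the greedy process below rank $t$ reads only $G[L]$ and the ranks on $L$.

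For \ref{prop:wjp1existsandetc}, if $k = j$ then $u = w_j = w'_j$, and $\pi(w_j) = \rho(v) = \pi'(w'_{k'}) \ge \pi'(w'_j) = \pi'(w_j) \ge \pi(w_j)$ collapses everything, giving $k' = j$ and $\pi'(w_j) = \pi(w_j)$; then both propagation paths equal $(w_1,\ldots,w_j)$ and matching $\rho$ on and off that path forces $\pi = \pi'$, contradicting distinctness — so $w_{j+1}$ exists. Next, $w_{j+1} \notin P_{\pi'}(u')$: if $w_{j+1} = w'_m$ then $m \ge j+2$ (it is not on the common prefix, and $m = j+1$ would extend it), so $\pi(w_j) = \rho(w_{j+1}) = \pi'(w'_{m-1}) > \pi'(w'_j) = \pi'(w_j) \ge \pi(w_j)$, impossible. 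Hence $\pi'(w_{j+1}) = \rho(w_{j+1}) = \pi(w_j) = t$.

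Item \ref{prop:wjp1inmis} is where the real work is, and I expect it to be the main obstacle. Since $\pi'(w_{j+1}) = t$ and $L = \{w : \pi'(w) < t\}$, every neighbor of $w_{j+1}$ that the greedy $(G,\pi')$ process sees before $w_{j+1}$ lies in $L$, so $w_{j+1} \in \lfmis{G,\pi'}$ iff $N(w_{j+1}) \cap \lfmis{G,\pi'} \cap L = \emptyset$, and by the corollary this set equals $N(w_{j+1}) \cap \lfmis{G,\pi} \cap L$. Suppose it contains some $y$. Then $\pi(y) < t = \pi(w_j)$, so $y$ is a neighbor of $w_{j+1}$ of $\pi$-rank below that of $\parent{\pi}{w_{j+1}} = w_j$, hence $y \notin \flippedset_\pi$; thus $y$ has the same MIS-status in $G$ and in $G' = G[V \setminus \{v\}]$, in particular $y \in \lfmis{G',\pi}$. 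Now $y$ is a common lower-rank MIS-neighbor of $w_{j+1}$ in both graphs, so $\eliminator{G}{\pi}{w_{j+1}}$ and $\eliminator{G'}{\pi}{w_{j+1}}$ both have $\pi$-rank $\le \pi(y) < t$; and every neighbor of $w_{j+1}$ of $\pi$-rank $< t$ is outside $\flippedset_\pi$ and so lies in $\lfmis{G,\pi}$ exactly when it lies in $\lfmis{G',\pi}$. Therefore both eliminators equal the minimum-$\pi$-rank vertex of the single set $\{z \in N(w_{j+1}) : \pi(z) < t,\ z \in \lfmis{G,\pi}\}$, so they are equal — contradicting $w_{j+1} \in \affectedset_\pi$ (every vertex of $P_\pi(u)$ is affected). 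The one subtlety to watch, when $j \ge 2$, is that $v$ itself lies in $L$; this causes no trouble, because $v \in \flippedset_\pi$ and the parent condition already keeps every flipped vertex, $v$ included, off the low-rank neighborhood of $w_{j+1}$.
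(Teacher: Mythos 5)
Your proposal is correct and follows essentially the same route as the paper: parts 1 and 3 from the definition of the rotation, part 2 by the same three-way case split (common prefix via the rotation identity, and showing a low-rank vertex cannot lie only on $P_{\pi'}(u')$ because of $\pi'(w_j)\ge\pi(w_j)$), part 4 by collapsing $j=k=k'$ to $\pi=\pi'$, and part 5 by exploiting that $\pi$ and $\pi'$ agree below rank $\pi(w_j)$ together with the fact that $w_j=\parent{\pi}{w_{j+1}}$ forbids flipped neighbors of lower rank. Your part 5 is a slightly streamlined reorganization (arguing directly that the eliminator-determining sets in $G$ and $G'$ coincide rather than case-splitting on the two eliminators as the paper does), and your justification that $w_{j+1}\notin P_{\pi'}(u')$ in part 4 is in fact more careful than the paper's parenthetical; both are fine.
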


\begin{figure}[h]
  \centering
  \includegraphics[scale=1]{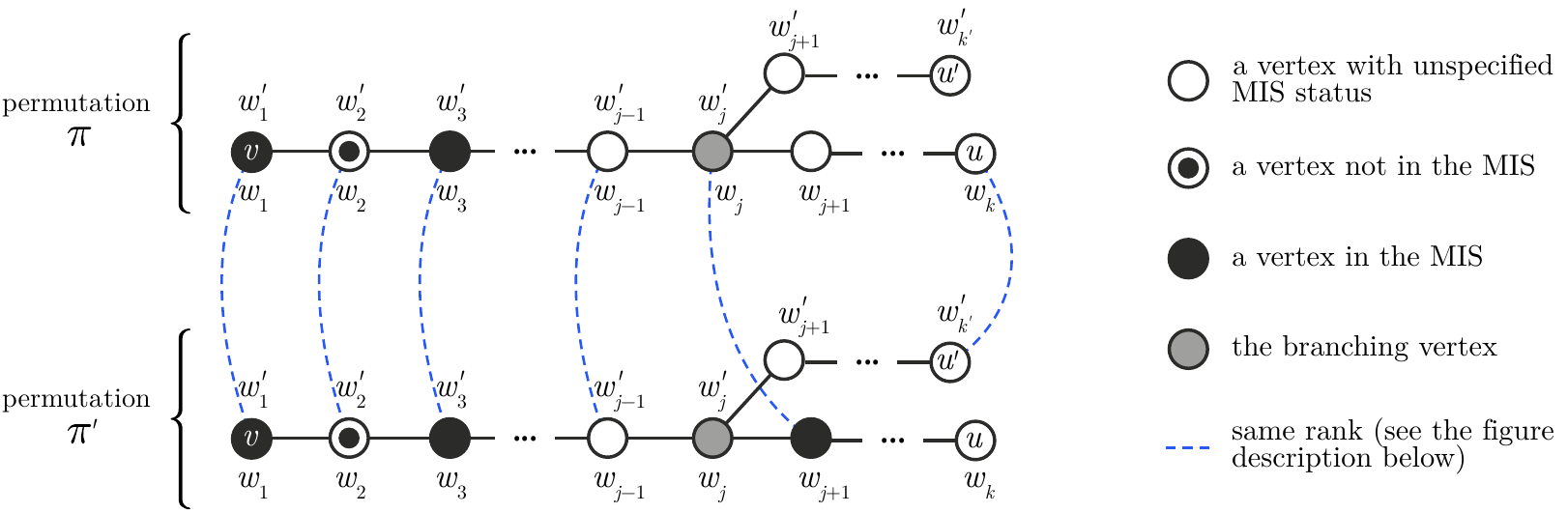}
  \caption{Illustration of some of the properties obtained from Claims~\ref{cl:alternateinMIS} and \ref{cl:propofequalperms} for vertices in $P_\pi(u)=(w_1, \ldots, w_k)$ and $P_{\pi'}(u')=(w'_1, \ldots, w'_{k'})$ given that $\varphi_{\pi, u} = \varphi_{\pi',u'}$ and $\pi'(w_j) \geq \pi(w_j)$ where $w_j$ is the branching vertex. A dashed line between vertices $x$ on the top and $y$ on the bottom implies $\pi(x)=\pi'(y)$. Note that for the illustration purpose, this figure models only scenarios 1 and 2; however, Claims~\ref{cl:alternateinMIS} and \ref{cl:propofequalperms} are general and hold for all three scenarios.}
  \label{fig:props}
\end{figure}

We first show how these properties can be used to prove Claim~\ref{cl:p2satisfied}, then prove Claim~\ref{cl:propofequalperms}.

\begin{proof}[Proof of Claim~\ref{cl:p2satisfied}] 
Suppose that a permutation $\rho$ is blamed by two permutations $\pi$ and $\pi'$, and let $u$ and $u'$ be the vertices where $\varphi_{\pi, u} = \varphi_{\pi',u'}$ (we note that $u$ and $u'$ may be the same vertex). We show that if these assumptions hold, then scenarios 1 and 2 defined above would lead to contradictions, implying that scenario 3 is the only case for which this may occur. To show this, we assume w.l.o.g. that $\pi'(w_j) \geq \pi(w_j)$ so that all conditions of Claim~\ref{cl:propofequalperms} are satisfied.
\begin{description}
	\item[Scenario 1.] Since in this scenario $w_j \not= u'$, we get $j < k'$; more precisely, $j \in [k'-1]$. Furthermore, recall that $j$ is assumed to be odd in scenario 1. Combining the two conditions,  Claim~\ref{cl:alternateinMIS} implies $w_j \in \lfmis{G, \pi'}$. On the other hand, by Claim~\ref{cl:propofequalperms} part \ref{prop:wjp1inmis}, $w_{j+1} \in \lfmis{G, \pi'}$. However, this is a contradiction since by definition of $P_\pi(u)$, $w_j = \parent{\pi}{w_{j+1}}$ and thus $w_j$ and $w_{j+1}$ are neighbors; meaning that they both cannot belong to independent set $\lfmis{G, \pi'}$.
	\item[Scenario 2.] The assumption $w_j \not= u'$ implies that there is a vertex $w'_{j+1}$ and that $w_j = \parent{\pi'}{w'_{j+1}}$; thus by definition of $\parent{\pi'}{w'_{j+1}}$, $w_j \in \flippedset_{\pi'}$. It also implies that $j \in [k'-1]$ (as argued in scenario 1). But since $j$ is even in this scenario, Claim~\ref{cl:alternateinMIS} implies $w_j \not\in \lfmis{G, \pi'}$. Let us use $H$ to denote the graph $G[V \setminus \{v\}]$ obtained by removing vertex $v$ from $G$. Recall that by definition, $w_j$ is in $F_{\pi'}$ iff its MIS-status is different in $\lfmis{G, \pi'}$ and $\lfmis{H, \pi'}$. Therefore, since $w_j \not\in \lfmis{G, \pi'}$ we have to have $w_j \in \lfmis{H, \pi'}$. This also implies that $w_{j+1} \not\in \lfmis{H, \pi'}$ since as argued in scenario 1, $w_j$ and $w_{j+1}$ are neighbors in $G$ and thus $H$. On the other hand, similar to scenario 1, we should have $w_{j+1} \in \lfmis{G, \pi'}$ by Claim~\ref{cl:propofequalperms} part~\ref{prop:wjp1inmis}. Therefore, since $w_{j+1}$ has a different MIS-status in $\lfmis{G, \pi'}$ and $\lfmis{H, \pi'}$, we have $w_{j+1} \in \flippedset_{\pi'}$. By definition, $\flippedset_{\pi'} \subseteq \affectedset_{\pi'}$ thus by Observation~\ref{obs:everyvertexhasneighborinF},
	\begin{equation*}
		\text{ there exists a vertex } x \in N_H(w_{j+1}) \text{ such that } x \in \flippedset_{\pi'} \text{ and } \pi'(x) < \pi'(w_{j+1}).
	\end{equation*}
	Furthermore, by Claim~\ref{cl:propofequalperms} part \ref{prop:wjp1existsandetc}, $\pi'(w_{j+1}) = \pi(w_j)$; combined with $\pi'(x) < \pi'(w_{j+1})$, this implies $\pi'(x) < \pi(w_j)$. Observe that by Claim~\ref{cl:propofequalperms} part \ref{prop:allbelowpiwjareequal} the two permutations $\pi$ and $\pi'$ are exactly the same on the set of vertices with rank less than $\pi(w_j)$. Therefore, $x \in \lfmis{G, \pi'}$ iff $x \in \lfmis{G, \pi}$, and $x \in \lfmis{H, \pi'}$ iff $x \in \lfmis{H, \pi}$. As a result, $x \in \flippedset_{\pi'}$ implies that $x \in \flippedset_\pi$.
	
	Finally, recall that $\parent{\pi}{w_{j+1}}$ is by definition the lowest-rank neighbor of $w_{j+1}$ in $\flippedset_\pi$. Therefore, since $x \in \flippedset_\pi$ and $\pi(x) < \pi(w_j)$, we have $\parent{\pi}{w_{j+1}} \not= w_j$. This contradicts the definition of $P_\pi(u)$ which guarantees $w_j = \parent{\pi}{w_{j+1}}$.
\end{description}
As shown above, the only case for which we might get $\varphi_{\pi,u} = \varphi_{\pi',u'}$ is scenario 3 as the other two scenarios lead to contradictions. We now show that because of the very specific structure of scenario 3, each permutation is blamed by at most $\beta \log n$ permutations.

Fix a permutation $\rho$ and let $C_\rho$ be a set that includes every pair $(\pi, u)$ with $\pi \in \Pi_L$ and $u \in V$ for which $\varphi_{\pi, u} = \rho$. Clearly, $|C_\rho|$ is an upper bound on the number of permutations that blame $\rho$, thus it suffices to bound $|C_\rho|$ by $\beta \log n$.

First, we show that for any two different pairs $(\pi, u)$ and $(\pi', u')$ in $C_\rho$, we have $|P_\pi(u)| \not= |P_{\pi'}(u')|$. Suppose for the sake of contradiction that $|P_\pi(u)| = |P_{\pi'}(u')|$. Let $P_\pi(u) = (w_1, \ldots, w_k)$ and $P_{\pi'}(u') = (w'_1, \ldots, w'_{k})$ be the vertices in the two paths and let $w_j$ be the branching vertex. We know that $\varphi_{\pi, u} = \varphi_{\pi',u'} = \rho$ since the pairs belong to $C_\rho$. Therefore, scenario 3 has to occur and thus either $w_j = u'$ or $w_j = u$. In either case, we get $j = k$ since $u = w_k$ and $u' = w'_k$. Furthermore, by definition of the branching vertex we have $w_i = w'_i$ for any $i \in [j]$. Moreover, by Claim~\ref{cl:propofequalperms} parts~\ref{prop:allbelowpiwjareequal} and \ref{prop:piuandpipupareeq}, for any $i \in [j]$, we have $\pi(w_i)=\pi'(w'_i)$. Meaning that the set of vertices and their ranks in the two permutations are exactly the same on the propagation paths. On the other hand, for any vertex $x$ that does not belong to the propagation paths, we also have $\pi(x) = \pi'(x)$ due to Claim~\ref{cl:propofequalperms} part~\ref{prop:alloutsideequalrank}. Combining these, we get that $\pi = \pi'$. We also showed that $w_k = w'_k$ and thus $u = u'$. Therefore, the two pairs $(\pi, u)$ and $(\pi', u')$ are identical, which is in contradiction with our initial assumption that they are different.

Now we show that $|C_\rho| \leq \beta\log n$. Suppose for contradiction that there are at least $\beta\log n + 1$ pairs in $C_\rho$. As shown in the previous paragraph, for each pair $(\pi, u) \in C_\rho$, $|P_\pi(u)|$ is unique. Therefore, if $|C_\rho| > \beta\log n$, there should be at least a pair $(\pi, u)$ with $|P_\pi(u)| \geq \beta\log n + 1$. However, by definition, the propagation-path of every vertex in every permutation $\pi \in \Pi_L$, has size at most $\beta\log n$ which is a contradiction. Therefore, $|C_\rho| \leq \beta\log n$ for any permutation $\rho$, thus every permutation $\rho$ is blamed by at most $\beta\log n$ other permutations. This means that property \ref{prop:notblamedbymuch} is also satisfied by our mapping, as desired.
\end{proof}

\subsection{The Mapping's Structural Properties: Proof of Claim~\ref{cl:propofequalperms}}
In what follows, we prove the parts of Claim~\ref{cl:propofequalperms} one by one. Note that the proof of each part may depend on the correctness of the previous parts.

\begin{proof}[Proof of Claim~\ref{cl:propofequalperms} part~\ref{prop:alloutsideequalrank}]
	For any vertex $w$ that does not belong to the propagation paths $P_\pi(u)$ and $P_{\pi'}(u')$, we have $\varphi_{\pi, u}(w) = \pi(w)$ and $\varphi_{\pi',u'}(w) = \pi'(w)$ by construction of  permutations $\varphi_{\pi,u}$ and $\varphi_{\pi',u'}$; hence, to have $\varphi_{\pi, u} = \varphi_{\pi', u'}$ it should hold that $\pi(w) = \pi'(w)$.
\end{proof}
	
\begin{proof}[Proof of Claim~\ref{cl:propofequalperms} part~\ref{prop:allbelowpiwjareequal}.] Consider a vertex $w$ with $\pi(w) < \pi(w_j)$, we prove $\pi(w)=\pi'(w)$.
\begin{description}[itemsep=0pt]
	\item[Case 1: $w \not\in P_\pi(u)$ and $w \not\in P_{\pi'}(u')$.]
		In this case, by Claim~\ref{cl:propofequalperms} part~\ref{prop:alloutsideequalrank} we have $\pi(w) = \pi'(w)$.
	\item[Case 2: $w \in P_\pi(u)$.]
	By definition of propagation-path $P_{\pi}(u)$, we have $\pi(w_1) < \ldots < \pi(w_k)$. Therefore, since $w \in P_\pi(u)$ and $\pi(w) < \pi(w_j)$, we should have $w = w_i$ for some $i < j$. Since $w_j$ is the branching vertex, this means $w_{i+1} = w'_{i+1}$ and $w_i = w'_i$. By construction of $\varphi_{\pi,u}$ and $\varphi_{\pi',u'}$, we have $\varphi_{\pi,u}(w_{i+1}) = \pi(w_i)$ and $\varphi_{\pi',u'}(w'_{i+1}) = \pi'(w'_i)$. Combined with $w_{i+1} = w'_{i+1}$, $w_i = w'_i$, and $\varphi_{\pi,u} = \varphi_{\pi',u'}$, this means $\pi(w_i) = \pi'(w_i)$.
	\item[Case 3: $w \not\in P_\pi(u)$ and $w \in P_{\pi'}(u')$.]
	We show that it is essentially impossible to satisfy the property's condition $\pi(w) < \pi(w_j)$ in this case, implying that the property holds automatically.	First, observe that $w = w'_i$ should hold for some $i > j$, or otherwise $w\in P_\pi(u)$ by definition of the branching vertex $w_j$. By construction of $\varphi_{\pi', u'}$, this implies $\varphi_{\pi',u'}(w) \geq \pi'(w'_j) = \pi'(w_j)$. Combined with assumption $\pi'(w_j) \geq \pi(w_j)$, we get $\varphi_{\pi', u'}(w) \geq \pi(w_j)$. Moreover, since $w \not\in P_\pi(u)$, we get $\varphi_{\pi, u}(w) = \pi(w)$. Thus, to have $\varphi_{\pi, u} = \varphi_{\pi', u'}$, it should hold that $\pi(w) = \varphi_{\pi', u'}(w)$. Since we just showed $\varphi_{\pi', u'}(w) \geq \pi(w_j)$, this would imply $\pi(w) \geq \pi(w_j)$ which as outlined at the start of this case, is sufficient for our purpose.
\end{description}
The cases above clearly cover all possibilities; thus the proof is complete.
\end{proof}

\begin{proof}[Proof of Claim~\ref{cl:propofequalperms} part~\ref{prop:piuandpipupareeq}.] Recall that $w_1 = w'_1 = v$. We have $\varphi_{\pi,u}(v)=\pi(w_k)$ and $\varphi_{\pi',u'}(v)=\pi(w'_{k'})$ simply by construction of these permutations. Therefore, to have $\varphi_{\pi,u}(v) = \varphi_{\pi',u'}(v)$, we should have $\pi(w_k) = \pi'(w'_{k'})$.	
\end{proof}

\begin{proof}[Proof of Claim~\ref{cl:propofequalperms} part~\ref{prop:wjp1existsandetc}.] Suppose for contradiction that $k \leq j$, i.e., vertex $w_{j+1}$ does not exist. Since $w_j$ is the branching vertex, it has to belong to $P_\pi(u)$ by definition, thus, $k \geq j$. Combined with $k \leq j$, the only possibility would be $k=j$. By Claim~\ref{cl:propofequalperms} part~\ref{prop:piuandpipupareeq}, we have $\pi(w_k) = \pi'(w'_{k'})$ and since $j = k$, we get 
\begin{equation}\label{eq:21430}
	\pi(w_j) = \pi'(w'_{k'}).
\end{equation}
	On the other hand, by definition of $P_{\pi'}(u')$, we have
\begin{equation}\label{eq:123804}
	\pi'(w'_{k'}) > \pi'(w'_{k'-1}) \ldots > \pi'(w'_1).
\end{equation}
	Moreover, recall from the claim's assumption that $\pi'(w_j) \geq \pi(w_j)$. Combining this with (\ref{eq:21430}) and (\ref{eq:123804}), the only option is if $j = k'$. To see this, observe that $j \leq k'$ by definition of the branching vertex; now, if $j < k'$, then from (\ref{eq:123804}) we obtain $\pi'(w'_{k'}) > \pi'(w_j)$ which due to (\ref{eq:21430}) would imply $\pi(w_j) > \pi'(w_j)$ contradicting the claim's assumption that $\pi(w_j) \leq \pi'(w_j)$; thus, $j = k'$. Recall that we also assumed $j = k$ at the beginning of the proof, therefore $j = k = k'$. This implies by definition of the branching vertex that $w_i = w'_i$ for any $i \in [k]$ (or equivalently $[k']$), i.e., the two paths $P_{\pi}(u)$ and $P_{\pi'}(u')$ are exactly the same. Moreover, due to $j=k=k'$ and Claim~\ref{cl:propofequalperms} parts~\ref{prop:allbelowpiwjareequal} and \ref{prop:piuandpipupareeq}, for any vertex $w_i$ in the propagation paths, $\pi(w_i) = \pi'(w_i)$. On the other hand, for any vertex $x$ outside the two paths, we have $\pi(x) = \pi'(x)$ by Claim~\ref{cl:propofequalperms} part~\ref{prop:alloutsideequalrank}.
	Therefore, overall, the two permutations $\pi$ and $\pi'$ have to be exactly the same on all vertices, which is a contradiction with the claim's assumption that $\pi$ and $\pi'$ are different. Therefore, our initial assumption that $k \leq j$ cannot hold and vertex $w_{j+1}$ should exist.
	
Finally, by construction of $\varphi_{\pi,u}$, we have $\varphi_{\pi,u}(w_{j+1}) = \pi(w_j)$. Now, since $w_{j+1} \not\in P_{\pi'}(u')$ (otherwise $w_{j+1}$ would be be the branching vertex instead of $w_j$), we have $\varphi_{\pi',u'}(w_{j+1})=\pi'(w_{j+1})$. From $\varphi_{\pi,u} = \varphi_{\pi',u'}$, we get $\varphi_{\pi,u}(w_{j+1}) = \varphi_{\pi',u'}(w_{j+1})$. Combining these three equalities, we get $\pi(w_j)=\pi'(w_{j+1})$ as desired.
\end{proof}

\begin{proof}[Proof of Claim~\ref{cl:propofequalperms} part~\ref{prop:wjp1inmis}.] Suppose for the sake of contradiction that $w_{i+1} \not\in \lfmis{G, \pi'}$ and let $x := \eliminator{G}{\pi'}{w_{j+1}}$ be the eliminator of $w_{j+1}$ in \lfmis{G, \pi'}. Since $w_{j+1} \not\in \lfmis{G, \pi'}$, it holds that $\pi'(x) < \pi'(w_{i+1})$. Moreover, by Claim~\ref{cl:propofequalperms} part~\ref{prop:wjp1existsandetc}, $\pi'(w_{j+1}) = \pi(w_j)$; combined with inequality $\pi'(x) < \pi'(w_{j+1})$, this implies that $\pi'(x) < \pi(w_j)$. Note also that, by Claim~\ref{cl:propofequalperms} part~\ref{prop:allbelowpiwjareequal}, the two permutations $\pi$ and $\pi'$ are exactly the same on the set of vertices with rank less than $\pi(w_j)$; since $x$ is among such vertices,
\begin{equation}\label{eq:12894}
\pi(x) = \pi'(x) < \pi(w_j).
\end{equation}
Another implication of the equivalence of the two permutations on vertices with rank less than $\pi(w_j)$ is that since $x \in \lfmis{G, \pi'}$ (which holds since $x$ is the eliminator of $w_{j+1}$ in \lfmis{G, \pi'}) we also have $x \in \lfmis{G, \pi}$. This in turn, implies that $x$ is the eliminator of $w_{j+1}$ in $\lfmis{G, \pi}$ as well. On the other hand, since $w_{j+1}$ is a vertex in path $P_\pi(u)$, by definition of the propagation-paths, it should hold that $w_{j+1} \in \affectedset_\pi$. Moreover, by definition of $\affectedset_\pi$, we have $\eliminator{G}{\pi}{w_{j+1}} \not= \eliminator{G'}{\pi}{w_{j+1}}$ where $G'$ is defined as $G[V \setminus \{v\}]$. Denoting $\eliminator{G'}{\pi}{w_{j+1}}$ by $y$ and noting that $x = \eliminator{G}{\pi}{w_{j+1}}$, we get $y \not= x$. Therefore, one of the following cases should occur:
\begin{description}
	\item[Case 1: $\pi(y) < \pi(x)$.] In this case, the fact that $x$ is the eliminator of $w_{j+1}$ in \lfmis{G, \pi} even though $\pi(y) < \pi(x)$ means $y \not \in \lfmis{G, \pi}$. On the other hand, $y \in \lfmis{G', \pi}$ since $y = \eliminator{G'}{\pi}{w_{j+1}}$, therefore $y \in \flippedset_\pi$ by definition. However, this contradicts $w_j = \parent{\pi}{w_{j+1}}$ since by (\ref{eq:12894}), $\pi(y) < \pi(x) < \pi(w_j)$ , thus, $y$ should be the parent of $w_{j+1}$ instead of $w_j$.
	\item[Case 2: $\pi(y) > \pi(x)$.] Similarly, in this case, the fact that $x$ is not the eliminator of $w_{j+1}$ in \lfmis{G', \pi} even though $\pi(x) < \pi(y)$ implies that $x \not\in \lfmis{G', \pi}$. This means that $x \in \flippedset_\pi$ and again, since $\pi(x) < \pi(w_j)$, $x$ has to be the parent of $w_{j+1}$ instead of $w_j$.
\end{description}
To wrap up, $w_{j+1} \not\in \lfmis{G, \pi'}$ leads to a contradiction, thus $w_{j+1} \in \lfmis{G, \pi'}$.
\end{proof}

\subsection{Unlikely Permutations: Proof of Lemma~\ref{lem:unlikeliesareunlikely}}\label{sec:unlikely}

The LFMIS over a permutation $\pi$ can be parallelized in the following way. In each  round, each vertex that holds the minimum rank among its neighbors joins the MIS and then is removed from the graph along with its neighbors (note that this, in parallel, happens for several vertices in each round). Fischer and Noever \cite{DBLP:conf/soda/FischerN18}, building on an earlier approach of Blelloch \etal{}~\cite{DBLP:conf/spaa/BlellochFS12}, showed that if permutation $\pi$ is chosen randomly, with probability at least $1-n^{-2}$, it takes $O(\log n)$ rounds until the graph becomes empty.\footnote{We note that the success probability of these works is actually $1-n^{-c}$ for any desirable constant $c>1$ affecting the hidden constants in the round-complexity. For our purpose, $c=2$ is sufficient.} This result as a black-box does not prove Lemma~\ref{lem:unlikeliesareunlikely}. However, to prove this upper-bound on round-complexity, they indeed upper bound the maximum size of {\em dependency-paths} which are structures that are very close to  propagation-paths:

\begin{definition}[{\cite[Definition~2.1]{DBLP:conf/soda/FischerN18}}]
	A path $w_1, w_2, \ldots, w_k$ in the graph is a dependency-path according to permutation $\pi$, if for any odd $i \in [k]$, vertex $w_i$ is in \lfmis{G, \pi} and for any even $i \in [k]$, $w_i \not\in \lfmis{G, \pi}$ and $w_{i-1} = \eliminator{G}{\pi}{w_i}$.
\end{definition}

Recall that indeed, if $u_1, \ldots, u_k$ is a propagation-path, then for every $i \in [k-1]$, $u_i = \eliminator{G}{\pi}{u_{i+1}}$ by definition. Moreover, by Claim~\ref{cl:alternateinMIS}, except for the last vertex in the propagation-path, the odd vertices are in the MIS and the even vertices are not. Therefore:

\begin{observation}\label{obs:prop-dep}
	If there exists a propagation-path of size $\ell$ in the graph, then its first $\ell-1$ vertices form a dependency-path.
\end{observation}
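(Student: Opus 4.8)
The plan is to obtain the three defining properties of a dependency-path on the vertices $(w_1,\ldots,w_{\ell-1})$ of a propagation-path $P_\pi(u)=(w_1,\ldots,w_\ell)$ essentially for free from Claim~\ref{cl:alternateinMIS} and the recursive definition of propagation-paths, together with one short additional argument for the eliminator clause. First I would observe that $(w_1,\ldots,w_{\ell-1})$ is a genuine path of $G$ with no repeated vertex: consecutive vertices are adjacent because $w_i=\parent{\pi}{w_{i+1}}$ and a parent is by definition a neighbor, and the ranks strictly increase along the path (each parent has strictly smaller rank than its child by Observation~\ref{obs:everyvertexhasneighborinF}), exactly as already noted when propagation-paths were introduced. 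Then, since the propagation-path has size $\ell$, Claim~\ref{cl:alternateinMIS} applied with $k=\ell$ says precisely that $w_i\in\lfmis{G,\pi}$ for odd $i\in[\ell-1]$ and $w_i\notin\lfmis{G,\pi}$ for even $i\in[\ell-1]$, which are exactly the membership conditions in the definition of a dependency-path.

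The only remaining clause is that $w_{i-1}=\eliminator{G}{\pi}{w_i}$ for every even $i\in[\ell-1]$, and this is the part I would actually have to argue, since it is slightly stronger than what the definition of $\parent{\pi}{\cdot}$ directly gives, namely that $w_{i-1}$ is the lowest-rank neighbor of $w_i$ in $\flippedset_\pi$ rather than in $\lfmis{G,\pi}$. Fix such an $i$. By the previous paragraph $w_{i-1}\in\lfmis{G,\pi}$ and $w_{i-1}$ is adjacent to $w_i$, so $w_{i-1}$ is one of the candidates for the eliminator of $w_i$; hence $\pi(\eliminator{G}{\pi}{w_i})\le\pi(w_{i-1})$. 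Write $x:=\eliminator{G}{\pi}{w_i}$ and suppose for contradiction that $x\ne w_{i-1}$, i.e.\ $\pi(x)<\pi(w_{i-1})$. Every vertex of a propagation-path lies in $\affectedset_\pi$, so $x=\eliminator{G}{\pi}{w_i}\ne\eliminator{G'}{\pi}{w_i}=:x'$, where $G'=G[V\setminus\{v\}]$. Exactly as in the proof of Observation~\ref{obs:everyvertexhasneighborinF}, the lower-ranked of $x$ and $x'$ lies in $\flippedset_\pi$ and is a neighbor of $w_i$ in $G$. If that vertex were $x'$, then $x'\in\flippedset_\pi\cap N(w_i)$ with $\pi(x')<\pi(x)\le\pi(w_{i-1})=\pi(\parent{\pi}{w_i})$, contradicting that $w_{i-1}$ is the minimum-rank neighbor of $w_i$ in $\flippedset_\pi$; so it must be $x$ itself, giving $x\in\flippedset_\pi\cap N(w_i)$ and hence $\pi(x)\ge\pi(\parent{\pi}{w_i})=\pi(w_{i-1})$, contradicting $\pi(x)<\pi(w_{i-1})$. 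Therefore $x=w_{i-1}$, completing the verification.

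I expect the main (and only mildly annoying) obstacle to be the bookkeeping in that last paragraph: ruling out the degenerate cases that could make the rank comparisons vacuous — for instance $x=v$ (then $v=w_1\in\flippedset_\pi$ is still a flipped neighbor of $w_i$ of rank $\pi(w_1)<\pi(w_{i-1})$, yielding the same contradiction), or $x'=w_i$ (impossible, since then $\pi(x')=\pi(w_i)>\pi(w_{i-1})\ge\pi(x)$, so $x'$ is not the lower-ranked eliminator), or $x'$ failing to be a genuine $G$-neighbor of $w_i$ (it is: $x'\in\lfmis{G',\pi}$ and $x'\ne w_i$ force $x'\in N_{G'}(w_i)\subseteq N_G(w_i)$). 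Each of these is dispatched in one line using the strictly increasing ranks $\pi(w_1)<\ldots<\pi(w_i)$ along the path and the fact that $v=w_1\in\flippedset_\pi$ whenever $\affectedset_\pi\ne\emptyset$; beyond that, the statement is an immediate corollary of Claim~\ref{cl:alternateinMIS}.
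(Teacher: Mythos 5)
Your proof is correct and follows the same route as the paper: Claim~\ref{cl:alternateinMIS} supplies the alternating MIS-membership pattern, so the only substantive content is the eliminator clause $w_{i-1}=\eliminator{G}{\pi}{w_i}$ for even $i$. The paper asserts that clause holds ``by definition,'' whereas you correctly note that the parent is defined as the lowest-rank \emph{flipped} neighbor rather than the lowest-rank MIS neighbor, and your contradiction argument --- the lower-ranked of the two eliminators of $w_i$ in $G$ and $G'$ is a flipped $G$-neighbor of $w_i$, so parent-minimality forces its rank to be at least $\pi(w_{i-1})$ --- closes that gap cleanly, with the degenerate cases ($x=v$, $x'=w_i$) properly dispatched.
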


Fischer and Noever~\cite{DBLP:conf/soda/FischerN18} prove that with probability $1-n^{-2}$, every dependency-path has size $O(\log n)$ if $\pi$ is chosen at random. Therefore, from Observation~\ref{obs:prop-dep}, we get that the probability of having a propagation-path with size $\beta \log n$, if $\beta$ is a large enough constant, is at most $n^{-2}$, which completes the proof of Lemma~\ref{lem:unlikeliesareunlikely}.

\section{Fully Dynamic MIS: Putting Everything Together}\label{sec:wrapup}
\subsection{The (Concrete, Non-Parametrized) Running Time}
In this section, we show how combining Lemma~\ref{lem:updatealg} with Lemma~\ref{lem:edgeA} proves the main claim of this paper that MIS can be maintained in  polylogarithmic update-time.

\restate{Theorem~\ref{thm:main}}{\thmmain{}}

\begin{proof}[Proof of Theorem~\ref{thm:main}]
	Consider insertion or deletion of an edge $e=(a, b)$. As before, we use $\lambda$ to denote random variable $\min\{ \pi(a), \pi(b) \}$ and use $\affectedset$ to denote the set of vertices whose eliminators change as a result of this edge update. By Lemma~\ref{lem:updatealg}, we have
	\begin{flalign}
		\nonumber \E[\text{update-time for an edge $e=(a, b)$}] &= \E \left[O\left(|\affectedset| \cdot \log \Delta \cdot \min\left\{\lambda^{-1} \cdot \log n, \Delta \right\}\right)\right]\\
		\nonumber &= O(\log \Delta) \cdot \E \left[|\affectedset| \cdot \min\left\{\lambda^{-1} \cdot \log n, \Delta \right\}\right]\\
		&= O(\log \Delta \cdot \log n) \cdot \E \left[\min\left\{\lambda^{-1} \cdot \log n, \Delta \right\}\right].\label{eq:3918247}
	\end{flalign}
	The third equation follows from $\E[|\affectedset| \mid \lambda] \leq O(\log n)$ which was proved in Lemma~\ref{lem:edgeA}, combined with the fact that if for two possibly dependent random variables $y_1$ and $y_2$, $\E[y_1 | y_2] \leq \beta$, then $\E[y_1 \cdot y_2] \leq \beta \E[y_2]$. To bound the random variable inside the expectation, suppose we partition the $[0, 1]$ interval into $\Delta$ sub-intervals $I_1, \ldots, I_\Delta$ where $I_i = [\frac{i-1}{\Delta}, \frac{i}{\Delta}]$ for any $i \in [\Delta]$. Note that if $\lambda \in I_i$ then at least one of $\pi(a)$ and $\pi(b)$ is in $I_i$. Therefore, a simple union bound implies that $\Pr[\lambda \in I_i] \leq \Pr[\pi(a) \in I_i] + \Pr[\pi(b) \in I_i] = 2/\Delta$. We, thus, have:
		\begin{flalign*}
		\E \left[\min\left\{\lambda^{-1} \cdot \log n, \Delta \right\}\right] &= \sum_{i=1}^{\Delta} \Pr[\lambda \in I_i] \cdot \E \left[\min\left\{\lambda^{-1} \cdot \log n, \Delta \right\} \mid \lambda \in I_i \right]\\
		&\leq \sum_{i=1}^{\Delta} \frac{2}{\Delta} \Big( \min\Big\{\frac{\Delta}{i-1} \log n, \Delta \Big\} \Big) = O(\log n) \sum_{i=1}^{\Delta} \frac{1}{i} = O(\log \Delta \cdot \log n).
		\end{flalign*}
		Replacing this into (\ref{eq:3918247}) suffices to bound the expected update-time by $O(\log^2 \Delta \cdot \log^2 n)$. Furthermore, as mentioned before in Section~\ref{sec:highlevel}, we already know from \cite[Theorem~1]{Censor-HillelHK16} that the expected adjustment complexity of random order LFMIS is $O(1)$, completing the proof.
\end{proof}

\subsection{Deferred Proofs}\label{sec:proofsMIS}

We start by proving Observation~\ref{obs:inAiffalowerchanges} which is crucial for the algorithm's correctness.

\restate{Observation~\ref{obs:inAiffalowerchanges}}{\obsinAiff{}}

\begin{proof}[Proof of Observation~\ref{obs:inAiffalowerchanges} part 1]
Let $U$ denote the set of vertices $v$ in $V$ with $\pi(v) < \pi(a)$. Observe that the two induced subgraphs $G_t[U]$ and $G_{t-1}[U]$ are identical since the only difference between $G_t$ and $G_{t-1}$ is insertion/deletion of edge $e=(a,b)$ whose endpoints both have rank at least $\pi(a)$ (recall that $\pi(a) < \pi(b)$) and thus neither belongs to $U$. Since the MIS is constructed greedily on lower rank vertices first, the set of MIS vertices in $G_t[U]$ and $G_{t-1}[U]$ according to $\pi$ are exactly the same.  Let $I_U$ denote these MIS nodes. Note that any vertex $v$ with $k_{t-1}(v) < \pi(a)$ should have a neighbor in $I_U$. Since both end-points of edge $e$ are in $V \setminus U$, the set of neighbors of $I_U$ in both graphs $G_t$ and $G_{t-1}$ are also identical. Therefore for each vertex $v$ with $k_{t-1}(v) < \pi(a)$, we have $k_t(v) = k_{t-1}(v)$ and thus $v$ cannot be in $\affectedset$ by definition.  By a similar argument, for any vertex $v$ with $k_t(v) < \pi(a)$ we also have $k_{t-1}(v)  = k_t(v)$ and thus $v \not\in \affectedset{}$.
\end{proof}

\begin{proof}[Proof of Observation~\ref{obs:inAiffalowerchanges} part 2]
	The assumption $v \in \affectedset$ implies that the eliminator of $v$ has changed after the update. Let $w$ be the eliminator of $v$ before the update. If the MIS-status of no neighbor $u$ of $v$ with $\pi(u) \leq \pi(w)$ changes, since $v \not= b$ and the set of neighbors of $v$ are the same before and after the update, then $w$ remains to be the eliminator of $v$. Therefore, to have $v \in \affectedset$, the MIS-status of at least one of $v$'s neighbors changes and this vertex is in $\flippedset$ by definition.
\end{proof}

We first prove the correctness of each of the subroutines and then that of the overall algorithm. These subroutines are proven to be correct by the end of any iteration $i$ conditioned on the assumption that Invariants~\ref{inv:S1}-\ref{inv:N} (or a subset of them) hold at the start of iteration $i$. We later inductively prove that these invariants hold and that indeed the whole algorithm is correct.

\begin{claim}\label{cl:isaffectedcorrectnessandtime}
	By the end of any iteration $i$, subroutine $\isaffected{v}$ correctly decides whether the lowest-rank vertex $v \in \S$ is in set $\affectedset$ in time $O(|\P{v}|)$ given that Invariants~\ref{inv:S1}-\ref{inv:N} hold by the start of iteration $i$.
\end{claim}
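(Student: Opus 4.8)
The plan is to verify correctness separately for the two cases $v = b$ and $v \neq b$ handled by the subroutine, using Invariants~\ref{inv:S1}–\ref{inv:N} to guarantee that the data structures hold the values the subroutine expects. Throughout, recall that at the start of iteration $i$, Invariant~\ref{inv:S1} tells us that for every $u$ with $\pi(u) < \pi(v)$ the stored values $k(u), m(u)$ already equal the post-update values $k_t(u), m_t(u)$, while for every $u$ with $\pi(u) \geq \pi(v)$ they still equal the pre-update values $k_{t-1}(u), m_{t-1}(u)$; in particular $k(v) = k_{t-1}(v)$ and $m(v) = m_{t-1}(v)$.

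\textbf{Case $v = b$.} Here I would argue that $b \in \affectedset$ iff $m_{t-1}(a) = 1$ and $k_{t-1}(b) \geq \pi(a)$, which is exactly the test the subroutine performs (and since $\pi(a) < \pi(b)$, we have $m(a) = m_t(a) = m_{t-1}(a)$ by Invariant~\ref{inv:S1}, and $k(b) = k_{t-1}(b)$, so the stored values are the right ones). For the forward direction: by Observation~\ref{obs:inAiffalowerchanges} part~1, if $b\in\affectedset$ then $k_{t-1}(b)\ge\pi(a)$; and if additionally $m_{t-1}(a)=0$ then $a$ was not in the MIS before the update, so inserting/deleting the edge $(a,b)$ does not change the prefix of the greedy construction on ranks $<\pi(b)$ in a way that reaches $b$ — more precisely, the only change in $G_t$ vs.\ $G_{t-1}$ local to $b$ is the presence/absence of the neighbor $a$, and since $a\notin\lfmis{}$ before, whether $a$ is a neighbor of $b$ is irrelevant to $b$'s eliminator unless $a$'s own status flips; but $\pi(a)<\pi(b)\le k_{t-1}(b)$ would make $a\in\affectedset$ only through a lower-rank flipped vertex, and one checks this cannot alter $b$'s eliminator without some lower-rank change that itself would have reached $b$ — contradiction. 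Conversely, if $m_{t-1}(a)=1$ and $k_{t-1}(b)\ge\pi(a)$: when $e$ is inserted, $a$ is a new MIS-neighbor of $b$ of rank $\pi(a)\le k_{t-1}(b)$, so $b$'s eliminator becomes $a$ (strictly lower rank than before, or $b$ leaves the MIS), hence $b\in\affectedset$; when $e$ is deleted, symmetrically $b$ loses the neighbor $a$ that was its eliminator (since $k_{t-1}(b)\ge\pi(a)$ and $m_{t-1}(a)=1$ forces $\seliminator{b}=a$ under $\pi$), so its eliminator must change, hence $b\in\affectedset$.

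\textbf{Case $v \neq b$.} By Observation~\ref{obs:inAiffalowerchanges} part~2, if $v\in\affectedset$ then $v$ has a flipped neighbor of lower rank; by Invariant~\ref{inv:S2}, $\P{v}$ contains \emph{exactly} those vertices $u$ that are adjacent to $v$, flipped, and of rank $<\pi(v)$ — so $v\in\affectedset$ iff the correct post-update eliminator-rank of $v$, computed from $\P{v}$ together with unchanged lower-rank non-flipped neighbors, differs from $k_{t-1}(v)$. The subroutine recomputes this: it first looks for $u\in\P{v}$ with $\pi(u)=k(v)=k_{t-1}(v)$, which signals that $v$'s \emph{old} eliminator itself flipped (it left the MIS), forcing a change; otherwise it takes the lowest-rank $u\in\P{v}$ with $m(u)=1$ (i.e.\ $u$ newly entered the MIS), and $v$'s eliminator-rank changes iff $\pi(u)<k_{t-1}(v)$, since any neighbor of $v$ with rank $<k_{t-1}(v)$ and unchanged status does not affect the eliminator (all such neighbors were non-MIS before and remain so, as they are not in $\flippedset$ hence not in $\P{v}$), and newly-entered MIS neighbors of rank $\ge k_{t-1}(v)$ cannot beat the old eliminator. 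I would spell out that for $u\in\P{v}$ with $\pi(u)<\pi(v)$ Invariant~\ref{inv:S1} guarantees $m(u)=m_t(u)$, so the membership bits consulted are the post-update ones, which is what we need. The time bound is immediate: the subroutine makes a constant number of linear scans of $\P{v}$, so it runs in $O(|\P{v}|)$.

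\textbf{Main obstacle.} The delicate point is the $v=b$ case and, within the $v\neq b$ case, justifying that \emph{no neighbor outside $\P{v}$ of rank below the relevant threshold can influence $v$'s eliminator}. This is where I'd lean hardest on Invariants~\ref{inv:S1} and~\ref{inv:S2}: a neighbor $u$ of $v$ with $\pi(u)<\pi(v)$ and $u\notin\P{v}$ is, by Invariant~\ref{inv:S2}, not in $\flippedset$, so $m(u)$ is the same before and after — hence it contributes identically to $v$'s eliminator computation in $G_{t-1}$ and $G_t$, and the \emph{only} possible changes come from $\P{v}$. Getting this implication airtight — in particular handling the subtle sub-case where $v$'s old eliminator $w$ is itself a flipped vertex appearing in $\P{v}$ (detected by the $\pi(u)=k(v)$ test), versus the sub-case where a brand-new lower-rank MIS vertex appears — is the crux; the rest is bookkeeping against the invariants.
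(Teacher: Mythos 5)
Your proposal is correct and follows essentially the same route as the paper's proof: the same case split on $v=b$ versus $v\neq b$, the same use of Invariant~\ref{inv:S1} to identify the stored values with $m_{t-1},k_{t-1}$ (and $m_t$ for lower-rank vertices), Observation~\ref{obs:inAiffalowerchanges} for the necessity of $k_{t-1}(b)\geq\pi(a)$, and Invariant~\ref{inv:S2} to reduce the $v\neq b$ case to the two events ``old eliminator flipped out'' and ``a lower-rank neighbor flipped in,'' both detectable from $\P{v}$. The only blemish is the garbled chain $\pi(a)<\pi(b)\le k_{t-1}(b)$ in your aside for the $v=b$ direction (in general $k_{t-1}(b)\le\pi(b)$); the surrounding argument --- that when $m_{t-1}(a)=0$ the vertex $a$ is in neither MIS, so toggling the edge $(a,b)$ cannot change $b$'s set of MIS-neighbors --- is the correct and sufficient one, matching the paper.
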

\begin{proof}
	The algorithm clearly takes $O(|\P{v}|)$ time since it only iterates over the vertices in $\P{v}$ to decide on the output. In what follows, we prove its correctness. As in the algorithm's description, we consider two cases where $v = b$ and $v \not= b$ individually.
	
	\smparagraph{Case 1 $v = b$.} In this case, the algorithm decides $b \in \affectedset$ if and only if $m(a) = 1$ and $k(b) \geq \pi(a)$. We show that this is indeed correct.
	
	\textit{The if part.} We show that if  $m(a) = 1$ and $k(b) \geq \pi(a)$, then $b\in \affectedset$. Observe from Invariant~\ref{inv:S1} that at this point in the algorithm, we have $k(b) = k_{t-1}(b)$. Therefore, the $k(b) \geq \pi(a)$ assumption implies $k_{t-1}(b) \geq \pi(a)$. Moreover, the MIS-status of vertex $a$ cannot change as it is the lower-rank vertex of the updated edge, thus, it holds that $m_t(a) = m_{t-1}(a)$ and consequently $m(a)=1$ implies $a \in \lfmis{G_{t-1}, \pi}$. Combining these, the eliminator of $b$ has to be $a$ iff there is an edge between $a$ and $b$. Therefore, updating edge $e=(a,b)$ definitely changes $b$'s eliminator and thus $b \in \affectedset$.
	
	\textit{The only if part.} Suppose that one of the conditions do not hold, we show $ b \not\in \affectedset$. First, if $k < \pi(a)$, then by Observation~\ref{obs:inAiffalowerchanges} part 1, $b \not\in \affectedset$ as desired. Moreover, if $m(a)=0$, as before, we should have $m_{t-1}(a) = m_t(a) = 0$ since $a$ is the lower-rank vertex of the update. As a result, insertion or deletion of $e$ cannot have an effect on the eliminator of $b$ and thus $b \not\in \affectedset$.
	
	\smparagraph{Case 2 $v \not= b$.} In this case, the eliminator of $v$ changes if and only if at least one of the following conditions hold: (1) the eliminator of $v$ in time $t-1$ leaves the MIS, (2) at least a vertex $u$ adjacent to $v$ with $\pi(u) < k(v)$ joins the MIS. If none of these conditions hold, then $\eliminator{G_{t-1}}{\pi}{v}$ remains to be the smallest-rank vertex in $\{b\} \cup N(b)$ that is in the MIS after the update; therefore  by definition of eliminator, $k_{t-1}(v)=k_t(v)$ and thus $v \not\in \affectedset$. 
	
	Our algorithm precisely checks these conditions. For condition (1), if the eliminator $u := \eliminator{G_{t-1}}{\pi}{v}$ leaves the MIS after the update, it should by definition belong to $\flippedset$. Note that by invariant~\ref{inv:S2}, \P{v} exactly contains the neighbors $w$ of $v$ with $w \in \flippedset$ and $\pi(w) < \pi(v)$. Therefore if $u \in \P{v}$, then condition (1) holds and $v \in \affectedset$. Our algorithm also checks condition (2) by finding the lowest-rank vertex $w$ in \P{v} with $m(w)=1$ and then comparing its rank with $k_{t-1}(v)$.
\end{proof}
%
%

%

\begin{claim}\label{cl:correctfindrelevantneighbors}
	At any iteration $i$, with probability at least $1-n^{-(c+1)}$, set $\mathcal{H}_v$ has size $O(\min\{\Delta, \frac{\log n}{\pi(a)}\})$. Furthermore, subroutine $\findrelneighbors{v, \pi(a)}$ correctly finds the set $\mathcal{H}_v$ in time $O(|\mathcal{H}_v| \cdot \log \Delta)$, given that Invariant \ref{inv:N} holds by the start of iteration $i$.
\end{claim}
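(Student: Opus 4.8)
The plan is to prove the two assertions of Claim~\ref{cl:correctfindrelevantneighbors} --- the bound on $|\mathcal{H}_v|$, and the correctness together with the running time of \findrelneighbors{v, \pi(a)} --- by two essentially separate arguments. The size bound carries all of the probabilistic content and will be deduced from the sparsification guarantee of Proposition~\ref{prop:sparseMIS}; the correctness/running-time statement is a short deterministic argument based on how $N^+$ and $N^-$ are stored, together with Invariant~\ref{inv:N}.

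For the size bound, I would first note that the subroutine is only ever invoked for a vertex $v$ that has just passed the \isaffected{v} test, so we may assume $v \in \affectedset$, and hence by Observation~\ref{obs:inAiffalowerchanges} part~1 that $k_{t-1}(v) \geq \pi(a)$. By the definition of $\mathcal{H}_v$ in~\eqref{eq:defH}, every $u \in \mathcal{H}_v$ also satisfies $\pi(\eliminator{G_{t-1}}{\pi}{u}) = k_{t-1}(u) \geq \pi(a)$. Now let $p$ be the largest $\Theta(\log n)$-bit value strictly below $\pi(a)$ (so that on the discretized grid of ranks ``$\geq \pi(a)$'' is the same as ``$> p$''). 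Then $v$ and every vertex of $\mathcal{H}_v$ belong to the set $V_p$ that Proposition~\ref{prop:sparseMIS} associates to the graph $G_{t-1}$, and in particular $\mathcal{H}_v \subseteq N_{G_{t-1}}(v) \cap V_p$, so $|\mathcal{H}_v|$ is at most the degree of $v$ in $G_{t-1}[V_p]$. Since the adversary is oblivious, $G_{t-1}$ is determined independently of the random ranking $\pi$, so Proposition~\ref{prop:sparseMIS} is applicable to $(G_{t-1}, \pi)$ and, after choosing the constant in the proposition large enough, guarantees with probability at least $1 - n^{-(c+1)}$ that the maximum degree of $G_{t-1}[V_p]$ is $O(p^{-1}\log n) = O(\pi(a)^{-1}\log n)$; note that this is a single high-probability event that holds simultaneously for all admissible $p$, hence across all iterations. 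Combined with the trivial bound $|\mathcal{H}_v| \leq \deg_{G_{t-1}}(v) \leq \Delta$, this yields $|\mathcal{H}_v| = O(\min\{\Delta, \log n/\pi(a)\})$.

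For correctness and running time, I would invoke Invariant~\ref{inv:N}: at the moment \findrelneighbors{v, \pi(a)} runs (before Line~\ref{line:updateN}) we have $N^+(v) = N^+_{t-1}(v)$ and $N^-(v) = N^-_{t-1}(v)$, with the entries of $N^-(v)$ keyed by their time-$(t-1)$ eliminator rank $k_{t-1}(\cdot)$. Every neighbor of $v$ in $G_{t-1}$ lies in at least one of $N^+_{t-1}(v)$, $N^-_{t-1}(v)$ (Observation~\ref{obs:nplusnminus}), so it suffices to harvest the qualifying vertices from each. From $N^+_{t-1}(v)$: any $u$ there satisfies $k_{t-1}(u) \geq k_{t-1}(v) \geq \pi(a)$, hence $N^+_{t-1}(v) \subseteq \mathcal{H}_v$; we simply traverse the whole BST, which is correct and, since $|N^+_{t-1}(v)| \leq |\mathcal{H}_v|$, costs $O(|\mathcal{H}_v|)$. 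From $N^-_{t-1}(v)$: since it is keyed by $k_{t-1}(\cdot)$, a single search for the key $\pi(a)$ followed by an in-order walk over all entries with key $\geq \pi(a)$ returns exactly $\{u \in N^-_{t-1}(v) : k_{t-1}(u) \geq \pi(a)\} \subseteq \mathcal{H}_v$, in $O(\log\Delta)$ time for the search plus $O(|\mathcal{H}_v|)$ for the walk. Taking the union of the two harvested sets (removing the at most one duplicate, which can occur only for a $u$ with $k_{t-1}(u) = k_{t-1}(v)$) gives precisely $\mathcal{H}_v$, for a total running time of $O(|\mathcal{H}_v|\log\Delta)$.

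The main obstacle, and the only place where genuine care is needed, is the size bound: one must (i) use that $v$ itself is affected, which is what forces $v$ and all of $\mathcal{H}_v$ to survive the $[0, \pi(a))$-prefix of the greedy LFMIS process on $G_{t-1}$; (ii) use the oblivious-adversary assumption to decouple $G_{t-1}$ from $\pi$, which is exactly what makes Proposition~\ref{prop:sparseMIS} applicable; and (iii) absorb the cosmetic ``$\geq$ versus $>$'' mismatch using the $\Theta(\log n)$-bit discretization of the ranks. Given Invariant~\ref{inv:N}, the correctness and running-time part is then routine bookkeeping on balanced binary search trees.
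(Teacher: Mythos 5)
Your proposal is correct and follows essentially the same route as the paper's proof: Observation~\ref{obs:inAiffalowerchanges} part~1 to place $v$ and all of $\mathcal{H}_v$ in the surviving graph after the $[0,\pi(a))$ prefix, Proposition~\ref{prop:sparseMIS} (justified by the oblivious adversary) for the size bound, and the containment of $N^+(v)$ in (essentially) $\mathcal{H}_v$ plus the BST search on $N^-(v)$ for the $O(|\mathcal{H}_v|\log\Delta)$ running time. Your explicit handling of the ``$\geq$ versus $>$'' grid discretization is a small tidying-up of a point the paper elides, not a different argument; the only other (immaterial) divergence is that the paper reads $N(v)$ in the definition of $\mathcal{H}_v$ as the post-update neighborhood and therefore states $|N^+(v)|\le|\mathcal{H}_v|+1$ rather than your $|N^+_{t-1}(v)|\le|\mathcal{H}_v|$.
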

\begin{proof}
	\smparagraph{Size of $\mathcal{H}_v$.} Observe that if $\mathcal{H}_v$ is defined, then as assured by the condition in Line~\ref{line:if} of Algorithm~\ref{alg:update}, $v \in \affectedset$ thus by Observation~\ref{obs:inAiffalowerchanges}, $k_{t-1}(v) \geq \pi(a)$. Furthermore, by definition, every vertex $u \in \mathcal{H}_v$ has $k_{t-1}(u) \geq \pi(a)$. This means that if we take LFMIS of $G_{t-1}$ induced on vertices with rank in $[0, \pi(a))$ and remove them and their neighbors from the graph, $v$ and all of its neighbors in $\mathcal{H}_v$ will survive. Recall that the adversary is oblivious and the graph $G_{t-1}$ and random permutation $\pi$ are chosen independently. Therefore, applying Proposition~\ref{prop:sparseMIS} on graph $G_{t-1}$ with $p = \pi(a)$ bounds $|\mathcal{H}_v|$ by $O(\pi(a)^{-1} \log n)$ w.h.p. Moreover, clearly $|\mathcal{H}_v| \leq \Delta$ since they are neighbors of $v$, concluding the bound on the size of $\mathcal{H}_v$.
		
	\smparagraph{Correctness.} The assumption that Invariant~\ref{inv:N} holds implies that $N^-(v) = N^-_{t-1}(v)$ and $N^+(v)=N^+_{t-1}(v)$. Therefore, \findrelneighbors{v, \pi(a)} correctly finds $\mathcal{H}_v$.
	
	\smparagraph{Running time.} Since the vertices  $u \in N^-(v)$ are indexed by $k_{t-1}(u)$ and the algorithm iterates only over the neighbors $u$ of $v$ in this set with $k_{t-1}(u) \geq \pi(a)$, the running time of this part is $O(|\mathcal{H}_v| \log \Delta)$ where the $\log \Delta$ factor comes from searching in this BST which has size $\Delta$ at most. However, note that the algorithm iterates over all vertices in $N^+(v)$ since it is not indexed by $k_{t-1}(.)$. Therefore, we have to prove $|N^+(v)|$ cannot be larger than $|\mathcal{H}_v|$. We know from Invariant~\ref{inv:N} that for any vertex $u \in N^+(v)$, we have $k_{t-1}(u) \geq k_{t-1}(v)$. Moreover, since $v \in \affectedset$, by Observation~\ref{obs:inAiffalowerchanges}, $k_{t-1}(v) \geq \pi(a)$. Combining the two, we get that $k_{t-1}(u) \geq \pi(a)$. This means that every vertex $u \in N^+(v)$ that is still a neighbor of $v$ after the update, should be in set $|\mathcal{H}_v|$. Since at most one edge is removed from the graph at time $t$, we have $|N^+(v)| \leq |\mathcal{H}_v|+1$, completing the proof.
\end{proof}

\begin{claim}\label{claim:updatek}
Let $v$ be the lowest-rank vertex at the start of an arbitrary iteration. Subroutine \updateeliminator{v, \mathcal{H}_v} correctly updates $k(v)$ and $m(v)$ of vertex $v$ in time $O(|\mathcal{H}_v|)$ assuming that Invariant~\ref{inv:S1} holds by this iteration.
\end{claim}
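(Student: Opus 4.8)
The plan is to dispose of the running-time claim at once and spend the rest of the argument on correctness, i.e., on showing that after \updateeliminator{v, \mathcal{H}_v} returns we have $k(v)=k_t(v)$ and $m(v)=m_t(v)$. The running time is immediate: the subroutine scans $\mathcal{H}_v$ once to find the lowest-rank member $u$ with $m(u)=1$ and then does $O(1)$ more work, so it runs in $O(|\mathcal{H}_v|)$ time. For correctness, recall that \updateeliminator{v, \mathcal{H}_v} is invoked only after \isaffected{v} returned true, so $v\in\affectedset$ (Claim~\ref{cl:isaffectedcorrectnessandtime}), and that $\mathcal{H}_v$ has already been built correctly, i.e., it contains exactly the neighbors $u$ of $v$ in the current graph with $k_{t-1}(u)\ge\pi(a)$ (Claim~\ref{cl:correctfindrelevantneighbors}). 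Let $w:=\eliminator{G_t}{\pi}{v}$ be the eliminator we want to recover; by definition $w$ is the lowest-rank vertex of $\Gamma(v)\cap\lfmis{G_t,\pi}$, we have $k_t(v)=\pi(w)$, and $m_t(v)=1$ exactly when $w=v$. Since $v\in\affectedset$ its eliminator changed, so either $w=v$ or, as a vertex outside the MIS has an eliminator of strictly smaller rank, $\pi(w)<\pi(v)$; I treat the two cases separately.

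\textbf{Case $w\neq v$.} The first step is to check that $w$ is actually visible to the subroutine, i.e., $w\in\mathcal{H}_v$. Here $w\in N(v)$ (it is $v$'s eliminator and $w\neq v$), and since $v\in\affectedset$, Observation~\ref{obs:inAiffalowerchanges} gives $k_t(v)\ge\pi(a)$, hence $\pi(w)=k_t(v)\ge\pi(a)$; as $w\in\lfmis{G_t,\pi}$ this forces $k_t(w)=\pi(w)\ge\pi(a)$. If $w\notin\affectedset$ its eliminator is unchanged and so $k_{t-1}(w)=k_t(w)\ge\pi(a)$, while if $w\in\affectedset$ then $k_{t-1}(w)\ge\pi(a)$ again by Observation~\ref{obs:inAiffalowerchanges}; either way $k_{t-1}(w)\ge\pi(a)$, so $w\in\mathcal{H}_v$. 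The second step is to see that $w$ carries the right bit: because $\pi(w)<\pi(v)$, Invariant~\ref{inv:S1} guarantees $m(w)=m_t(w)=1$ at this moment (neither \isaffected{v} nor \findrelneighbors{v, \pi(a)} modifies any $m(\cdot)$ or $k(\cdot)$). The third step is that $w$ is the \emph{lowest}-rank element of $\mathcal{H}_v$ with $m(\cdot)=1$: any $u\in\mathcal{H}_v$ with $\pi(u)<\pi(w)$ has $\pi(u)<\pi(v)$, so $m(u)=m_t(u)$ by Invariant~\ref{inv:S1}, and such a $u\in N(v)$ of rank below the eliminator $w$ cannot lie in $\lfmis{G_t,\pi}$, so $m(u)=0$. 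Consequently the subroutine selects precisely $w$, sees $\pi(w)<\pi(v)$, and sets $k(v)\gets\pi(w)=k_t(v)$ and $m(v)\gets 0=m_t(v)$, as wanted.

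\textbf{Case $w=v$.} Now $v\in\lfmis{G_t,\pi}$, so no neighbor of $v$ with rank below $\pi(v)$ belongs to $\lfmis{G_t,\pi}$; by Invariant~\ref{inv:S1}, every $u\in\mathcal{H}_v$ with $\pi(u)<\pi(v)$ therefore satisfies $m(u)=m_t(u)=0$. Hence any member of $\mathcal{H}_v$ carrying $m(\cdot)=1$ must have rank larger than $\pi(v)$, so the subroutine either finds no such vertex or finds one with $\pi(u)>\pi(v)$; in both cases it sets $k(v)\gets\pi(v)=k_t(v)$ and $m(v)\gets 1=m_t(v)$, as wanted.

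I expect the crux to be the first case, and within it the two observations that (i) the new eliminator $w$ lies in $\mathcal{H}_v$ so the subroutine never misses it, and (ii) its stored bit $m(w)$ has already been refreshed to the post-update value. Both come from coupling the rank lower bound $k_t(v)\ge\pi(a)$ supplied by Observation~\ref{obs:inAiffalowerchanges} with the ``prefix-consistency'' supplied by Invariant~\ref{inv:S1}; everything else (the case $w=v$ and the $O(|\mathcal{H}_v|)$ time bound) is routine.
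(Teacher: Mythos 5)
Your proof is correct and follows essentially the same route as the paper's: the same case split on whether $v$ joins the MIS, the same use of Observation~\ref{obs:inAiffalowerchanges} part 1 to show the new eliminator $w$ lies in $\mathcal{H}_v$, and the same appeal to Invariant~\ref{inv:S1} to argue the stored bits $m(\cdot)$ of lower-rank vertices are already the time-$t$ values. The only (cosmetic) difference is that you establish $k_{t-1}(w)\ge\pi(a)$ directly via $k_t(w)=\pi(w)=k_t(v)\ge\pi(a)$ and a case split on $w\in\affectedset$, whereas the paper derives it by contradiction through $\flippedset$; both are valid.
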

\begin{proof}
	It is clear that the algorithm takes $O(|\mathcal{H}_v|)$ time, here we prove its correctness. Note that at the time of using subroutine \updateeliminator{v, \mathcal{H}_v}, we know $v \in \affectedset$. Therefore, from Observation~\ref{obs:inAiffalowerchanges} part 1, we know $k_t(v) \geq \pi(a)$ and $k_{t-1}(v) \geq \pi(a)$. We consider the two cases where $m_t(v)=1$ and $m_t(v)=0$ differently.
	
	Suppose that $m_t(v)=0$ and let $w$ be the eliminator of $v$ after the update, i.e., $\pi(w) = k_t(v)$ (note that since $m_t(v)=0$, $w \not= v$). We first show $w \in \mathcal{H}_v$ by proving that $k_{t-1}(w) \geq \pi(a)$. Suppose for contradiction that $k_{t-1}(w) < \pi(a)$. Then by Observation~\ref{obs:inAiffalowerchanges} part 1, $w \not\in \affectedset$ and consequently $w \not\in \flippedset$ since $\flippedset \subseteq \affectedset$. Since $w$ is the eliminator of $v$ in $G_t$, we have $m_t(w)=1$. Moreover, for $w \not\in \flippedset$, we also get $m_{t-1}(w)=1$ which, by definition, means $w$ has to be its own eliminator in $G_{t-1}$ and thus $k_{t-1}(w) = \pi(w)$. Combined with $k_{t-1}(w) < \pi(a)$, this would mean $\pi(w) < \pi(a)$. This, however, contradicts $k_{t-1}(v) \geq \pi(a)$ since $v$ has a neighbor $w$ in MIS of $G_{t-1}$ with rank smaller than $\pi(a)$ and thus it should hold that $k_{t-1}(v) < \pi(a)$. This contradiction implies that indeed $k_{t-1}(w) \geq \pi(a)$ and thus $w \in \mathcal{H}_v$. Furthermore, in this case, since $\pi(w) < \pi(v)$, by Invariant~\ref{inv:S1}, $m(w) = m_{t}(w) = 1$ and indeed the lowest-rank vertex $u$ in $\mathcal{H}_v$ with $m(u)=1$ should be vertex $w$ and the algorithm is correct.
	
	On the other hand, if $m_t(v)=1$, then no lower-rank neighbor of $v$ should be in the MIS. In this case, once we scan the set $\mathcal{H}_v$, we will not find any vertex $u$ with a lower-rank than $\pi(v)$ and $m(u)=1$, thus we correctly decide that $v$ is in the MIS and update $m(v)$ and $k(v)$ correctly.
\end{proof}

\begin{claim}\label{cl:adjlistscorrect}
	Subroutine \updateadjlists{} correctly updates the adjacency lists and with probability at least $1-n^{-c}$, takes $O(|\affectedset|\cdot \min\{\Delta, \frac{\log n}{\pi(a)}\} \cdot \log \Delta)$ time given that for any vertex $v$, $k(v) = k_t(v)$.
\end{claim}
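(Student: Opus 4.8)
The plan is to prove the two halves of the claim in turn—first that \updateadjlists{} leaves every adjacency list in its time-$t$ state, and then that it runs within the stated bound—reading off the running time from the structural fact that drives the correctness proof. I use the claim's hypothesis that $k(v)=k_t(v)$ already holds for every $v$ when \updateadjlists{} is called; since $m(v)=1$ iff $k(v)=\pi(v)$, this also gives $m(v)=m_t(v)$. By Invariant~\ref{inv:N}, the lists still hold their time-$(t-1)$ contents at that moment, so the only remaining task is the bookkeeping: at the end we need $N^+(u)=N^+_t(u)$ and $N^-(u)=N^-_t(u)$ for all $u$, with members of $N^-(\cdot)$ keyed by their current $k$-value and members of $N^+(\cdot)$ keyed by their ID.

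For correctness I would first dispatch the pair $\{a,b\}$: on a deletion $a$ is removed from $N^{\pm}(b)$ and $b$ from $N^{\pm}(a)$, and on an insertion $a$ and $b$ are reinserted into the lists dictated by comparing the already-correct values $k_t(a)$ and $k_t(b)$; since no vertex other than $a$ and $b$ changes its neighborhood, this is the only pair needing such a ``direct'' fix. The heart of the argument is then the structural lemma: for every $v\in\affectedset$ and every common neighbor $u\ne a$ of $v$ with $u\notin\mathcal{H}_v$, the data-structure entries relating $u$ and $v$ are identical at times $t-1$ and $t$, hence must be left untouched. Indeed, $u\notin\mathcal{H}_v$ forces $k_{t-1}(u)<\pi(a)$, so $u\notin\affectedset$ by the contrapositive of Observation~\ref{obs:inAiffalowerchanges} part~1, whence $k_t(u)=k_{t-1}(u)<\pi(a)$; meanwhile Observation~\ref{obs:inAiffalowerchanges} part~1 applied to $v$ gives $k_{t-1}(v)\ge\pi(a)$ and $k_t(v)\ge\pi(a)$. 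So $k(u)<k(v)$ strictly, both before and after the update, and therefore in $G_{t-1}$ and in $G_t$ alike we have $u\in N^-(v)\setminus N^+(v)$ and $v\in N^+(u)\setminus N^-(u)$, with $u$ stored in $N^-(v)$ under the unchanged key $k(u)$ and $v$ stored in $N^+(u)$ under its static ID; nothing about this pair moves.

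These two facts together give correctness. Any pair $(v,u)$ whose adjacency entries could differ between times $t-1$ and $t$ is either $\{a,b\}$—handled directly—or involves a vertex whose $k$- or $m$-value changed, i.e. an affected vertex $v$, and then the partner $u$ lies in $\mathcal{H}_v$ by the lemma (and the borderline case $u=a$, $v=b$ is consistent: on a deletion $b\in\affectedset$ forces $a$ to be in the MIS at time $t-1$, so $k_{t-1}(a)=\pi(a)$ and $a\in\mathcal{H}_b$, while on an insertion the $\{a,b\}$ pair is already fully fixed by the direct step). Since the subroutine, for each $v\in\affectedset$, iterates precisely over $u\in\mathcal{H}_v$ and recomputes—from $k_t(v)$, $k_t(u)$, and the adjacency in $G_t$—the membership of $v$ in $N^{\pm}(u)$ and of $u$ in $N^{\pm}(v)$ together with the keys of $u$ in $N^-(v)$ and of $v$ in $N^-(u)$, every necessary change is performed and no spurious one is; a vertex outside $\affectedset\cup\{a,b\}$ contributes no change of its own, and its list changes are carried out when its affected neighbors are processed. (That $\mathcal{H}_v$ is available for every $v\in\affectedset$ uses the algorithm's correctness, proved separately, which guarantees that every affected vertex is visited by the while-loop.) For the running time: the direct step is $O(\log\Delta)$; for each $v\in\affectedset$ and each $u\in\mathcal{H}_v$ we perform $O(1)$ balanced-BST operations in $N^{\pm}(v)$ and $N^{\pm}(u)$, each costing $O(\log\Delta)$ since every such tree has at most $\Delta$ elements, giving $O(|\mathcal{H}_v|\log\Delta)$ per $v$ and $O\big(\log\Delta\sum_{v\in\affectedset}|\mathcal{H}_v|\big)$ overall. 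By Claim~\ref{cl:correctfindrelevantneighbors}, for each fixed $v\in\affectedset$ we have $|\mathcal{H}_v|=O(\min\{\Delta,\tfrac{\log n}{\pi(a)}\})$ with probability at least $1-n^{-(c+1)}$; a union bound over the at most $n$ affected vertices makes all these bounds hold simultaneously with probability at least $1-n^{-c}$, yielding total time $O\big(|\affectedset|\cdot\min\{\Delta,\tfrac{\log n}{\pi(a)}\}\cdot\log\Delta\big)$.

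The step I expect to be the main obstacle is the structural lemma: one must enumerate exhaustively the ways an entry can become stale—$u$ crossing the $N^+(v)/N^-(v)$ split, $v$ crossing the $N^+(u)/N^-(u)$ split, or the key of $u$ inside $N^-(v)$ or of $v$ inside $N^-(u)$ drifting because some eliminator rank changed—and verify that each of them forces $u\in\mathcal{H}_v$ whenever $v$ is affected; and one must handle carefully the single borderline pair $\{a,b\}$, whose final state comes from the direct step while its possible re-processing inside the affected-vertex loop (when $b$ is affected) must respect $G_t$ so as not to reinsert a just-deleted edge. Once this bookkeeping is pinned down, both correctness and the probabilistic running-time bound follow immediately.
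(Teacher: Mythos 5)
Your proposal is correct and follows essentially the same route as the paper's proof: handle the pair $\{a,b\}$ directly, then show via Observation~\ref{obs:inAiffalowerchanges} part~1 that for any affected $v$ and any neighbor $u\notin\mathcal{H}_v$ we have $k_{t-1}(u)=k_t(u)<\pi(a)\le k_{t-1}(v),k_t(v)$, so no entry relating $u$ and $v$ needs to change, and conclude with the per-pair $O(\log\Delta)$ BST cost and the w.h.p.\ bound on $|\mathcal{H}_v|$ from Claim~\ref{cl:correctfindrelevantneighbors}. The only difference is cosmetic: you phrase the key step as a contrapositive (untouched pairs stay consistent) where the paper argues by contradiction on the set of edges requiring an update, and you spell out the $u=a$, $v=b$ borderline case that the paper leaves implicit.
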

\begin{proof}
The only edge update is between vertices $a$ and $b$ and the algorithm first accordingly addresses this change by updating $N^+(a)$,  $N^-(a)$,  $N^+(b)$, and  $N^-(b)$.  For the rest of the vertices, we do not have an edge update but the changes to the adjacency lists are resulted by the changes to the eliminators. For a vertex $v$, these changes are limited to moving its neighbors between $N^+(v)$ and $N^-(v)$ or possibly re-indexing its neighbors in $N^-(v)$ whose eliminator has changed.

We say an edge $(v, u)$ causes an update iff position of $u$ and $v$ or their indexing in each others' adjacency lists ($N^+$ or $N^-$) needs to be updated. Let  $T$ denote the set of these edges. Note that by definition of $N^+$ and $N^-$, if $u \notin \A$ and $v \notin \A$, then  $(v, u)\notin T$. This means that at least one end-point of any edge in $T$ is in $\A$.

\=Assume w.l.o.g. that for edge $(v, u)\in T$, we have $v\in \A$. We claim that $u\in \mathcal{H}_v$ should hold. To show this, we assume that $u\notin \mathcal{H}_v$ and obtain a contradiction. Recall that we have $u\notin \mathcal{H}_v$ iff $k(u) < \pi(a)$. By Observation~\ref{obs:inAiffalowerchanges} part 1, this would imply  $k_{t-1}(u) < k_{t-1}(v)$, $k_{t}(u) < k_{t}(v)$, and $u \notin \A$. Because of $k_{t-1}(u) < k_{t-1}(v)$ and $k_{t}(u) < k_{t}(v)$, the position of vertices $u$ and $v$ in each others adjacency lists remains unchanged. That is, we have $v \in N^+(u)$, $v \notin N^-(u)$, $u\in N^-(v)$, and $u\notin N^-(v)$ at both times $t$ and $t-1$.
 Moreover, since $u \not\in \affectedset$, we have $k_{t-1}(u) = k_t(u)$ and thus $u$ is already correctly indexed in $N^-(v)$. This is, however, a contradiction since position of $u$ and $v$ and their indexing in each others' adjacency lists is already updated and as a result $(v, u)\notin T$. Therefore, it should indeed hold that $u \in \mathcal{H}_v$.
 
 In subroutine \updateadjlists{}, for any vertex $v\in \A$ we go over its neighbors $u\in \mathcal{H}_v$ and determine the membership of vertex $v$ in adjacency lists of vertex $u$ and vice versa. To do so, by definition of $N^+$ and  $N^-$ we only need values of $k_{t}(v)$ and $k_{t}(u)$ which are assumed to be updated (in the statement of the claim). We then update $N^-(v)$, $N^+(v)$, $N^-(u)$ and $N^+(u)$ accordingly; thus the algorithm correctly updates the adjacency lists.
 
 To analyze the running time, using Claim~\ref{cl:correctfindrelevantneighbors}, we know that for any vertex $v\in \A$, set $\mathcal{H}_v$ has size  $O(\min\{\Delta, \frac{\log n}{\pi(a)}\})$ with probability at least $1-n^{-(c+1)}$. Also, each update takes  $O(\log \Delta)$ time since it consists of at most four insertions and deletions in adjacency lists which are stored as BSTs. Overall, this means that the running time can be bounded by $O(|\affectedset|\cdot \min\{\Delta, \frac{\log n}{\pi(a)}\} \cdot \log \Delta)$ with probability at least $1-n^{-c}$.
\end{proof}

\begin{claim}\label{claim:inv1-inv2}
	If Invariant~\ref{inv:S1} holds by some iteration $i$, then Invariant~\ref{inv:S2} also holds by iteration $i$.
\end{claim}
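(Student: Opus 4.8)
The plan is to fix iteration~$i$, let $v$ denote the lowest-rank vertex of $\S$ at its start, and establish the two directions of the equivalence in Invariant~\ref{inv:S2} in turn. Two elementary facts about the control flow of Algorithm~\ref{alg:update} are used throughout. (a)~The minimum rank among the vertices currently in $\S$ strictly increases from one iteration to the next, since processing $v$ removes $v$ and inserts into $\S$ only vertices of rank $>\pi(v)$; consequently every vertex is processed at most once, and because ``$\P{v}\gets\emptyset$'' is run only in the iteration that processes $v$ (which is iteration~$i$ or later), the set $\P{v}$ at the start of iteration~$i$ equals the set of all vertices ever inserted into it so far. (b)~Besides $b$, every vertex that has ever appeared in $\S$ entered it at Line~\ref{line:addthings}, and therefore belongs to $\mathcal{H}_{w'}$ for some processed vertex $w'$; by the correctness of the subroutine computing $\mathcal{H}_{w'}$ (Claim~\ref{cl:correctfindrelevantneighbors}) and the definition~\eqref{eq:defH} of $\mathcal{H}_{w'}$, such a vertex $w\ne b$ has $k_{t-1}(w)\ge\pi(a)$.

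For the ``only if'' direction, suppose $w\in\P{v}$. By fact~(a), $w$ was inserted into $\P{v}$ at Line~\ref{line:addthings} while some earlier iteration was processing $w$, and that line is reached only after the tests \isaffected{w} and the subsequent ``$w\in\flippedset$'' check both succeed; these were correct in that iteration as part of the overall induction establishing the invariants, so $w$ genuinely lies in $\flippedset$. At that moment one also had $v\in\mathcal{H}_{w}\subseteq N(w)$ and $\pi(v)>\pi(w)$. Hence $w$ satisfies all three conditions of Invariant~\ref{inv:S2}: $\pi(w)<\pi(v)$, $w\in\flippedset$, and $w$ is adjacent to $v$.

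For the ``if'' direction, fix $x\in N(v)$ with $x\in\flippedset$ and $\pi(x)<\pi(v)$; we show $x\in\P{v}$. Since $\flippedset\subseteq\affectedset$, we have $x\in\affectedset$. If $v=b$, then $i=1$ and $\P{b}=\emptyset$, so we must instead rule out the existence of such an $x$: as $\pi(x)<\pi(b)$ forces $x\ne b$, Observation~\ref{obs:inAiffalowerchanges} part~2 provides a flipped---hence affected---neighbor of $x$ of strictly smaller rank, and iterating this builds a strictly rank-decreasing chain of affected vertices, all of rank $<\pi(x)<\pi(b)$; by Observation~\ref{obs:inAiffalowerchanges} part~2 such a chain can only stop on reaching $b$, which is impossible since $b$ has rank $\pi(b)$---a contradiction, so Invariant~\ref{inv:S2} holds vacuously for $b$. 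Now assume $v\ne b$, so $i\ge2$. First, $x$ has been processed at some iteration $i'<i$: indeed $x\in\affectedset$ implies $k_{t-1}(x)\ne k_t(x)$, the algorithm alters $k(x)$ only inside \updateeliminator{x, \mathcal{H}_x}, i.e.\ only when $x$ is processed, while Invariant~\ref{inv:S1} together with $\pi(x)<\pi(v)$ forces $k(x)=k_t(x)$ at the start of iteration~$i$ (whereas an unprocessed $x$ would still carry $k(x)=k_{t-1}(x)\ne k_t(x)$), so $x$ has been processed, and by fact~(a) a vertex of rank $<\pi(v)$ can only be processed before iteration~$i$. When $x$ was processed, \isaffected{x} and ``$x\in\flippedset$'' both succeeded, so the loop at Line~\ref{line:addthings} ran over $\mathcal{H}_{x}$; moreover $v\in\mathcal{H}_{x}$, because $v\in N(x)$ and $k_{t-1}(v)\ge\pi(a)$---the latter by fact~(b), as $v\ne b$ has appeared in $\S$. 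Since $\pi(v)>\pi(x)$ as well, Line~\ref{line:addthings} inserted $x$ into $\P{v}$, which by fact~(a) was never reset afterwards; hence $x\in\P{v}$, as required.

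I expect the main obstacle to be the ``if'' direction, and in particular its $v=b$ sub-case, which hinges on the structural fact that whenever $\affectedset$ is non-empty the flipped vertices lie on rank-decreasing chains whose only possible ``source'' is $b$, so that $b$ itself can have no flipped neighbor of smaller rank; this has to be extracted from Observation~\ref{obs:inAiffalowerchanges} part~2 via the descent argument above. The only other somewhat delicate point is guaranteeing $k_{t-1}(v)\ge\pi(a)$ at the iteration where $x$ is processed, which is handled uniformly by fact~(b): any vertex other than $b$ that has been admitted to $\S$ entered it as a member of some $\mathcal{H}_{w'}$ and hence already has $k_{t-1}\ge\pi(a)$.
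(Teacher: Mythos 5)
Your proof is correct and rests on the same core argument as the paper's: a flipped lower-rank neighbor $u$ of $v$ must, by Invariant~\ref{inv:S1}, already have had $m(u)$/$k(u)$ updated, which only happens when $u$ is processed, at which point $u$ is inserted into $\P{v}$ because $v\in\mathcal{H}_u$ and $\pi(v)>\pi(u)$. You are in fact more thorough than the paper's half-page proof, which only argues the inclusion of the required vertices and asserts $k(v)\ge\pi(a)$ without justification, whereas you also verify the converse inclusion, the $v=b$ base case, and the bound $k_{t-1}(v)\ge\pi(a)$ via your fact~(b).
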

\begin{proof}
	Let $u$ be any vertex adjacent to $v$ with $\pi(u) < \pi(v)$ and $u\in \flippedset$. In other words, any vertex that should be in set \P{v} for the Invariant~\ref{inv:S2} to hold. Assuming that  Invariant~\ref{inv:S1} holds, we know that $m(u) = m_{t}(u)$ and $m(u) \neq m_{t-1}(u)$. Observe that in the algorithm, updating $m(u)$ only happens in subroutine $\updateeliminator{v, \mathcal{H}_v}$ which is followed by adding $u$ to set $\P{.}$ of any vertex in set $\mathcal{H}_u$ if $u$ is flipped. Set $\mathcal{H}_u$ by definition includes vertex $v$ since $k(v)\geq \pi(a)$ and $\pi(v)> \pi(u)$. This proves that set $\P{v}$ satisfies Invariant~\ref{inv:S2}.
\end{proof}

\begin{claim} \label{claim:inv1}
Let $v$ be the lowest-rank vertex in $\S$ in an arbitrary iteration $i$ of the algorithm. Assuming that Invariant~\ref{inv:S1} holds at the start of iteration $i$ we have:
\begin{enumerate}
	\item If $\S = \emptyset$ at the end of iteration $i$, for any vertex $u\in V$, $m(u) = m_{t}(u)$ and $k(u) = k_t(u)$.
	\item If  $\S \neq \emptyset$ at the end of iteration $i$, then Invariant~\ref{inv:S1} holds at the start of iteration $i+1$ as well.   
\end{enumerate}

\end{claim}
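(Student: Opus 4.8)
The plan is to prove both parts together by tracking the effect of a single iteration on the arrays $k(\cdot)$ and $m(\cdot)$, invoking the correctness of the subroutines (Claims~\ref{cl:isaffectedcorrectnessandtime}, \ref{cl:correctfindrelevantneighbors}, \ref{claim:updatek}) and the propagation-path structure of Observation~\ref{obs:inAiffalowerchanges}. Call a vertex \emph{processed} if it is chosen as the minimizer of $\S$ in some iteration. First I would record two elementary facts. \textit{(i)} Iteration~$i$ deletes the minimizer $v$ from $\S$ and only ever inserts vertices of rank $>\pi(v)$ (Line~\ref{line:addthings}); hence the minimum rank present in $\S$ strictly increases from one non-empty iteration to the next, so $\pi(v')>\pi(v)$ when $v'$ exists, every vertex processed in an iteration $\le i$ has rank $\le\pi(v)$, and $b$ (inserted at the outset) is never removed from $\S$ unless it is processed, in which case $\pi(b)\le\pi(v)$. \textit{(ii)} The adjacency lists are untouched during the whole while-loop, so Invariant~\ref{inv:N} holds throughout it; combined with Invariant~\ref{inv:S1} and Claim~\ref{claim:inv1-inv2} (which yields Invariant~\ref{inv:S2}), the hypotheses of Claims~\ref{cl:isaffectedcorrectnessandtime} and \ref{cl:correctfindrelevantneighbors} are met whenever these subroutines run during iteration~$i$.

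Next I would handle the routine coordinates of Invariant~\ref{inv:S1} at the start of iteration $i+1$. Iteration~$i$ modifies only $k(v)$ and $m(v)$ (inside \textsc{UpdateEliminator}). So for $u$ with $\pi(u)<\pi(v)$ the values are already $k_t(u),m_t(u)$ by Invariant~\ref{inv:S1} and remain so; for $u$ with $\pi(u)\ge\pi(v')>\pi(v)$ they remain $k_{t-1}(u),m_{t-1}(u)$. For $u=v$: if \textsc{IsAffected} reports $v\notin\affectedset$ then, correctly so by Claim~\ref{cl:isaffectedcorrectnessandtime}, $k_t(v)=k_{t-1}(v)$ and $m_t(v)=m_{t-1}(v)$, which equal the untouched current values; if it reports $v\in\affectedset$, then \textsc{UpdateEliminator} runs on the correctly built $\mathcal{H}_v$ (Claim~\ref{cl:correctfindrelevantneighbors}) and, by Claim~\ref{claim:updatek}, sets $k(v)=k_t(v)$, $m(v)=m_t(v)$. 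The only remaining obligation is thus that \emph{no affected vertex has rank strictly between $\pi(v)$ and $\pi(v')$} (for part~1, reading $\pi(v')$ as $+\infty$: no affected vertex has rank $>\pi(v)$), since for every $u$ with $\pi(v)<\pi(u)<\pi(v')$ the unchanged values $k_{t-1}(u),m_{t-1}(u)$ match $k_t(u),m_t(u)$ exactly when $u\notin\affectedset$, and for part~1 the non-affected vertices are trivially fine.

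This last statement is the crux and, I expect, the main obstacle. I would argue it by contradiction: given $u\in\affectedset$ with $\pi(v)<\pi(u)<\pi(v')$ (resp.\ $\pi(u)>\pi(v)$ for part~1), walk up the propagation path $b=w_1,w_2,\dots,w_k=u$ supplied by Observation~\ref{obs:inAiffalowerchanges} part~2 — each $w_j=\parent{\pi}{w_{j+1}}$ is a lowest-rank flipped neighbor of $w_{j+1}$, so $w_1,\dots,w_{k-1}\in\flippedset$ and $\pi(w_1)<\dots<\pi(w_k)$. First $\pi(b)\le\pi(v)$: otherwise $b$ is never removed from $\S$ through iteration~$i$ by fact~(i), so $b\in\S$ at the start of iteration $i+1$, giving $\pi(v')\le\pi(b)<\pi(v')$ (resp.\ contradicting $\S=\emptyset$), absurd. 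Let $j^\ast$ be the largest index with $\pi(w_{j^\ast})\le\pi(v)$; then $j^\ast<k$, $w_{j^\ast}\in\flippedset$, and $\pi(v)<\pi(w_{j^\ast+1})\le\pi(u)<\pi(v')$. Here is the key point: any flipped vertex whose $m$-value already equals $m_t$ must have been processed, because $m$ is initialized to the pre-update value $m_{t-1}\ne m_t$ and is changed only inside \textsc{UpdateEliminator}, which runs precisely when the vertex is the current minimizer of $\S$ and affected — and if that vertex is in addition flipped the algorithm then inserts into $\S$ all its relevant neighbors of larger rank. If $\pi(w_{j^\ast})<\pi(v)$, Invariant~\ref{inv:S1} gives $m(w_{j^\ast})=m_t(w_{j^\ast})$, so $w_{j^\ast}$ was processed in an earlier iteration; if $w_{j^\ast}=v$ it is processed during iteration~$i$ itself. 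In either case $w_{j^\ast+1}\in N(w_{j^\ast})$ and, as $w_{j^\ast+1}\in\affectedset$, Observation~\ref{obs:inAiffalowerchanges} part~1 gives $k_{t-1}(w_{j^\ast+1})\ge\pi(a)$, so $w_{j^\ast+1}\in\mathcal{H}_{w_{j^\ast}}$; since $\pi(w_{j^\ast+1})>\pi(w_{j^\ast})$, Line~\ref{line:addthings} inserts $w_{j^\ast+1}$ into $\S$. By fact~(i) $w_{j^\ast+1}$ is not processed (its rank exceeds $\pi(v)$), so $w_{j^\ast+1}\in\S$ at the start of iteration $i+1$, forcing $\pi(v')\le\pi(w_{j^\ast+1})<\pi(v')$ (resp.\ contradicting $\S=\emptyset$) — a contradiction. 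The subtle points to get right are extracting from Invariant~\ref{inv:S1} alone that a flipped vertex of rank below $\pi(v)$ was genuinely expanded by the algorithm, and the two edge cases $u=b$ and $w_{j^\ast}=v$.
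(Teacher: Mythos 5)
Your proof is correct and follows essentially the same route as the paper's: after disposing of the routine coordinates via the subroutine-correctness claims, everything reduces to showing that every affected vertex of rank above $\pi(v)$ must already sit in $\S$ at the end of iteration $i$, which you establish through the same mechanism the paper uses (a flipped lower-rank neighbor whose $m$-value is already corrected must have been processed, and processing a flipped vertex inserts its relevant higher-rank neighbors into $\S$). The only cosmetic difference is that you locate the witness by walking the propagation path from $b$ to its crossing point at rank $\pi(v)$, whereas the paper takes the lowest-rank affected vertex above $\pi(v)$ and applies the same argument to its parent in one step.
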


\begin{proof}
Let $\S'$ denote set $\S$ at the end of iteration $i$ and let $v'$ be the lowest-rank vertex in that. Throughout the proof, by $\S$ we mean set $\S$ at the start of iteration $i$ and we use $v$ to refer to its lowest-rank vertex. Let us first review 
Invariant~\ref{inv:S1}. It states that for any vertex $u$,  if $\pi(u)< \pi(v)$ then  $k(u)=k_{t-1}(u)$, and $m(u)=m_{t-1}(u)$ hold and otherwise we have $k(u)=k_{t-1}(u)$, and $m(u)=m_{t-1}(u)$. We first show that  $k(v)$ and $m(v)$ are updated at the end of iteration $i$. By claim~\ref{claim:inv1-inv2}, we know that Invariant~\ref{inv:S2} holds at the start of iteration $i$ and by Claim~\ref{cl:isaffectedcorrectnessandtime}, we know that holding Invariant~\ref{inv:S2} means that subroutine $\isaffected{v}$ correctly detects if $v\in \A$ or not. Moreover, by Claim~\ref{claim:updatek} if $v\in \A$, in the next step, algorithm correctly updates $k(v)$ and $m(v)$. At this point of the algorithm, we know that for any vertex $u$ with $\pi(u) \leq \pi(v)$, we have  $k(u)= k_{t}(u)$ and $m(u)= m_{t}(u)$. 

Now, let $u$ be the vertex with the lowest-rank among the vertices in $\A$ whose rank is greater than $\pi(v)$. To complete the proof it suffices if we show that if such a vertex exists, then $u\in \S'$. This means that if $S' = \emptyset$, then for any vertex $u\in V$, we have $m(u) = m_{t}(u)$ and $k(u) = k_t(u)$. Moreover, for the case of $S' \neq \emptyset$, it results that for any vertex $u$, with $\pi(u) < \pi(v')$ we have $m(u) = m_{t}(u)$ and $k(u) = k_t(u)$ or in the other words that Invariant~\ref{inv:S1} holds at the start of iteration $i+1$. We use proof by contradiction by assuming that there exists a vertex $u$ in set $\A$ but not in $S'$ such that for any vertex $u'$ with $\pi(u') < \pi(u)$ we have  $m(u')= m_t(v')$, and $k(u') = k_t(u')$. By Observation~\ref{obs:everyvertexhasneighborinF}, any vertex in $\A$ has a neighbor in $\flippedset$ with a lower rank. Let $u'$ be such a neighbor of $u$. By the assumption that all neighbors of $u$ with a lower rank has updated $m(.)$, we have $m(u') \neq m_{t-1}(u')$.  Observe that in the algorithm, updating $m(u')$ only happens in subroutine $\updateeliminator{v, \mathcal{H}_{u'}}$ which is followed by adding vertices in $\mathcal{H}_u$ to $\S$. 
Set $\mathcal{H}_u$, by definition, includes vertex $u$ since $k(u)\geq \pi(a)$ (otherwise by Observation~\ref{obs:inAiffalowerchanges}, $u \notin \A$ ) and $\pi(u) > \pi(u')$. Thus, we obtain a contradiction and the proof is completed.
\end{proof}

\begin{claim}\label{cl:invshold}
	Invariants~\ref{inv:S1}, \ref{inv:S2}, and \ref{inv:N} hold throughout the algorithm with probability 1.
\end{claim}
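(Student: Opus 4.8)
The plan is to prove all three invariants simultaneously by induction on the iteration number of the while loop in Algorithm~\ref{alg:update}, using the helper claims that have already been established. Since the execution of Algorithm~\ref{alg:update} is completely determined once the ranking $\pi$ and the update sequence are fixed, there is no additional randomness to handle here: ``with probability 1'' simply records that the invariants hold for every fixed $\pi$, and the proof is entirely structural.

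\textbf{Base case.} In the first iteration $\S=\{b\}$, so the lowest-rank vertex is $v=b$, and every $\P{w}=\emptyset$ (by the pre-processing initialization together with the reset on the last line of each iteration). Invariant~\ref{inv:N} holds because no adjacency list has been touched yet---they are modified only inside \updateadjlists{} on Line~\ref{line:updateN}. For Invariant~\ref{inv:S1}, the data structures currently hold $k_{t-1}$ and $m_{t-1}$, so it suffices to show $k_{t-1}(u)=k_t(u)$ and $m_{t-1}(u)=m_t(u)$ for every $u$ with $\pi(u)<\pi(b)$. Let $U:=\{u\in V:\pi(u)<\pi(b)\}$. Since $b\notin U$, the edge $(a,b)$ lies in neither $G_{t-1}[U]$ nor $G_t[U]$, so $G_{t-1}[U]=G_t[U]$; and when the greedy LFMIS construction processes a vertex $u\in U$, its fate depends only on the edges from $u$ to strictly lower-rank vertices, all of which lie in $U$ and are identical in $G_{t-1}$ and $G_t$ (the edge $(a,b)$ is not among them, as $\pi(b)>\pi(a)$). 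Hence the MIS membership and the eliminator of every $u\in U$ coincide in the two graphs, which is exactly Invariant~\ref{inv:S1}; this argument mirrors the proof of Observation~\ref{obs:inAiffalowerchanges} part~1. Finally, Invariant~\ref{inv:S2} follows from Invariant~\ref{inv:S1} by Claim~\ref{claim:inv1-inv2} (equivalently: no vertex of rank below $\pi(b)$ is affected, hence none is flipped, so every $\P{w}$ should indeed be empty).

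\textbf{Inductive step.} Assume Invariants~\ref{inv:S1}, \ref{inv:S2}, \ref{inv:N} hold at the start of iteration $i$, and suppose $\S$ is non-empty at the end of iteration $i$ (otherwise the algorithm proceeds to Line~\ref{line:updateN} and there is nothing further to check for these invariants). Invariant~\ref{inv:N} again holds at the start of iteration $i+1$ since the while loop body never modifies an adjacency list. For Invariant~\ref{inv:S1}: Claim~\ref{claim:inv1-inv2} promotes Invariant~\ref{inv:S1} at iteration $i$ to Invariant~\ref{inv:S2} at iteration $i$, and with all three invariants in force at the start of iteration $i$, Claims~\ref{cl:isaffectedcorrectnessandtime}, \ref{cl:correctfindrelevantneighbors}, and \ref{claim:updatek} ensure that \isaffected{}, \findrelneighbors{}, and \updateeliminator{} behave correctly during iteration $i$; feeding this into Claim~\ref{claim:inv1} part~2 yields Invariant~\ref{inv:S1} at the start of iteration $i+1$. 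Then Invariant~\ref{inv:S2} at iteration $i+1$ follows from Invariant~\ref{inv:S1} at iteration $i+1$ by another application of Claim~\ref{claim:inv1-inv2}. This closes the induction, so all three invariants hold at the start of every iteration, and Invariant~\ref{inv:N} additionally holds at every point up to Line~\ref{line:updateN}.

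\textbf{Where the difficulty lies.} The claim is mostly bookkeeping---the substantive work is already packaged in Claims~\ref{claim:inv1-inv2} and \ref{claim:inv1}---so the only points requiring care are (a) the base case, via the observation that inserting or deleting $(a,b)$ cannot affect any vertex of rank below $\pi(b)$, and (b) confirming that the dependency chain ``Invariant~\ref{inv:S1} at iteration $i$ $\Rightarrow$ Invariant~\ref{inv:S2} at iteration $i$ $\Rightarrow$ subroutine correctness at iteration $i$ $\Rightarrow$ Invariant~\ref{inv:S1} at iteration $i+1$'' is acyclic, so that the simultaneous induction is well-founded rather than circular.
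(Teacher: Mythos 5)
Your proof is correct and follows essentially the same route as the paper's: induction over iterations with Invariant~\ref{inv:N} handled by noting adjacency lists are only touched at Line~\ref{line:updateN}, Invariant~\ref{inv:S2} derived from Invariant~\ref{inv:S1} via Claim~\ref{claim:inv1-inv2}, the base case via the observation that the update to $(a,b)$ cannot change the eliminator of any vertex of rank below $\pi(b)$, and the inductive step delegated to Claim~\ref{claim:inv1}. If anything, your write-up is slightly more careful than the paper's (you spell out the base-case prefix argument and the non-circularity of the dependency chain explicitly), but the substance is identical.
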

 
\begin{proof}
First, observe that Invariant~\ref{inv:N} holds since Line~\ref{line:updateN} is the only part of the algorithm that we modify the adjacency lists. Moreover, by Claim~\ref{claim:inv1-inv2}, the correctness of Invariant~\ref{inv:S2} results from  Invariant~\ref{inv:S2}. Thus, we only need to show that 	Invariants~\ref{inv:S1} holds throughout the algorithm. We do so using induction. As the base case, in the first iteration of the algorithm we have $S=\{b\}$ (or $S=\emptyset $ which does not need a proof). We need to show that for any vertex $u$ if $\pi(u)< \pi(b)$ we have $k(u)=k_t(u)$, and $m(u)=m_t(u)$ and  if $\pi(u)> \pi(b)$ we have $k(u)=k_{t-1}(u)$, and $m(u)=m_{t-1}(u)$. Before the start of this iteration we have not changed $k(u)$ and $m(u)$ of any vertex $u$ thus for all of them $k(u) = k_{t-1}(u)$ and $m(u)=m_{t-1}(u)$. Moreover, by Observation~\ref{obs:everyvertexhasneighborinF}, updating edge $e$ does not affect a vertex $u$ with $\pi(u) < \pi(b)$ which means that for any such vertex we have $k_t(u)= k_{t-1}(u)$. Therefore,  we conclude that Invariants~\ref{inv:S1}  holds for the base case. This completes the proof since 
		the induction step is a direct result of Claim~\ref{claim:inv1}.
	\end{proof}
	
We continue with a simple observation and then turn to formally prove the running time.

\begin{observation}\label{obs:increasing}
Let $v_i$ and $v_j$ respectively denote the lowest-rank vertices of \S{} in two arbitrary iterations $i$ and $j$ of Algorithm~\ref{alg:update}. If $i<j$ then $\pi(v_i) < \pi(v_j)$. 
\end{observation}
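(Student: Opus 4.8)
The plan is to reduce the statement to the one-step inequality $\pi(v_i) < \pi(v_{i+1})$, valid for every iteration $i$ such that iteration $i+1$ takes place, and then obtain the general case $\pi(v_i) < \pi(v_j)$ for $i<j$ by a trivial induction on $j-i$ that chains the one-step inequalities. So the whole argument is really about comparing the contents of $\S$ at the beginning of iteration $i$ with its contents at the beginning of iteration $i+1$.

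To prove the one-step inequality, I would simply inspect which lines of Algorithm~\ref{alg:update} modify $\S$ during the body of iteration $i$. There are exactly two: the (possibly repeated) insertion at Line~\ref{line:addthings}, which inserts a vertex $u$ only when $\pi(u) > \pi(v_i)$ (the guard of the \texttt{if}); and the removal of $v_i$ at the end of the iteration. Consequently, every vertex $w$ present in $\S$ at the start of iteration $i+1$ falls into one of two cases. Either $w$ was inserted during iteration $i$, in which case $\pi(w) > \pi(v_i)$ directly; or $w$ already belonged to $\S$ at the start of iteration $i$. In the second case $w \neq v_i$, because $v_i$ was removed and the only vertices inserted during iteration $i$ have rank strictly greater than $\pi(v_i)$, so $v_i$ is never reinserted; hence, since $v_i$ was the minimum-rank element of $\S$ at the start of iteration $i$, we get $\pi(w) \ge \pi(v_i)$, and then $\pi(w) > \pi(v_i)$ using that $\pi$ assigns pairwise distinct ranks (Section~\ref{sec:preliminaries}). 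In all cases $\pi(w) > \pi(v_i)$, so the minimum-rank element $v_{i+1}$ of $\S$ at the start of iteration $i+1$ satisfies $\pi(v_{i+1}) > \pi(v_i)$.

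I do not expect a genuine obstacle here: the only subtlety worth stating explicitly is that $v_i$ cannot re-enter $\S$ during iteration $i$, and this is immediate from the strict guard $\pi(u) > \pi(v_i)$ at Line~\ref{line:addthings} combined with distinctness of ranks, which is precisely what lets the ``already in $\S$'' case exclude $w = v_i$. The remainder is pure bookkeeping over the lines of Algorithm~\ref{alg:update} that touch $\S$, together with the one-line induction on $j-i$.
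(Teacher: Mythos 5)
Your proof is correct and follows essentially the same route as the paper's: reduce to the consecutive case $j=i+1$, then observe that $v_{i+1}$ is either inserted during iteration $i$ (hence has rank strictly above $\pi(v_i)$ by the guard at Line~\ref{line:addthings}) or was already in $\S$ and is not $v_i$, which was removed. Your explicit remarks about $v_i$ not being reinserted and about distinctness of ranks only make the paper's terser argument more careful.
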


\begin{proof}
We show that this claim holds for $j = i+1$ which can be inductively used to generalize it to any arbitrary $i$ and $j$. Let $\S_{i}$ and $\S_{i+1}$ respectively denote set $\S$ at the beginning of iteration $i$ and set $\S$ at the beginning of iteration  $i+1$. We know that $v_{i+1}$ is either inserted to $\S$ in iteration $i$ or that it is in set $\S_{i}$. Observe that any vertex added to $\S$ in the $i$-th iteration has rank lower than $\pi(v_i)$ and that $v_i$ is the lowest-rank vertex in $\S_{i}$. As a result $\pi(v_i) < \pi(v_{i+1})$.
\end{proof}

\begin{claim} \label{claim:runningtime}
	With probability at least $1-n^{-c}$, the total running time of the algorithm until the set $\S$ becomes empty is at most $O(|\affectedset| \cdot \log \Delta \cdot \min\{\frac{\log n}{\pi(a)}, \Delta \})$.	
\end{claim}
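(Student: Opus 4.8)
The plan is to bound the running time by summing, over the whole execution of the while loop in Algorithm~\ref{alg:update}, the costs of its four subroutines, together with the $O(\log n)$ per-iteration cost of maintaining $\S$ and the constant-time appends to the arrays $\P{\cdot}$, and to show that each of these aggregates is $O\big(|\affectedset|\cdot\log\Delta\cdot\min\{\frac{\log n}{\pi(a)},\Delta\}\big)$. The decisive input is Claim~\ref{cl:correctfindrelevantneighbors}: every vertex $v$ for which $\mathcal{H}_v$ is ever computed is affected, and with probability at least $1-n^{-(c+1)}$ satisfies $|\mathcal{H}_v|=O(\min\{\Delta,\frac{\log n}{\pi(a)}\})$. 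First I would take a union bound over the at most $n$ vertices for which \findrelneighbors{} is called, so that all these size bounds hold simultaneously with probability at least $1-n^{-c}$, and condition on that event; combined with the high-probability guarantee of Claim~\ref{cl:adjlistscorrect}, this yields the failure probability in the statement.

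Second, I would set up the combinatorial bookkeeping. By Observation~\ref{obs:increasing} the lowest rank in $\S$ strictly increases from one iteration to the next, so a processed vertex is never re-inserted into $\S$, and hence the number of iterations equals the number of distinct vertices ever inserted into $\S$. By Line~\ref{line:addthings}, a vertex $u$ enters $\S$ only when some $v\in\flippedset$ with $\pi(v)<\pi(u)$ is processed and $u\in\mathcal{H}_v$, and at that same moment $v$ is placed in $\P{u}$. Consequently the number of insertions into $\S$, and also the total number of pairs ever placed in the arrays $\P{\cdot}$, are each at most $1+\sum_{v\in\flippedset}|\mathcal{H}_v|=O\big(|\flippedset|\cdot\min\{\Delta,\frac{\log n}{\pi(a)}\}\big)\le O\big(|\affectedset|\cdot\min\{\Delta,\frac{\log n}{\pi(a)}\}\big)$, using $\flippedset\subseteq\affectedset$. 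Since $\P{v}$ is reset to $\emptyset$ as soon as $v$ is processed, this double counting also bounds $\sum_v|\P{v}|$, summed over processed vertices at their processing times, by the same quantity.

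Third, I would add up the four contributions. (i) Loop bookkeeping: each iteration costs $O(\log n)$ for the extract-min and insertions into $\S$, plus $O(|\P{v}|)$ inside \isaffected{v} (Claim~\ref{cl:isaffectedcorrectnessandtime}); summed over all iterations, and using the bounds on the iteration count and on $\sum_v|\P{v}|$, this is $O\big(|\affectedset|\cdot\log\Delta\cdot\min\{\Delta,\frac{\log n}{\pi(a)}\}\big)$, matching the accounting in Section~\ref{sec:parametrizedrunningtime}. (ii) \findrelneighbors{v,\pi(a)} is invoked only when \isaffected{v} returns true, i.e., only for $v\in\affectedset$, and costs $O(|\mathcal{H}_v|\log\Delta)$ by Claim~\ref{cl:correctfindrelevantneighbors}; summing over $v\in\affectedset$ gives $O\big(|\affectedset|\cdot\log\Delta\cdot\min\{\Delta,\frac{\log n}{\pi(a)}\}\big)$. (iii) \updateeliminator{v,\mathcal{H}_v} is likewise invoked only for $v\in\affectedset$ and costs $O(|\mathcal{H}_v|)$ by Claim~\ref{claim:updatek}, summing to the same bound without the $\log\Delta$ factor. (iv) When the loop exits we have $\S=\emptyset$, so Claim~\ref{claim:inv1} (part~1), together with the fact that the invariants hold throughout (Claim~\ref{cl:invshold}), gives $k(v)=k_t(v)$ for every vertex $v$, which is exactly the hypothesis of Claim~\ref{cl:adjlistscorrect}; that claim then bounds \updateadjlists{} by $O\big(|\affectedset|\cdot\min\{\Delta,\frac{\log n}{\pi(a)}\}\cdot\log\Delta\big)$ with probability at least $1-n^{-c}$. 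Adding (i)--(iv) gives the claimed bound.

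The step I expect to be the main obstacle is the bookkeeping in the second paragraph, rather than any individual subroutine analysis: one has to argue carefully that processed vertices are never re-inserted into $\S$ (this is exactly where Observation~\ref{obs:increasing} is used), that each $v\in\flippedset$ is placed in at most $|\mathcal{H}_v|$ of the arrays $\P{\cdot}$ so that $\sum_v|\P{v}|$ stays small even though \isaffected{} is run on \emph{every} vertex popped from $\S$ and not only on the affected ones, and that the three expensive subroutines are triggered only for the vertices \isaffected{} declares affected. Once these are in place, the rest is a routine summation of the per-subroutine bounds already established in Claims~\ref{cl:isaffectedcorrectnessandtime}, \ref{cl:correctfindrelevantneighbors}, \ref{claim:updatek} and \ref{cl:adjlistscorrect}, and a union bound over their $O(n)$ failure events.
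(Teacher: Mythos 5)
Your proposal is correct and follows essentially the same route as the paper: Observation~\ref{obs:increasing} to rule out re-insertion into $\S$, double counting of insertions into $\S$ and $\P{\cdot}$ via Line~\ref{line:addthings} and the high-probability bound on $|\mathcal{H}_v|$ from Claim~\ref{cl:correctfindrelevantneighbors}, and then summing the per-subroutine costs. The only difference is that you also fold in \updateadjlists{}, which lies outside the while loop and is accounted for separately in the paper's proof of Lemma~\ref{lem:updatealg}; this is harmless since it satisfies the same bound.
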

\begin{proof}
To prove this claim, we first show that $|\S|$ and $\sum_{v\in \S} |\P{v}|$ are both  $O(|\affectedset| \cdot \log \Delta \cdot \min\{\frac{\log n}{\pi(a)}, \Delta \})$ with probability at least $1-n^{-c}$. Observe that in the algorithm, we only add vertices to these sets in Line~\ref{line:addthings}. Moreover, by Observation~\ref{obs:increasing}, each vertex is removed from $\S$ at most once. Thus, the algorithm runs this  line for any vertex $v\in \A$ and any vertex $u$ in its $\mathcal{H}_v$ only once. Therefore, by Claim~\ref{cl:correctfindrelevantneighbors}, the number of times the algorithm adds a vertex to these sets adds up to $O(|\affectedset| \cdot \log \Delta \cdot \min\{\frac{\log n}{\pi(a)}, \Delta \})$ with probability at least $1-n^{-c}$.
Note that $|\S|$ is equal to the number of iterations  in the algorithm and  $\sum_{v\in \S} |\P{v}|$ is the overall time that the subroutine $\isaffected{v}$ takes over all iterations. Moreover, for any vertex   $v\in \A$ we run Lines~\ref{line:if}-\ref{line:addthings} of the algorithm which by Claim~\ref{claim:updatek} and Claim~\ref{cl:correctfindrelevantneighbors} take $O(\log \Delta \cdot \min\{\frac{\log n}{\pi(a)}, \Delta \})$ time. To sum up, the total running time of the algorithm until the set $\S$ becomes empty is $O(|A|\cdot \log \Delta \cdot \min\{\frac{\log n}{\pi(a)}, \Delta \})$ with probability at least $1-n^{-c}$.
\end{proof}

We are now ready to prove Lemma~\ref{lem:updatealg}.

\restate{Lemma~\ref{lem:updatealg}}{\updatealg{}}
%
\begin{proof}
By Claim~\ref{claim:runningtime}, with probability at least $1-n^{-c}$ it takes $O(|\affectedset| \cdot \log \Delta \cdot \min\{\frac{\log n}{\pi(a)}, \Delta \})$ time until set $\S$ becomes empty. We further show that when this happens we have $m(v) = m_t(v)$ and $k(v) = k_t(v)$. This is a direct result of Claim~\ref{cl:invshold} and Claim~\ref{claim:inv1}. The former stated that Invariant~\ref{inv:S1} holds throughout the algorithm and the latter states that if Invariant~\ref{inv:S1} holds in the last iteration of the algorithm, then for any vertex $u$ we have $m(v) = m_t(v)$ and $k(v) = k_t(v)$.	 Moreover, using Claim~\ref{cl:adjlistscorrect} we know that subroutine \updateadjlists{} correctly updates the adjacency lists and with probability at least $1-n^{-c}$, it takes $O(|\affectedset|  \cdot \log \Delta \cdot \min\{\Delta, \frac{\log n}{\pi(a)}\})$ time given that for any vertex $v$ we have $k(v) = k_t(v)$. This completes the proof and we obtain that  with probability at least $1-n^{-c}$, Algorithm~\ref{alg:update} correctly updates all the data structures in $O(|\affectedset| \cdot \log \Delta \cdot \min\{\frac{\log n}{\pi(a)}, \Delta \})$ time.
\end{proof}



%
%
%

\section{Extension to Fully Dynamic Maximal Matching}\label{sec:MM}

\smparagraph{What is Different?} It is well-known that MM of a graph can be found by first taking its line-graph and then constructing an MIS on it. Doing so, the edges in the original graph that correspond to the MIS nodes in the line-graph will form an MM. However, the line-graph may be much larger than the original graph and thus expensive to construct and maintain. Nonetheless, because of the very specific structure of line-graphs, we can indeed implement (a simpler variant of) the same algorithm for MM without going through an explicit construction of the line-graph. In what follows, we highlight the main differences between our MIS algorithm and its MM implementation.

The first difference is that for LFMM, the random ranking $\pi$ has to be drawn on the edges instead of the vertices and thus we cannot fix $\pi$ in the pre-processing step. However, this is easy to handle: We draw the rank $\pi(e) \in [0, 1]$ of any edge $e$ randomly upon its arrival.

The second difference is where the specific structure of line-graphs helps significantly. The set of edges whose MM-statuses change as a result of an edge update form a single path or a single cycle. In fact, this holds true for any arbitrary ranking $\pi$ over the edges. This is in sharp contrast with MIS, where the propagations may branch (consider a star and assume that the center leaves the MIS). This branching is precisely what complicates the proof of Theorem~\ref{thm:vertexA} for MIS. Since we do not have this problem for MM, we can directly bound the set of edges with different MM-statuses by $O(\log n)$, w.h.p., using a reduction to the parallel round complexity of random-order LFMM \cite{DBLP:conf/spaa/BlellochFS12,DBLP:conf/soda/FischerN18}. Therefore, the analog of Theorem~\ref{thm:vertexA} for MM is significantly easier to prove. It also simplifies the algorithm we use to  detect the changes to MM (compared to MIS).

The third difference is simple, but plays a crucial role in both adapting the MIS algorithm to MM and also simplifying it. Instead of storing the adjacency lists on the edges, which is the natural idea if one constructs the line-graph explicitly, we can simply store them on the vertices. In fact, because of this difference, it also turns out that for MM, we do not need to partition the adjacency lists into $N^+$ and $N^-$. That is, we can afford to keep an adjacency list $N(v)$ on each vertex $v$ including all incident edges to $v$, where each edge $e \in N(v)$ is indexed by its eliminator's rank. The main reason that this is feasible, here, is that if the eliminator of an edge $e=(u, v)$ changes, we only need to re-index $e$ in $N(u)$ and $N(v)$. However, for MIS, if the eliminator of a vertex $u$ changes, we may have to re-index $u$ in the adjacency lists of all of its neighbors.

\smparagraph{Algorithm Setup.} Suppose that we have fixed the ranking $\pi$ on the edges. As described above, we can draw $\pi(e) \in [0, 1]$ for any edge $e$ in the graph at the time of its arrival. In what follows, considering update number $t$, which can be an edge insertion or deletion, we describe how to address it and update $\lfmm{G_{t-1}, \pi}$ to $\lfmm{G_t, \pi}$ in $\polylog n$ time.

Analogous to the MIS algorithm, we define $\affectedset := \{w \mid \eliminator{G_{t}}{\pi}{w} \not= \eliminator{G_{t-1}}{\pi}{w} \}$ to be the set of edges whose eliminator changes after the update and call these the {\em affected} edges. Moreover, we define $\flippedset$ to be the set of edges whose MM-status changes after the update; we call these the {\em flipped} edges. Note that $\flippedset \subseteq \affectedset$. We first provide the following algorithm.

\newcommand{\updatealgmatching}[0]{There is an algorithm to update \lfmm{G, \pi} and the data structures required for it after insertion or deletion of any edge $f=(a, b)$ in $O\left(|\flippedset| \min\{\Delta, \frac{\log n}{\pi(f)} \} \log\Delta  \right)$ time, w.h.p.}

\begin{lemma}\label{lem:updatealgmatching}
	\updatealgmatching{}
\end{lemma}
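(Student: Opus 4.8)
The plan is to mirror the structure of the proof of Lemma~\ref{lem:updatealg} (the MIS update algorithm), but exploiting the special path/cycle structure that the set $\affectedset$ of affected edges inherits from line graphs. First I would set up the data structures: for each vertex $v$ maintain an adjacency list $N(v)$ containing all incident edges, stored as a self-balancing BST in which each edge $e$ is indexed by $\pi(\eliminator{G}{\pi}{e})$; also maintain, for each edge $e$, its eliminator rank $k(e)$ and a bit $m(e)$ indicating whether $e \in \lfmm{G,\pi}$. Crucially, since when $k(e)$ changes for $e=(u,v)$ we need only re-index $e$ in the two trees $N(u)$ and $N(v)$, there is no need for the $N^+/N^-$ split that MIS required. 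When the edge $f=(a,b)$ is inserted or deleted we draw/forget $\pi(f)$ and run an iterative propagation: maintain a priority queue $\S$ of candidate affected edges ordered by $\pi$, seeded with $f$ (and with the incident edges of $f$ of rank $> \pi(f)$ if $f$ itself flips), and repeatedly extract the minimum-rank edge $e$, decide whether $e \in \affectedset$, update $k(e),m(e)$, and if $e\in\flippedset$ push the relevant higher-rank incident edges of $e$ (analogous to $\mathcal{H}_e$ in the MIS case).

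The key quantitative step is to bound $|\mathcal{H}_e|$, the set of incident edges of $e$ that we must inspect when $e$ is affected. The analog of Observation~\ref{obs:inAiffalowerchanges} should give that any affected edge $e$ has $\pi(\eliminator{G_{t-1}}{\pi}{e}) \ge \pi(f)$, so the edges we inspect all lie in the residual graph after processing the prefix of edges with rank $< \pi(f)$; by Proposition~\ref{prop:sparseMM}, every vertex has $O(\min\{\Delta, \frac{\log n}{\pi(f)}\})$ incident edges in this residual graph w.h.p., so $|\mathcal{H}_e| = O(\min\{\Delta, \frac{\log n}{\pi(f)}\})$. Since propagation only branches out from flipped edges, the total number of edges ever pushed to $\S$, and hence the total number of $\mathcal{H}_e$-explorations, is $O(|\flippedset| \min\{\Delta,\frac{\log n}{\pi(f)}\})$; each exploration costs $O(\log\Delta)$ per incident edge for the BST search/traversal, and re-indexing an affected edge in its two adjacency trees also costs $O(\log\Delta)$. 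Summing gives the claimed $O(|\flippedset|\min\{\Delta,\frac{\log n}{\pi(f)}\}\log\Delta)$ bound.

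The correctness argument would go through invariants analogous to Invariants~\ref{inv:S1}--\ref{inv:N}: at the start of each iteration, edges of rank below the current minimum in $\S$ already hold their time-$t$ values of $k$ and $m$, the rest hold their time-$(t-1)$ values, and adjacency lists are updated only at the end. One proves by induction on iterations that every affected edge is eventually placed in $\S$ — this uses the analog of Observation~\ref{obs:everyvertexhasneighborinF}, namely that (when $\affectedset\ne\emptyset$) every affected edge other than the ``root'' has a flipped incident edge of strictly smaller rank, so the monotone propagation paths reach it. I expect the main obstacle to be formulating and verifying the edge-analog of Observation~\ref{obs:inAiffalowerchanges} together with the ``IsAffected'' test for an edge: unlike the vertex case, the eliminator of an edge $e=(u,v)$ can change because a matched edge incident to either $u$ or $v$ flips, so the bookkeeping (a $\P{\cdot}$-style list of low-rank flipped incident edges) must be maintained for each edge, and one must check that this remains cheap; everything else is a routine specialization of Section~\ref{sec:algorithm}, in fact simpler because $\affectedset$ is a single path or cycle rather than a branching structure.
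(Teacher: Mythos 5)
Your proposal is correct and follows essentially the same route as the paper: the same per-vertex BST $N(v)$ indexed by eliminator rank, the same monotone propagation through a priority queue seeded with $f$, the same use of Proposition~\ref{prop:sparseMM} to bound $|\mathcal{H}_e|$ by $O(\min\{\Delta,\frac{\log n}{\pi(f)}\})$ via the edge-analog of Observation~\ref{obs:inAiffalowerchanges} (the paper's Observation~\ref{obs:twoparts}), and the same accounting that charges explorations to flipped edges and re-indexing to affected edges via $|\affectedset| \leq O(|\flippedset|\min\{\Delta,\frac{\log n}{\pi(f)}\})$. The one obstacle you flag---the ``IsAffected'' test and the cost of maintaining $\P{\cdot}$-style lists per edge---is dissolved in the paper by instead storing, for each vertex $v$, the rank $k(v)$ of its matched incident edge (or $\infty$), so that deciding whether $e=(u,v)$ joins the matching is just the $O(1)$ comparison of $\min\{k(u),k(v)\}$ with $\pi(e)$; this exploits the fact that each vertex has at most one matched incident edge and is precisely where the line-graph structure simplifies the MIS machinery.
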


Note a subtle difference between Lemma~\ref{lem:updatealgmatching} and the similar Lemma~\ref{lem:updatealg} we had for MIS: Here, the running time is parametrized by $|\flippedset|$ whereas in Lemma~\ref{lem:updatealg} it is parametrized by $|\affectedset|$.

We will later prove in Section~\ref{sec:analysisMM} that the running time in Lemma~\ref{lem:updatealgmatching} is actually bounded by $O(\log^2 \Delta \log^2 n)$ in expectation, thus, proving Theorem~\ref{thm:MM}.

\subsection{Data Structures}\label{sec:dsMM}

We maintain the following data structures on each edge $e$ in graph $G$.

\begin{highlighttechnical}
	\begin{itemize}[leftmargin=15pt]
		\item $m(e)$: A binary variable that is 1 if edge $e \in \lfmm{G, \pi}$ and 0 otherwise.
		\item $k(e)$: The rank of $e$'s eliminator, i.e., $k(e) = \pi(\eliminator{G}{\pi}{e})$. Note that $m(e)=1$ iff $k(e) = \pi(e)$.
	\end{itemize}
\end{highlighttechnical}

Furthermore, for any vertex $v$, we maintain the following data structures.

\begin{highlighttechnical}
	\begin{itemize}[leftmargin=15pt]
		\item $k(v)$: If an edge $e \in \lfmm{G, \pi}$ is connected to $v$, then $k(v) = \pi(e)$; otherwise, $k(v) = \infty$.
		\item $N(v)$: The set of edges connected to vertex $v$. The set $N(v)$ is stored as a self-balancing binary search tree and each edge $e$ in it is indexed by $k(e)$.
	\end{itemize}
\end{highlighttechnical}

Similar to MIS, in the pre-processing step, we can simply construct the LFMM of the original graph $G_0=(V, E_0)$ and fill in the data structures above in $O((|V|+|E_0|)\log n)$ time.

\subsection{The Algorithm}\label{sec:algMM}

The following observation is analogous to Observation~\ref{obs:inAiffalowerchanges} for MIS and motivates the same iterative approach in determining the changes in MM.

\newcommand{\obstwoparts}[0]{For any edge $e \in \affectedset$, the following properties hold:
\begin{enumerate}[itemsep=0pt,topsep=7pt]
	\item $k_{t-1}(e) \geq \pi(f)$ and $k_{t}(e) \geq \pi(f)$.
	\item if $e \neq f$, then $e$ has a neighbor $e'$ such that $\pi(e') < \pi(e)$ and $e' \in \flippedset$.
\end{enumerate}}
\begin{observation} \label{obs:twoparts}
\obstwoparts{}
\end{observation}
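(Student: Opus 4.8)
The plan is to mirror the proof of Observation~\ref{obs:inAiffalowerchanges} for MIS, which we may essentially reuse verbatim after passing to the line-graph — but crucially we carry out the argument directly on the edge set so that we never pay for constructing the line-graph. Recall that \lfmm{G,\pi} is nothing but \lfmis{L(G),\pi} where $L(G)$ is the line-graph and two edges are adjacent in $L(G)$ iff they share an endpoint in $G$; the eliminator of an edge $e$ is the lowest-rank edge in $(\{e\}\cup\{\text{edges sharing an endpoint with }e\})\cap\lfmm{G,\pi}$, constructed greedily by processing edges in increasing order of $\pi$.

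\textbf{Part 1.} Let $U$ be the set of edges $e'$ with $\pi(e') < \pi(f)$. Since $G_t$ and $G_{t-1}$ differ only in the presence of edge $f$, and $\pi(f) \ge \pi(f)$ (trivially) so $f \notin U$, the two edge-subsets $U\cap E_t$ and $U\cap E_{t-1}$ are identical, and the greedy matching restricted to edges in $U$ (i.e.\ the partial LFMM obtained by processing only edges of rank less than $\pi(f)$) produces the same set $M_U$ of matched edges in both graphs. Any edge $e$ with $k_{t-1}(e) < \pi(f)$ has its eliminator inside $M_U$; since $M_U$ is the same in $G_t$ and $G_{t-1}$, and the set of edges incident to $M_U$ differs between the two graphs only possibly at $f$ itself (which has rank $\ge \pi(f)$, so it is not among the edges we are discussing), we get $k_t(e) = k_{t-1}(e)$ for every such $e$, hence $e \notin \affectedset$. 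Symmetrically, any edge $e$ with $k_t(e) < \pi(f)$ satisfies $k_{t-1}(e) = k_t(e)$ and so $e \notin \affectedset$. Contrapositively, every $e \in \affectedset$ has $k_{t-1}(e) \ge \pi(f)$ and $k_t(e) \ge \pi(f)$. The only subtlety — and the one point worth checking carefully — is confirming that $f$ being incident to (or not incident to) an edge in $M_U$ does not perturb the eliminators of edges with rank below $\pi(f)$; this holds because such an edge's eliminator is determined entirely by the processing of edges of rank $< \pi(f)$, which is independent of whether $f \in E$.

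\textbf{Part 2.} Fix $e \in \affectedset$ with $e \neq f$. Let $e^\star = \eliminator{G_{t-1}}{\pi}{e}$ be the eliminator of $e$ before the update. Since $e \neq f$, the set of edges adjacent to $e$ in $G_{t-1}$ and in $G_t$ is the same (it does not involve $f$, as $f$ shares an endpoint with $e$ only if $e$ and $f$ are adjacent, but even then $f$'s insertion/deletion changes $N(e)$'s membership by at most $f$ itself, and $f \neq e$ is handled by noting $e$'s adjacency set changes only in that $f$ may be added or removed — wait, this needs care). More precisely: suppose, for contradiction, that no neighbor $e'$ of $e$ with $\pi(e') \le \pi(e^\star)$ has its MM-status flipped. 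Then $e^\star$ still belongs to \lfmm{G_t,\pi}, $e^\star$ is still adjacent to $e$ (adjacency of two distinct edges $\neq f$ is unaffected by the update), and no edge of rank smaller than $\pi(e^\star)$ adjacent to $e$ has joined the matching; hence $e^\star$ remains the lowest-rank matched edge incident to $e$, i.e.\ $\eliminator{G_t}{\pi}{e} = e^\star = \eliminator{G_{t-1}}{\pi}{e}$, contradicting $e \in \affectedset$. Therefore some neighbor $e'$ of $e$ with $\pi(e') \le \pi(e^\star) < \pi(e)$ (the strict inequality because $e \neq e^\star$ as $e$'s eliminator changed, so $e \notin \lfmm{G_{t-1},\pi}$, giving $\pi(e^\star) < \pi(e)$) has flipped MM-status, i.e.\ $e' \in \flippedset$, with $\pi(e') < \pi(e)$ as required.

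\textbf{Main obstacle.} The only genuinely delicate point is bookkeeping around the edge $f$: when $e$ and $f$ happen to be adjacent, one must confirm that "$e$'s neighborhood is unchanged except possibly at $f$" does not break the argument in Part 2 — but it does not, because in the contrapositive argument we only ever invoke the persistence of the specific neighbor $e^\star \neq f$ (as $e^\star$ has rank $< \pi(f)$ would be needed, which is not guaranteed; rather $e^\star$ is simply an edge distinct from $f$, and distinct-edge adjacency is update-invariant). I expect Part 1 to be entirely routine given Proposition~\ref{prop:sparseMM} is not even needed here, and Part 2 to require only the careful case distinction above. No new ideas beyond the MIS analogue are needed.
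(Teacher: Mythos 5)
Your proof follows essentially the same route as the paper's: Part 1 via the invariance of the greedy matching on the prefix of edges of rank below $\pi(f)$, and Part 2 via the contrapositive observation that if no incident edge of rank at most $\pi(e^\star)$ flips then the eliminator of $e$ is unchanged. One small slip: $e \in \affectedset$ does \emph{not} imply $e \notin \lfmm{G_{t-1}, \pi}$ (the edge may be matched before the update and unmatched after, in which case $e^\star = e$), so your parenthetical justification of $\pi(e^\star) < \pi(e)$ is wrong---but the desired conclusion $\pi(e') < \pi(e)$ survives anyway, since the flipped edge $e'$ is distinct from $e$ and ranks are distinct.
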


Algorithm~\ref{alg:edgeupdate} formalizes how our data structures can be updated after each edge insertion/deletion. The subroutines not formalized in the algorithm will be formalized subsequently.

\newcommand{\updatedatastructues}[1]{\textsc{UpdateDataStructures}{\ensuremath{(#1)}}}

\begin{tboxalg2e}{Maintaining the data structures after insertion or deletion of an edge $f=(a ,b)$.}
\label{alg:edgeupdate}
	\begin{algorithm}[H]
	\DontPrintSemicolon
	\SetAlgoSkip{bigskip}
	\SetAlgoInsideSkip{}

	$\S \gets \{f\}$\;
	\While{\S{} is not empty}{
		Let $ e=(u,v) \gets \argmin_{e' \in \S} \pi(e')$ be the minimum rank edge in $\S$.\;
					\updatedatastructues{e} \tcp*{Updates $k(e)$, $m(e)$, $k(v)$, $k(u)$, \A, and \flippedset.}
		\If{$e\in \flippedset$\label{line:ifinFMM}}{
			$\mathcal{H}_e \gets \{ e' \in N(v) \cup N(u) \mid  k_{t-1}(e') \geq \pi(f)\}$\; \tcp{It can be found in time $\O{\log \Delta \cdot |\mathcal{H}_e|}$ since $N(v)$, and $N(u)$ are indexed by $k(.)$.} 
							\For{any edge $e' \in \mathcal{H}_e$ with $\pi(e') > \pi(e)$}{
						insert $e'$ to \S{}.}			
		}
		Remove $e$ from $\S{}$. \label{line:removeS} \; }
	\updateadjlists{} \tcp*{Updates adjacency lists where necessary.}\label{line:upadteNinMM}
	\end{algorithm}
\end{tboxalg2e}
We use {\em iteration} to refer to iterations of the while loop in Algorithm~\ref{alg:edgeupdate}. The following invariants will hold throughout the algorithm.

\begin{invariant}\label{inv:MMS1}
	Consider the start of any iteration and let $e$ be the lowest-rank vertex in $\S$. It holds true that $k(e') = k_t(e')$ and $m(e') = m_t(e')$ for any edge $e'$ with $\pi(e') < \pi(e)$, i.e., $k(e')$ and $m(e')$ already hold the correct values. Moreover, $k(e') = k_{t-1}(e')$ and $m(e') = m_{t-1}(e')$ for every other edge $e'$ with $\pi(e') \geq \pi(e)$.
\end{invariant}

\begin{invariant}\label{inv:MMS2}
	Consider any vertex $v$ in an arbitrary iteration of the algorithm, and let $M_v = \{  e\in E \mid  m(e)=1\}$. 
	Throughout the algorithm, it holds that if $M_v \neq \emptyset$, then $k(v) = \min_{e\in M_v} \pi(e)$, and otherwise   $k(v) = \infty$.
\end{invariant}

We continue by formalizing all subroutines used in Algorithm~\ref{alg:edgeupdate}.

\smparagraph{Subroutine {\normalfont \updatedatastructues{e}}.} Let $u$ and $v$ denote the two end-points of edge $e$. This function updates  $k(e)$, $m(e)$, $k(v)$, and $k(u)$ which also determines the membership of $e$ to sets $\A$ and $\flippedset$. Let $x = \min(k(v), k(u))$. We show that  $e$ joins the matching iff $x\geq \pi(e)$ which results in $m(e) \gets 1$, $k(e) \gets \pi(e)$, $k(v) \gets \pi(e)$, and $k(u) \gets \pi(e)$. Otherwise, we have  $m(e) \gets 0$ and $k(e) \gets x$. Note that if $e$ was previously in the matching and is flipped now, we need to update $k(v)$ and $k(u)$ if they are equal to $\pi(e)$. We show that if  $e$ is removed from the matching and $k(v) = \pi(e)$ then we should set $k(v) \gets \infty$ and the same for vertex $u$.

\smparagraph{Subroutine {\normalfont \updateadjlists{}}.} We first update $N(a)$ and $N(b)$. We remove $f$ from both these sets if $f$ is deleted and add it otherwise.	
	Also, for any affected edge $e = (u, v)$ we need to update its index in sets $N(v)$ and $N(u)$. We do so by a single iteration over set $\A$. Due to the fact that adjacency lists are BSTs with size $O(\Delta)$, this takes  $\O{|\A| \log \Delta}$ time. 

\subsection{Correctness \& (Parametrized) Running Time}

The correctness of Algorithm~\ref{alg:edgeupdate} follows from basic arguments and the greedy structure of LFMM and hence we defer it to Section~\ref{sec:proofsMM}. Here, we discuss why the running time of the algorithm is $O\big(|\flippedset| \min\{\Delta, \frac{\log n}{\pi(f)}\}\log \Delta \big)$ as claimed in Lemma~\ref{lem:updatealgmatching}. The complete proof of both the correctness and running time of the algorithm is presented in Section~\ref{sec:proofsMM}.

Using a similar argument used for MIS, we can use Proposition~\ref{prop:sparseMM} to prove (see Section~\ref{sec:proofsMM}):

\newcommand{\claimsizeofhe}[0]{At any iteration $i$, with probability at least $1-n^{-(c+1)}$, set $\mathcal{H}_e$ has size $O(\min\{\Delta, \frac{\log n}{\pi(f)}\})$ and can be constructed in time $O(|\mathcal{H}_e| \log \Delta)$.}

\begin{claim}\label{claim:sizeofhe}
	\claimsizeofhe{} 
\end{claim}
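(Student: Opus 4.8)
The plan is to mirror the proof of the analogous MIS statement (Claim~\ref{cl:correctfindrelevantneighbors}), but simplified by the fact that here the adjacency list $N(v)$ is already indexed by the eliminator-rank $k(\cdot)$, so there is no $N^+/N^-$ split to worry about. First I would bound the size of $\mathcal{H}_e$. Write $e=(u,v)$ and recall $\mathcal{H}_e = \{ e' \in N(u) \cup N(v) \mid k_{t-1}(e') \geq \pi(f)\}$. Since $\mathcal{H}_e$ is only defined when the condition in Line~\ref{line:ifinFMM} holds, we have $e \in \flippedset \subseteq \affectedset$, so by Observation~\ref{obs:twoparts} part 1, $k_{t-1}(e) \geq \pi(f)$. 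The key observation is that every edge $e'$ counted in $\mathcal{H}_e$ satisfies $k_{t-1}(e') \geq \pi(f)$, i.e., $e'$ lies in the set $E_{\pi(f)}$ of Proposition~\ref{prop:sparseMM} taken with respect to the graph $G_{t-1}$. Indeed if we run LFMM on $G_{t-1}$ and process the prefix of edges with rank in $[0,\pi(f))$, removing them and all incident edges, then every edge of $\mathcal{H}_e$ survives (its eliminator has rank $\geq \pi(f)$, so it was not killed), and all of them are incident to the single vertex-pair $\{u,v\}$; more precisely each is incident to $u$ or to $v$. Since the adversary is oblivious, the graph $G_{t-1}$ is fixed independently of the random ranking $\pi$, so Proposition~\ref{prop:sparseMM} applies: w.h.p., simultaneously for all $O(\log n)$-bit values of $p$, every vertex has $O(p^{-1}\log n)$ incident edges in $E_p$. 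Applying this with $p = \pi(f)$ to both $u$ and $v$ gives that the number of surviving edges incident to $u$ plus those incident to $v$ is $O(\pi(f)^{-1}\log n)$, hence $|\mathcal{H}_e| = O(\pi(f)^{-1}\log n)$ with probability at least $1-n^{-(c+1)}$ (absorbing the two-vertex union bound and the choice of the constant in Proposition~\ref{prop:sparseMM}). Trivially $|\mathcal{H}_e| \leq \deg_{G_{t-1}}(u) + \deg_{G_{t-1}}(v) = O(\Delta)$, so combining the two bounds yields $|\mathcal{H}_e| = O(\min\{\Delta, \frac{\log n}{\pi(f)}\})$ as claimed.

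Next I would argue the construction time. By Invariant~\ref{inv:MMS1}, at the point in the algorithm where we compute $\mathcal{H}_e$ we have not yet reached Line~\ref{line:upadteNinMM}, so the adjacency lists $N(u)$ and $N(v)$ still hold their pre-update contents, and in particular each edge $e'$ in them is still indexed by $k_{t-1}(e')$. To extract $\{e' \in N(u) \mid k_{t-1}(e') \geq \pi(f)\}$ we search the BST $N(u)$ for the index $\pi(f)$ — costing $O(\log \Delta)$ since $|N(u)| \leq \Delta$ — and then traverse all entries with index at least $\pi(f)$; this enumeration costs $O(1)$ per reported edge plus the $O(\log\Delta)$ search, and likewise for $N(v)$. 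Thus the total cost is $O(\log\Delta + |\mathcal{H}_e|) = O(|\mathcal{H}_e|\log\Delta)$ (the stated form; one can always afford the extra $\log\Delta$ slack), which is the desired $O(|\mathcal{H}_e|\log\Delta)$ bound. No deduplication subtlety arises beyond noting an edge incident to both $u$ and $v$ is the edge $e$ itself, which we may treat uniformly.

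The main obstacle — though it is a mild one — is getting the independence argument for Proposition~\ref{prop:sparseMM} exactly right in the matching setting: unlike the vertex case, the ranking $\pi$ here is revealed incrementally as edges arrive, so one has to be slightly careful that conditioning on the update sequence and on which edges are present in $G_{t-1}$ does not bias the ranks of the edges still in the graph. Since the adversary is oblivious, the entire update sequence (hence $G_{t-1}$) is determined before any randomness is drawn, and the ranks $\pi(e')$ of the edges present in $G_{t-1}$ are i.i.d.\ uniform conditioned on $G_{t-1}$; this is exactly the hypothesis of Proposition~\ref{prop:sparseMM}, so the sparsification bound goes through verbatim on $G_{t-1}$. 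The only remaining care is the high-probability bookkeeping: Proposition~\ref{prop:sparseMM} holds w.h.p.\ over all $O(\log n)$-bit thresholds $p$ at once, so no additional union bound over iterations $i$ is needed — a single invocation of the proposition on $G_{t-1}$ simultaneously covers $\mathcal{H}_e$ for every affected edge $e$ in the current update — and the failure probability is at most $n^{-(c+1)}$ as stated.
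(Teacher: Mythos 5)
Your proof is correct and follows essentially the same route as the paper's: both bound $|\mathcal{H}_e|$ by observing that every edge in it survives the LFMM prefix up to rank $\pi(f)$ in $G_{t-1}$ and then invoking Proposition~\ref{prop:sparseMM} (valid because the oblivious adversary makes $G_{t-1}$ independent of $\pi$), and both obtain the $O(|\mathcal{H}_e|\log\Delta)$ construction time from the fact that $N(u)$ and $N(v)$ are still BSTs indexed by $k_{t-1}(\cdot)$ at that point in the algorithm.
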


Let us first analyze the running time before the last line where we update adjacency lists. Observe that any edge $e'$ that is added to set $\S$ belongs to $\mathcal{H}_e$ of an edge $e \in \flippedset$. Therefore, at most $O\big(|\flippedset| \min\{\Delta, \frac{\log n}{\pi(f)}\}\big)$ edges are added to $\S$. Note that, if an edge $e' \in \S$ is not in set \flippedset{}, we only spend $O(1)$ time for it in subroutine \updatedatastructues{e'}. Thus, the total time spent on all edges not in $\flippedset$ is indeed $O\big(|\flippedset| \min\{\Delta, \frac{\log n}{\pi(f)}\}\big)$. On the other hand, for each edge $e \in \flippedset$, the most expensive operation is to find set $\mathcal{H}_e$ which Claim~\ref{claim:sizeofhe} shows can be done in $O(|\mathcal{H}_e|\log \Delta)$ time. Therefore, the total running time before \updateadjlists{} can be bounded by $O\big(|\flippedset| \min\{\Delta, \frac{\log n}{\pi(f)}\}\log \Delta \big)$.

Next, in the \updateadjlists{}, we only iterate over all edges in \affectedset{} and update their position in their end-points. This takes $O(|\affectedset|\log \Delta)$ time. Note that by Observation~\ref{obs:twoparts}, any edge $e' \in \affectedset$ is adjacent to an edge $e \in \flippedset$ and $k_{t-1}(e') \geq \pi(f)$. This means that $e' \in \mathcal{H}_e$ and by Claim~\ref{claim:sizeofhe}:

\newcommand{\clAisFtimesH}[0]{$|\affectedset| \leq O\big(|\flippedset| \min\{\Delta, \frac{\log n}{\pi(f)} \} \big)$.}

\begin{observation}\label{cl:AisFtimesH}
	W.h.p., \clAisFtimesH{}
\end{observation}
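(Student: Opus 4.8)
The plan is to combine the two structural facts already in hand—namely Observation~\ref{obs:twoparts} and Claim~\ref{claim:sizeofhe}—by a straightforward counting/charging argument. The statement to prove is that, w.h.p., $|\affectedset| \leq O\big(|\flippedset| \min\{\Delta, \frac{\log n}{\pi(f)} \} \big)$.

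First I would invoke Observation~\ref{obs:twoparts}: every affected edge $e' \in \affectedset$ with $e' \neq f$ has a neighbor $e \in \flippedset$ with $\pi(e') > \pi(e)$ (and also $k_{t-1}(e') \geq \pi(f)$ from part~1). Pick, for each such $e'$, one such witnessing flipped neighbor $e$; this defines a map $\affectedset \setminus \{f\} \to \flippedset$. The key point is that each $e \in \flippedset$ can be the image of only boundedly many $e' \in \affectedset$. Indeed, if $e$ witnesses $e'$, then $e'$ is incident to an endpoint of $e$ and satisfies $k_{t-1}(e') \geq \pi(f)$, so $e' \in \mathcal{H}_e$ by the very definition of $\mathcal{H}_e$ in Algorithm~\ref{alg:edgeupdate} (recall $\mathcal{H}_e = \{e'' \in N(v)\cup N(u) : k_{t-1}(e'') \geq \pi(f)\}$ where $u,v$ are the endpoints of $e$). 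Hence the number of affected edges mapped to $e$ is at most $|\mathcal{H}_e|$.

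Next I would apply Claim~\ref{claim:sizeofhe}: with probability at least $1 - n^{-(c+1)}$, a fixed $|\mathcal{H}_e|$ has size $O\big(\min\{\Delta, \tfrac{\log n}{\pi(f)}\}\big)$. Taking a union bound over all edges $e \in \flippedset$ (there are at most $O(n^2)$ edges in total, so the failure probability is still at most $n^{-c'}$ for a slightly smaller constant $c'$, which is absorbed into the ``w.h.p.'' convention), simultaneously for every $e \in \flippedset$ we have $|\mathcal{H}_e| = O\big(\min\{\Delta, \tfrac{\log n}{\pi(f)}\}\big)$. Therefore
\[
|\affectedset| \;\leq\; 1 + \sum_{e \in \flippedset} |\mathcal{H}_e| \;\leq\; 1 + |\flippedset| \cdot O\big(\min\{\Delta, \tfrac{\log n}{\pi(f)}\}\big) \;=\; O\big(|\flippedset| \min\{\Delta, \tfrac{\log n}{\pi(f)}\}\big),
\]
where the ``$+1$'' accounts for $f$ itself and is absorbed since $|\flippedset| \geq 1$ whenever $\affectedset \neq \emptyset$ (by the matching analog of Observation~\ref{obs:everyvertexhasneighborinF}, or trivially because $f$ is the source of all propagation). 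This gives the claimed bound.

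The only mild subtlety—and the step I expect to require a sentence of care rather than real difficulty—is making sure the union bound is taken over the right set and that the conditioning is legitimate: Claim~\ref{claim:sizeofhe} is stated ``at any iteration $i$'' and relies on the obliviousness of the adversary together with the independence of $\pi$ and $G_{t-1}$, so I must apply it to each candidate flipped edge before committing to the randomness that determines $\flippedset$ and $\affectedset$. Since $\flippedset \subseteq \affectedset$ and both are determined by $\pi$ and $G_{t-1}$, the cleanest phrasing is to bound $|\mathcal{H}_e|$ for \emph{all} edges $e$ incident to the relevant neighborhood simultaneously (a union bound over at most $n^2$ events), and only then restrict to $e \in \flippedset$. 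No genuine obstacle remains beyond this bookkeeping.
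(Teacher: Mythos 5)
Your proposal is correct and follows essentially the same route as the paper: the paper likewise combines Observation~\ref{obs:twoparts} (every affected edge $e'$ is incident to some flipped edge $e$ and has $k_{t-1}(e')\geq\pi(f)$, hence $e'\in\mathcal{H}_e$) with the w.h.p.\ bound on $|\mathcal{H}_e|$ from Claim~\ref{claim:sizeofhe} to charge each affected edge to a flipped neighbor. Your extra care about the union bound and the ``$+1$'' for $f$ itself is sound bookkeeping that the paper leaves implicit.
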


Therefore, the overall running time is indeed $O\big(|\flippedset| \min\{\Delta, \frac{\log n}{\pi(f)}\}\log \Delta \big)$ as claimed in Lemma~\ref{lem:updatealgmatching}.

\subsection{Putting Everything Together: Proof of Theorem~\ref{thm:MM}}\label{sec:analysisMM}

Before proving Theorem~\ref{thm:MM} we need the following high probability bound of $O(\log n)$ on $|\flippedset|$ which we prove in Section~\ref{sec:proofsMM}.

\newcommand{\clboundFMM}[0]{Let $G$ and $G'$ be two graphs that differ in only one edge and let $\pi$ be a random ranking on their edges. Then, w.h.p., there are at most $O(\log n)$ edges that have different MM-statuses in \lfmm{G, \pi} and \lfmm{G', \pi}.}

\begin{claim}\label{cl:boundFMM}
	\clboundFMM{}
\end{claim}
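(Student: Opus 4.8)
The plan is to reduce Claim~\ref{cl:boundFMM} to the round-complexity bound of Fischer and Noever \cite{DBLP:conf/soda/FischerN18} applied to the \emph{line graph}, exploiting the fact that---unlike for MIS---the set of flipped edges does not branch, so no analog of the blaming argument (Lemma~\ref{lem:likely}) is needed. Assume without loss of generality that $G'=G+f$ for a single edge $f=(a,b)$ (if instead $G'=G-f$, simply swap the roles of $G$ and $G'$, so that $G'$ always denotes the graph that contains $f$), and let $\pi$ be a random ranking on $E(G')$. Write $M=\lfmm{G,\pi}$, $M'=\lfmm{G',\pi}$, and let $\flippedset=M\triangle M'$ be the set of flipped edges. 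Since $M$ and $M'$ are matchings, every vertex of $G'$ is incident to at most one edge of each, so the subgraph of $G'$ formed by $\flippedset$ has maximum degree at most $2$; hence each of its connected components is a simple path or an even cycle alternating between $M$ and $M'$. By Observation~\ref{obs:twoparts} (part 2)---every flipped edge other than $f$ has a flipped neighbor of strictly smaller rank---every flipped edge is reachable from $f$ through a chain of flipped edges of strictly decreasing ranks; consequently $\flippedset$ is a \emph{single} path or cycle and it contains $f$, with $f$ at its lowest rank. (The greedy executions on $G$ and $G'$ agree on every edge of rank below $\pi(f)$, which is why $f$ sits at the bottom.)

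The next step is to observe that each of the (at most two) arcs of $\flippedset$ issuing from $f$ is a \emph{dependency path} of the line graph $L(G')$ in the sense of \cite[Definition~2.1]{DBLP:conf/soda/FischerN18}. Let such an arc be $f=g_0,g_1,\dots,g_s$, where consecutive $g_i$ share a vertex of $G'$. As in Observation~\ref{obs:twoparts} and Claim~\ref{cl:alternateinMIS}, one first checks that the ranks strictly increase along the arc and that the $M'$-membership alternates with $g_0=f\in M'$; that is, $g_i\in\lfmis{L(G'),\pi}$ for even $i$ and $g_i\notin\lfmis{L(G'),\pi}$ for odd $i$. The substantive point is the eliminator condition: for odd $i$, $g_{i-1}=\eliminator{L(G')}{\pi}{g_i}$, i.e.\ $g_{i-1}$ is the minimum-rank $M'$-edge of $G'$ incident to $g_i$. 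This holds because the only $M'$-edges touching $g_i$ are $g_{i-1}$ (at the vertex shared with $g_{i-1}$) and, if it exists, $g_{i+1}$ (at the other endpoint of $g_i$, with $\pi(g_{i+1})>\pi(g_{i-1})$): indeed $g_i\in M\setminus M'$ was excluded from the $G'$-execution precisely because the endpoint it shares with $g_{i-1}$ became matched by $g_{i-1}$ at time $\pi(g_{i-1})$, while any $M'$-edge at $g_i$'s other endpoint has rank at least $\pi(g_i)>\pi(g_{i-1})$. Hence $g_0,g_1,\dots,g_s$ is exactly a dependency path of $L(G')$; in the cycle case $\flippedset$ is simply the union of two such arcs.

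Finally, $L(G')$ has $|E(G')|<n^2$ vertices, and $\pi$ induces a uniformly random permutation of them, so by \cite{DBLP:conf/soda/FischerN18} (building on \cite{DBLP:conf/spaa/BlellochFS12}), exactly as invoked in Lemma~\ref{lem:unlikeliesareunlikely} and the discussion around Observation~\ref{obs:prop-dep}, with probability at least $1-n^{-c}$ every dependency path of $L(G')$ has length $O(\log n)$ (the hidden constant depending on $c$, and the change of the failure-probability base from $|V(L(G'))|$ to $n$ being absorbed in the usual way by the target exponent). Combining with the previous step, both arcs of $\flippedset$ have $O(\log n)$ edges, so $|\flippedset|=O(\log n)$ w.h.p., which is the claim. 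The step I expect to be the main obstacle is the structural one in the middle: pinning down that $\flippedset$ is a single, non-branching path (or cycle) rooted at $f$ with monotone ranks, and in particular verifying the eliminator condition along each arc so that it is a genuine dependency path of $L(G')$ rather than merely an alternating path; once that is in place, the length bound is a black-box application of \cite{DBLP:conf/soda/FischerN18}.
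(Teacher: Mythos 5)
Your proposal is correct, and its first half --- the structural argument that $\flippedset$ has maximum degree $2$, forms a single path or cycle through $f$ (since a component avoiding $f$ would have a minimum-rank edge contradicting Observation~\ref{obs:twoparts} part~2), and is rank-monotone along each arc emanating from $f$ --- is exactly the paper's argument. The two proofs diverge only in how they extract the $O(\log n)$ bound from Fischer--Noever. The paper invokes their round-complexity statement for random-order LFMM (\cite[Corollary~B.1]{DBLP:conf/soda/FischerN18}) and then proves, by an induction on rounds along the monotone alternating path, that the parallel greedy process needs $\Omega(|\flippedset|)$ rounds; this avoids ever mentioning eliminators or the line graph explicitly. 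You instead verify that each arc is literally a dependency path of $L(G')$ in the sense of \cite[Definition~2.1]{DBLP:conf/soda/FischerN18} --- the substantive check being that for $g_i\in M\setminus M'$ the only $M'$-edges incident to $g_i$ are $g_{i-1}$ and possibly $g_{i+1}$, so $g_{i-1}$ is its eliminator --- and then apply the dependency-path length bound directly, mirroring what the paper does for MIS in Lemma~\ref{lem:unlikeliesareunlikely} and Observation~\ref{obs:prop-dep}. The two invocations are equivalent in strength (Fischer--Noever derive the round bound from the dependency-path bound), so this is a matter of packaging: your route is slightly more uniform with the MIS analysis and makes the ``why'' transparent, while the paper's round-counting induction is self-contained at the level of the matching and sidesteps the small bookkeeping at the top of a cycle (where the maximum-rank edge may satisfy the eliminator condition on only one of the two arcs --- harmless in your argument, since dropping one edge per arc changes nothing asymptotically).
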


Now, we are ready to prove Theorem~\ref{thm:MM}.

\restate{Theorem~\ref{thm:MM}}{\thmMM{}}

\begin{proof}
	We use Algorithm~\ref{alg:edgeupdate}. Combination of Lemma~\ref{lem:updatealg}, and the fact that $|\flippedset| \leq O(\log n)$ w.h.p. due to Claim~\ref{cl:boundFMM}, bounds the  update-time of this algorithm, w.h.p., by
	\begin{flalign*}
	O(\log \Delta \log n) \min\left\{\Delta, \frac{\log n}{\pi(f)}\right\} = O(\log \Delta \log^2 n) \min\left\{\Delta, \frac{1}{\pi(f)}\right\}.
	\end{flalign*}
	Since $\pi(f)$ is chosen from $[0, 1]$ uniformly at random, $\E\big[\min\big\{\Delta, \frac{1}{\pi(f)}\big\}\big] = O(\log \Delta)$. Thus, the total running time is $O(\log^2 \Delta \log^2 n)$ in expectation, as required by the theorem.
	
	For the adjustment-complexity, similar to MIS, it is shown in \cite[Theorem~1]{Censor-HillelHK16} that LFMIS over random rankings requires $O(1)$ expected adjustments under {\em vertex} updates. On the line-graph, this implies that if an edge is added or removed, the number of changes to LFMM over a random ranking is $O(1)$ in expectation; concluding the proof.
\end{proof}

\subsection{Deferred Proofs}\label{sec:proofsMM}

\restate{Observation~\ref{obs:twoparts}}{\obstwoparts{}}

\begin{proof}[Proof of  part 1]
Let $U$ denote the set of edges $e$ in $E$ with $\pi(e) < \pi(f)$. 
Consider the subgraph only containing these edges. Since the matching is constructed greedily on the lower rank edges first, the  
set of matching edges in  $U$ does not change after the update.
Let $M_U$ denote the matching edges in $U$. Note that any edge $e$ with $k_{t-1}(e) < \pi(f)$ is incident to an edge $e'$ in $M_U$. Since $e$ and $e'$  are still incident after the update and that updating $f$ does not change $k(e')$ we have $k_t(e) = k_{t-1}(e)$. This means that for each edge $e$ with $k_{t-1}(e) < \pi(f)$, we have $k_t(e) = k_{t-1}(e)$ and thus $e$ cannot be in $\affectedset$ by definition.  By a similar argument, for any edge $e$ with $k_t(e) < \pi(f)$ we also have $k_{t-1}(e)  = k_t(e)$ and thus $e \not\in \affectedset{}$.
\end{proof}

\begin{proof}[Proof of  part 2]

The fact that $e\in \A$ means that eliminator of edge $e$ changes after the update. Let $e'$ be its eliminator before the update. By definition of the eliminator, for $e\neq f$ we have $e\in \A$ iff the matching status of at least an edge incident to $e$ with rank at most $\pi(e')$ changes. This means that if $e$ is not incident to any edge in $\flippedset$, then $e\notin \A$.
\end{proof}

\begin{claim}\label{claim:updatedata}
Let $e = (u, v)$ be the lowest-rank edge in $\S$ at the start of an arbitrary iteration. Subroutine \updatedatastructues{e} correctly updates $k(e)$ and $m(e)$ in constant time  assuming that Invariants~\ref{inv:MMS1} and \ref{inv:MMS2} hold by this iteration.
\end{claim}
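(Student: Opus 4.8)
The $O(1)$ running-time bound is immediate: the subroutine inspects a constant number of stored values ($k(u),k(v),\pi(e),m(e)$) and performs a constant number of assignments. The content is therefore correctness, and the plan is to reduce it to the greedy characterization of LFMM together with Invariants~\ref{inv:MMS1} and~\ref{inv:MMS2}. Recall the greedy characterization of \lfmm{G_t,\pi}: an edge $e$ is matched iff no edge $e'\ne e$ incident to $e$ with $\pi(e')<\pi(e)$ is matched, and then $k_t(e)=\pi(e)$; otherwise $\eliminator{G_t}{\pi}{e}$ is the smallest-rank edge incident to $e$ that is matched, and its rank is strictly below $\pi(e)$. The quantity the subroutine computes is $x=\min\{k(u),k(v)\}$. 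By Invariant~\ref{inv:MMS2} (reading $M_v$ as the set of currently-$m$-matched edges incident to $v$), $x$ is the smallest rank of a currently-$m$-matched edge incident to $u$ or $v$, with $x=\infty$ if none exists; by Invariant~\ref{inv:MMS1}, every edge of rank $<\pi(e)$ already carries its final value $m_t$. The core step of the proof is the equivalence $x\ge\pi(e)\iff e\in\lfmm{G_t,\pi}$, plus the identity $x=k_t(e)$ whenever $x<\pi(e)$.

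To establish the core step, for the forward direction: if $e\in\lfmm{G_t,\pi}$ then no edge $e'\ne e$ incident to $e$ of rank $<\pi(e)$ is $m_t$-matched; such edges carry $m=m_t$, so none is currently matched, hence every currently-matched edge incident to $u$ or $v$ has rank $\ge\pi(e)$, giving $k(u),k(v)\ge\pi(e)$ and $x\ge\pi(e)$. Conversely, $x\ge\pi(e)$ forbids any currently-matched edge incident to $u$ or $v$ of rank $<\pi(e)$; since such edges carry $m_t$, it forbids any $m_t$-matched edge $e'\ne e$ incident to $e$ of rank $<\pi(e)$, so $e$ is matched by the greedy characterization. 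Finally, when $x<\pi(e)$ the endpoint achieving the minimum has, by Invariant~\ref{inv:MMS2}, an incident currently-matched edge $e^*\ne e$ with $\pi(e^*)=x$; by Invariant~\ref{inv:MMS1} it is $m_t$-matched, and the same minimality argument rules out any $m_t$-matched edge incident to $e$ of rank below $x$, so $k_t(e)=x$. This is exactly what the subroutine assigns ($m(e)\gets1,k(e)\gets\pi(e)$ when $x\ge\pi(e)$; $m(e)\gets0,k(e)\gets x$ otherwise), proving the claim for $k(e)$ and $m(e)$.

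For completeness I would note that the same case analysis also justifies the companion updates the subroutine performs on $k(u)$ and $k(v)$: when $m_t(e)=1$ we showed $k(u),k(v)\ge\pi(e)$, so $\pi(e)$ becomes the new minimum matched rank at both endpoints and the assignments $k(u),k(v)\gets\pi(e)$ preserve Invariant~\ref{inv:MMS2}. Moreover, since $e$ has rank $\ge\pi(e)$, Invariant~\ref{inv:MMS1} gives that on entry $k(e)=k_{t-1}(e)$ and $m(e)=m_{t-1}(e)$ (or $e=f$ is a freshly inserted edge), so comparing the newly computed $k_t(e),m_t(e)$ against these values places $e$ in $\affectedset$ and $\flippedset$ correctly in constant time; insertions and deletions are handled uniformly, the insertion case being easier since the pre-update values on $f$ are vacuous.

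The step I expect to need the most care is the reset ``$k(v)\gets\infty$'' performed when $e$ leaves the matching and $k(v)=\pi(e)$. Here the argument is: edges incident to $v$ of rank $<\pi(e)$ cannot be currently matched, since that would make $k(v)<\pi(e)$; and edges incident to $v$ of rank $>\pi(e)$ still carry $m_{t-1}$ by Invariant~\ref{inv:MMS1}, which is $0$ for all of them because $e\in\lfmm{G_{t-1},\pi}$ (as $m_{t-1}(e)=m(e)=1$) and $\lfmm{G_{t-1},\pi}$ is a matching. Hence after removing $e$ there is no currently-matched edge incident to $v$, so $\infty$ is indeed the value dictated by Invariant~\ref{inv:MMS2}, and symmetrically for $u$. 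Assembling these pieces yields the claim.
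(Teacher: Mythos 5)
Your proof is correct and follows essentially the same route as the paper's: both reduce correctness to the observation that, by Invariant~\ref{inv:MMS2}, $\min\{k(u),k(v)\}$ is the rank of the lowest-rank currently-matched edge incident to $e$, and by Invariant~\ref{inv:MMS1} all edges of rank below $\pi(e)$ already carry their post-update matching status, so comparing this minimum to $\pi(e)$ decides $m_t(e)$ and yields $k_t(e)$ by the definition of the eliminator. Your additional verification of the $k(u),k(v)$ updates and the $k(v)\gets\infty$ reset is not needed for this claim as stated (the paper defers those to Claim~\ref{claim:inv1inv2}), but it is sound.
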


\begin{proof}
By definition, we know that eliminator of edge $e$ is its lowest-rank edge in $N(v) \cup N(u)$ that is in the matching after the update. By Invariants~\ref{inv:MMS2} $ \min(k(u), k(v))$ is the rank of an edge who has the lowest-rank amongst the edges $e'$ in $N(v) \cup N(u)$ with $m(e')=1$. Moreover, by  Invariants~\ref{inv:MMS1}, we know that for any edge $e'$ with $\pi(e') < \pi(e)$, we have $m(e') = m_{t}(e')$. This means that $ \min(k(u), k(v)) < \pi(e)$ iff there is at least one edge adjacent to $e$ that is in the matching after the update. In the subroutine \updatedatastructues{e}, we use this condition to determine $m(e)$. Further in the subroutine if $m(e)=1$ we set $k(e) = \pi(e)$ and otherwise set it to $\min(k(u), k(v))$ which is correct by definition of eliminator. To sum up,  subroutine \updatedatastructues{e} correctly updates $k(e)$ and $m(e)$ for edge $e$ the lowest-rank edge in $\S$.
\end{proof}

\begin{observation}\label{claim:increasingedges}
	Let $e$ and $e'$  respectively denote two edges removed from $\S$ in two consecutive iteration of the algorithm in Line~\ref{line:removeS}. We have $\pi(e) < \pi(e')$. 
\end{observation}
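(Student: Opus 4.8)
The plan is to mirror the argument already used for the MIS analogue, Observation~\ref{obs:increasing}. It suffices to prove the claim for two \emph{consecutive} iterations $i$ and $i+1$, with $e$ removed in iteration $i$ and $e'$ removed in iteration $i+1$ (the general statement for arbitrary $i<j$ then follows by a trivial induction, but is not needed here).

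First I would recall that, because the ranks $\pi(\cdot)$ are $\Theta(\log n)$-bit reals, no two edges share a rank w.h.p., so ``the minimum-rank edge of $\S$'' is well defined and the minimum is strict. In iteration $i$ the algorithm sets $e \gets \argmin_{e'' \in \S}\pi(e'')$. Inspecting Algorithm~\ref{alg:edgeupdate}, the only line that inserts edges into $\S$ is the \textbf{for} loop guarded by $e \in \flippedset$ (Line~\ref{line:ifinFMM}), and there we only insert edges $e'' \in \mathcal{H}_e$ with $\pi(e'') > \pi(e)$. Hence every edge inserted into $\S$ during iteration $i$ has rank strictly larger than $\pi(e)$. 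After these insertions, $e$ itself is removed from $\S$ in Line~\ref{line:removeS}.

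Consequently, at the start of iteration $i+1$ every edge in $\S$ falls into one of two categories: (a) it was already in $\S$ at the start of iteration $i$ and survived, in which case its rank is $>\pi(e)$ because $e$ was the \emph{strict} minimum of $\S$ in iteration $i$; or (b) it was inserted during iteration $i$, in which case its rank is $>\pi(e)$ by the loop guard just discussed. Either way, $\pi(e'') > \pi(e)$ for every edge $e''$ remaining in $\S$, so the minimum-rank edge $e' = \argmin_{e'' \in \S}\pi(e'')$ chosen in iteration $i+1$ satisfies $\pi(e') > \pi(e)$, as claimed.

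There is essentially no hard step here; the only point requiring a moment's care is ensuring the minimum of $\S$ is strict (so that a surviving old edge genuinely has rank $>\pi(e)$ rather than $\ge\pi(e)$), which is guaranteed by the distinctness of ranks established in Section~\ref{sec:preliminaries}.
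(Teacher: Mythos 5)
Your proposal is correct and follows essentially the same argument as the paper: the next minimum-rank edge of $\S$ either survived from the previous iteration (hence has rank above the previous strict minimum) or was just inserted via the loop that only adds edges of strictly larger rank. The only difference is cosmetic — your extra remark about distinctness of ranks, and the fact that the paper's own wording contains a typo (``rank lower than $\pi(e_i)$'' where it should read ``higher''), which your version gets right.
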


\begin{proof}
Let $\S_{i}$ and $\S_{i+1}$ respectively denote set $\S$ at the beginning of iteration $i$ and set $\S$ at the beginning of iteration  $i+1$ and let $e_i$ and $e_{i+1}$ be the lowest-rank edges in these sets. 
We know that $e_{i+1}$ is either inserted to $\S$ in iteration $i$ or that it is in set $\S_{i}$. Observe that any edge added to $\S$ in the $i$-th iteration has rank lower than $\pi(e_i)$ and that $e_i$ is the lowest-rank vertex in $\S_{i}$. As a result $\pi(e_i) < \pi(e_{i+1})$.	
\end{proof}

\begin{claim} \label{claim:inv1inv2}
Let $e$ be the lowest-rank edge in $\S$ in an arbitrary iteration $i$ of the algorithm. Assuming that Invariants~\ref{inv:MMS1}, and \ref{inv:MMS2} hold at the start of iteration $i$ we have:
\begin{enumerate}
	\item If $\S = \emptyset$ at the end of iteration $i$, then for any edge $e$ we have $k(e)= k_{t}(e)$, and $m(e)=m_{t}(e)$ and for any vertex $v$ we have $k(v) = k_t(v)$.
	\item If  $\S \neq \emptyset$ at the end of iteration $i$, then Invariants~\ref{inv:MMS1} and \ref{inv:MMS2} hold at the start of iteration $i+1$ as well.   
\end{enumerate}

\end{claim}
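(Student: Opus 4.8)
The plan is to mirror the proof of Claim~\ref{claim:inv1} (the MIS analog), using Observation~\ref{obs:twoparts} in place of Observation~\ref{obs:everyvertexhasneighborinF} and Claim~\ref{claim:updatedata} in place of Claims~\ref{cl:isaffectedcorrectnessandtime} and \ref{claim:updatek}. Write $e=(u,v)$ for the lowest-rank edge of $\S$ at the start of iteration $i$, and let $\S'$ denote $\S$ at the end of iteration $i$, with $e'$ its lowest-rank edge when $\S'\neq\emptyset$. By Observation~\ref{claim:increasingedges} (and since $\S=\{f\}$ in the first iteration) we have $\pi(e)\geq\pi(f)$. First I would record the bookkeeping facts about iteration $i$: the call $\updatedatastructues{e}$ changes no $k$- or $m$-value of any edge other than $e$ (it only touches $k(e),m(e),k(u),k(v)$ and the sets $\affectedset,\flippedset$); by Claim~\ref{claim:updatedata}, whose hypotheses hold by assumption, it sets $k(e)=k_t(e)$ and $m(e)=m_t(e)$; and by the same reasoning used in Claim~\ref{claim:updatedata}, since the only matching-status change made in iteration $i$ is at $e$, the relation of Invariant~\ref{inv:MMS2} is preserved for every vertex. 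Combining this with Invariant~\ref{inv:MMS1} at the start of iteration $i$, after $\updatedatastructues{e}$ every edge of rank $\leq\pi(e)$ holds its time-$t$ value, every edge of rank $>\pi(e)$ still holds its time-$(t-1)$ value, and Invariant~\ref{inv:MMS2} holds.

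The crux, from which both parts follow, is: if $g$ is the lowest-rank edge of $\affectedset$ with $\pi(g)>\pi(e)$ (assuming such an edge exists), then $g\in\S'$. Suppose not. Since $\pi(g)>\pi(e)\geq\pi(f)$ we have $g\neq f$, so Observation~\ref{obs:twoparts} part 2 gives a neighbor $h\in\flippedset$ of $g$ with $\pi(h)<\pi(g)$. As $\flippedset\subseteq\affectedset$ and $g$ is the minimum-rank affected edge of rank $>\pi(e)$, we must have $\pi(h)\leq\pi(e)$; hence, by Observation~\ref{claim:increasingedges}, $h$ is the processed edge of some iteration $\leq i$. At that iteration, $h\in\flippedset$ triggers the block at Line~\ref{line:ifinFMM}: $g$ is incident to $h$, so $g$ lies in one of the adjacency lists of $h$'s endpoints (which are unchanged before Line~\ref{line:upadteNinMM}), and $k_{t-1}(g)\geq\pi(f)$ by Observation~\ref{obs:twoparts} part 1, so $g\in\mathcal{H}_h$; since $\pi(g)>\pi(h)$, the algorithm inserts $g$ into $\S$. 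Because $\pi(g)>\pi(e)$ exceeds the rank of every edge removed from $\S$ at Line~\ref{line:removeS} in iterations $\leq i$ (again by Observation~\ref{claim:increasingedges}), $g$ is never removed before the end of iteration $i$, so $g\in\S'$, a contradiction.

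For Part 1: if $\S'=\emptyset$ then there is no affected edge of rank $>\pi(e)$; such edges are therefore unaffected, so their time-$(t-1)$ values already coincide with their time-$t$ values, and together with the first paragraph every edge $\tilde e$ satisfies $k(\tilde e)=k_t(\tilde e)$ and $m(\tilde e)=m_t(\tilde e)$; feeding the correct $m$-values into Invariant~\ref{inv:MMS2} yields $k(w)=k_t(w)$ for every vertex $w$. For Part 2: if $\S'\neq\emptyset$ with lowest-rank edge $e'$, then by the crux claim no affected edge has rank in $(\pi(e),\pi(e'))$, since such an edge would lie in $\S'$ and contradict the minimality of $e'$; hence every edge of rank $<\pi(e')$ carries its time-$t$ value (finalized if its rank is $\leq\pi(e)$, unaffected and therefore unchanged if its rank lies in $(\pi(e),\pi(e'))$), while every edge of rank $\geq\pi(e')$ was untouched in iteration $i$ and still carries its time-$(t-1)$ value, which is exactly Invariant~\ref{inv:MMS1} at the start of iteration $i+1$; Invariant~\ref{inv:MMS2} was already shown to be preserved.

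The main obstacle is the crux claim in the second paragraph: showing that every affected edge whose rank exceeds the current frontier $\pi(e)$ has actually been inserted into $\S$. This requires carefully tracking the monotone propagation of changes along $\flippedset$-edges via Observation~\ref{obs:twoparts}, and in particular arguing that the ``parent'' edge $h$ really was processed at an iteration $\leq i$ (so that the insertion genuinely took place) and that $g$ could not have been prematurely removed from $\S$. The remaining steps---that $\updatedatastructues{}$ preserves Invariant~\ref{inv:MMS2} and that the vertex values $k(w)$ are correct once all edge values are---are routine consequences of the subroutine's definition.
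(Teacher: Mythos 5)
Your proposal is correct and follows essentially the same route as the paper's proof: both hinge on showing that the lowest-rank affected edge above the current frontier must already sit in $\S$, by locating (via Observation~\ref{obs:twoparts} part 2) a lower-rank flipped neighbor that was processed in an earlier iteration and inserted it, and then invoking Observation~\ref{claim:increasingedges} to rule out premature removal. The only difference is one of detail: the paper spells out the case analysis for why \updatedatastructues{} preserves Invariant~\ref{inv:MMS2} at the endpoints of $e$ (flipped vs.\ not, $k(u)<\pi(e)$ vs.\ $k(u)\geq\pi(e)$), which you defer to ``the same reasoning as Claim~\ref{claim:updatedata}''; that step is routine but should be written out.
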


\begin{proof}
	By Claim~\ref{claim:updatedata}, we know that by the end of iteration $i$, for any edge $e'$ with $\pi(e') \leq \pi(e)$ we have $m(e') = m_t(e')$, and $k(e') = k_t(e')$. Let $g$ be the lowest-rank edge in $\A$ whose  $k(g)$ or  $m(g)$ are not updated at the end of iteration $i$. We show that if such an edge exists it is in set $\S$. Note that by Observation~\ref{obs:twoparts}, edge $g$ has at least one incident edge $g'$ where $\pi(g')< \pi(g)$ and $g'\in \flippedset$. Since $g$ is the lowest-rank edge whose  $k(g)$ or  $m(g)$ are not updated, we get that $k(g')$ or  $m(g')$ are updated. This means that there was an iteration  $j < i$ of the algorithm in which $g'$ was the lowest-rank edge in $\S$ since $k_t(g') = k_{t-1}(g')$ and that in each iteration we only update $k_t()$ for the lowest-rank edge in $\S$. Since $g'$ is in $\flippedset$ in iteration $j$, the algorithm adds all the edges in $\mathcal{H}_e$ to set $\S$ in that iteration. By definition of $\mathcal{H}_e$, this set includes edge $g$. Also, note that by Observation~\ref{claim:increasingedges}, the rank of vertices removed from set $\S$ is increasing; thus $g$ is still in set $\S$ in iteration $i$. This means that if set $\S$ is empty then for all edges $g$, we have $m(g) = m_t(g)$ and $k(g)=k_t(g)$ and if it is nonempty Invariants~\ref{inv:MMS1} holds in the next iteration. 
	
	To complete the proof it suffices to show that if $\S$ is empty at the end of iteration $i$, for any vertex $v$ we have $k(v) = k_t(v)$ and that otherwise Invariants~\ref{inv:MMS2} still holds at iteration $i+1$. Note that in the $i$-th iteration we do not change $m(e')$ if $e'\neq e$. Thus, given that Invariants~\ref{inv:MMS2} holds at the beginning of iteration $i$, for any vertex $v$ that is not incident to $e$ we have $k = k_t(v)$ at the end of the iteration as well. Now consider vertex $u$ that is incident to $e$. If $e$ is not flipped  or if $k(u)< \pi(e)$ the algorithm does not change $k(u)$ which is correct by definition of $k(u)$. Therefore, we only need to consider the case that  $e$ is flipped and $k(u) \geq \pi(e)$. In this case, if $m_t(e)=1$, the it is be the lowest-rank edge adjacent to $u$ with $m(.)=1$. Algorithm correctly detects this and sets $k(u)= \pi(e)$ in this scenario. Further, if $m_t(e)=0$ (which means $m_{t-1}(e)=1$), then there is no other edge adjacent to $u$ with $m(.)=1$ in which case, as well, the algorithm correctly sets $k(u)=\infty$. We achieved this from the fact that each vertex has at most one edge with $m_{t-1}(.) = 1$ and by Invariants~\ref{inv:MMS1}  any edge $u_1$ with a higher rank than $u$ has $k(u_1) = k_{t-1}(u_1)$. To sum up, Invariant~\ref{inv:MMS2} still holds at the end of iteration $i$ and the proof of the claim is completed.
\end{proof}

%

\restate{Claim~\ref{claim:sizeofhe}}{\claimsizeofhe}

\begin{proof}
\smparagraph{Size of $\mathcal{H}_e$.} Observe that if $\mathcal{H}_e$ is defined, then as assured by the condition in Line~\ref{line:ifinFMM} of Algorithm~\ref{alg:edgeupdate}, $e \in \flippedset \subseteq \affectedset$ thus by Observation~\ref{obs:inAiffalowerchanges}, $k_{t-1}(e) \geq \pi(f)$. Furthermore, by definition, every edge $e' \in \mathcal{H}_e$ has $k_{t-1}(e') \geq \pi(f)$. This means that if we take LFMM of $G_{t-1}$ induced on edges with rank in $[0, \pi(f))$ and remove them and their neighbors from the graph, $e$ and all of its neighbors in $\mathcal{H}_e$ will survive. Recall that the adversary is oblivious and the graph $G_{t-1}$ and random permutation $\pi$ are chosen independently. Therefore, applying Proposition~\ref{prop:sparseMM} on graph $G_{t-1}$ with $p = \pi(f)$ bounds $|\mathcal{H}_e|$ by $O(\pi(f)^{-1} \log n)$ w.h.p. Moreover, clearly $|\mathcal{H}_e| \leq 2\Delta-2$ since all edges in it are incident to $e$, concluding the bound on the size of $\mathcal{H}_e$.	

\smparagraph{Construction of $\mathcal{H}_e$.} Note that we do not change the adjacency lists $N(.)$ stored on the vertices until the very last line of Algorithm~\ref{alg:edgeupdate}. Therefore, for any vertex $v$, we have $N(v) = N_{t-1}(v)$ before this line. This means that throughout the algorithm, for any edge $e=(u, v)$ we can iterate over edges in $N(u)$ and $N(v)$ and find all edges $e'$ with $k_{t-1}(e') \geq \pi(a)$; all these edges will belong to $\mathcal{H}_e$. Thus the total time required is $O(|\mathcal{H}_e| \log \Delta)$. Note that this is possible since $N(v)$ and $N(u)$ are BSTs indexed by $k_{t-1}(.)$ of the elements in them but comes at the cost of an extra $O(\log \Delta)$ factor as these BSTs can have size up to $\Delta$.
\end{proof}

Combining all these claims, we can prove Lemma~\ref{lem:updatealgmatching}.

\restate{Lemma~\ref{lem:updatealgmatching}}{There is an algorithm to update \lfmm{G, \pi} and the data structures required for it after insertion or deletion of any edge $f=(a, b)$ in $O\left(|\flippedset| \min\{\Delta, \frac{\log n}{\pi(f)} \} \log\Delta  \right)$ time, w.h.p.}


\begin{proof}	
\smparagraph{Correctness.}
We first show that when set $S$ becomes empty, for any edge $e$ we have $k(e)= k_{t}(e)$, and $m(e)=m_{t}(e)$ and for any vertex $v$ we have $k(v) = k_t(v)$. To do so, we will use proof by induction and show that Invariants~\ref{inv:MMS1} and \ref{inv:MMS2} hold throughout the algorithm. This proves our claim since by Observation~\ref{claim:increasingedges} if both invariants hold in the last iteration of the algorithm, then when set $S$ becomes empty the data structures $k(.)$ and $m(.)$ are updated for all edges and vertices. By Observation~\ref{obs:twoparts}, for any edge $e'$ with $\pi(e') < \pi(f)$ we have $m(e') = m_t(e')$ and $k(e')= k_t(e')$ which means that Invariant~\ref{inv:MMS1} holds in the first iteration. Further, Invariant~\ref{inv:MMS2} holds since $m(.)$ of none of the edges has changed yet. This gives us the base case of the induction. Moreover, the induction step is a direct result of Claim~\ref{claim:inv1-inv2} which states that if both invariants hold in an arbitrary iteration they hold in the next iteration given that $\S$ is nonempty. 

To complete the prove of correctness, we need to show that when the algorithm terminates, for any vertex $v$, we have $N(v)= N_t(v)$.  
Subroutine {\normalfont \updateadjlists{}} first modifies the adjacency lists of vertices $a$ and $b$ by adding $e$ to them if $e$ is to be added or deleting it otherwise. Note that as we showed, when the algorithm runs this subrutine, for any edge $e$ we already have $k(e)= k_{t-1}(e)$, thus set $\A$ is also updated. Therefore, Algorithm~\ref{alg:edgeupdate}, correctly updates the adjacency lists by iterating over edges in $\A$ and updating their index in the adjacency lists of their end-points.  

\smparagraph{Running Time.} First, note that by Claim~\ref{claim:sizeofhe}, with probability at least $1-n^{-c}$, the size of $\mathcal{H}_e$ for any edge $e$ is $O(\min\{\Delta, \frac{\log n}{\pi(f)}\})$ and constructing that takes time $O(\log \Delta \cdot \min\{\frac{\log n}{\pi(f)}, \Delta \})$. Moreover, by Observation~\ref{claim:increasingedges}, we know that each edge is the lowest-rank edge in set $\S$ in at most one iteration. Putting these facts together gives us that the number of iterations of the algorithm is $O(|\flippedset |\cdot \min\{\frac{\log n}{\pi(f)}, \Delta \})$ with probability at least $1-n^{-c}$. We also know by Claim~\ref{claim:updatedata} that subroutine \updatedatastructues{e} takes $O(1)$ time. Therefore,  the total running time of the algorithm until set $\S$ becomes empty is $O(|\flippedset | \cdot \log \Delta \cdot \min\{\frac{\log n}{\pi(f)}, \Delta \})$ with probability at least $1-n^{-c}$. Further, subroutine {\normalfont \updateadjlists{}} takes $O(|\A|\log \Delta)$ time which by Claim~\ref{cl:AisFtimesH} is bounded by 
$O(\log \Delta \cdot \min\{\frac{\log n}{\pi(f)}, \Delta \})$ with probability at least $1-n^{-c}$. 
\end{proof}

\restate{Claim~\ref{cl:boundFMM}}{\clboundFMM{}}

\begin{proof}
	Assume without loss of generality that $G'$ is obtained by removing an edge $e$ from $G$ and let $\flippedset$ be the set of edges with different MM-statuses in \lfmm{G, \pi} and \lfmm{G', \pi}. We first show that: (1) Each edge $e \in \flippedset$ with $e \not= f$, has a lower-rank neighboring edge in $\flippedset$. (2) The edges in $\flippedset$ form either a single path or a single cycle.
	
	Proof of (1) directly follows from Observation~\ref{obs:twoparts} part 2. For (2), observe that each edge in $\flippedset$ is in at least one of the two matchings \lfmm{G, \pi} and \lfmm{G', \pi}. Therefore, each vertex has at most two incident edges in $\flippedset$; meaning that each connected component in \flippedset{} is indeed either a cycle or a path. To see why there cannot be more than one such connected component,  observe that in this case, at least one connected component does not include $f$. Let $g$ be the minimum-rank edge in this component. For $g$, (1) cannot hold which is a contradiction. 
	
	Now, we show that $|\flippedset| = O(\log n)$. To do this, we provide a reduction to the parallel round complexity of LFMM over random orders.
	
	LFMM can be parallelized, just like LFMIS as described in Section~\ref{sec:unlikely}, in the following way: In each round, all edges that hold the locally minimum rank among their neighbors join MM, then we remove them and their neighboring edges. It is known from \cite[Corollary~B.1]{DBLP:conf/soda/FischerN18} that if ranking $\pi$ over the edges is chosen randomly, then it takes $O(\log n)$ rounds until we find a maximal matching, with probability $1-n^{-c}$ for any constant $c > 1$. 
	
	We prove that the parallel round-complexity of random-order LFMM is at least $\Omega(|\flippedset|)$, implying that w.h.p. $|\flippedset| = O(\log n)$ as desired. To do this, observe that by properties (1) and (2) above, there should be a monotone path $P=(e_1, \ldots, e_k)$ in \flippedset{} where $\pi(e_i) < \pi(e_{i+1})$ for any $i \in [k-1]$ and where  $k = \Omega(|\flippedset|)$. (Just take the longest path in $\flippedset \setminus \{ f \}$, by (1) it has size $\Omega(|\flippedset|)$ and by (2) it is monotone.) Furthermore, since each edge in $P$ is in $\flippedset$ and the edges in $\flippedset$ belong to exactly one of \lfmm{G, \pi} and \lfmm{G', \pi}, the edges in $P$ have to alternate between the two matchings. Suppose w.l.o.g. that the odd ones belong to \lfmm{G, \pi}. Now, take edge $w_{2i+1}$ for any $i$. We show that it takes at least $i$ parallel rounds until this edge joins \lfmm{G, \pi}. For $w_{2i+1}$ to join the matching, its lower rank neighbor $w_{2i}$ should be removed so that $w_{2i+1}$ becomes the local minimum edge. This does not happen until $w_{2i-1}$ joins the matching since $w_{2i-1}$ and $w_{2i+1}$ are the only incident edges to $w_{2i}$ that are in \lfmm{G, \pi}. Now, a simple induction implies that it takes at least $i$ rounds until $w_{2i+1}$ joins the matching, and thus the parallel round complexity is at least $\Omega(k) = \Omega(|\flippedset|)$, which as described, implies $|\flippedset| = O(\log n)$ w.h.p.
\end{proof}

\bibliographystyle{plain}
\bibliography{refs}

\appendix

\section{Proofs of Propositions~\ref{prop:sparseMIS} and \ref{prop:sparseMM}: Degree Pruning}\label{sec:sparsificationproof}

\restate{Proposition~\ref{prop:sparseMIS}}{\sparseMIS{}}

\begin{proof}
	Let us use \rlfmis{p}{G, \pi} to denote the subset of vertices in \lfmis{G, \pi} with rank in $[0, p]$. Any vertex $v$ with $\eliminator{G}{\pi}{v} \leq p$ is in set $\Gamma(\rlfmis{p}{G, \pi})$. Therefore, the set $V_p$ is precisely equal to set $V \setminus \Gamma(\rlfmis{p}{G, \pi})$. Therefore, we have to show that once we remove all vertices in $\Gamma(\rlfmis{p}{G, \pi})$ from the graph, the maximum remaining degree drops to $O(p^{-1}\log n)$ w.h.p.

	Fix an arbitrary $O(\log n)$ bit real $p \in [0, 1]$ and an arbitrary vertex $v \in V$. Let us use $H=(V_H, E_H)$ to denote the residual graph $G[V \setminus \Gamma(\rlfmis{p}{G, \pi})]$ and use $d_v$ to denote the residual degree of $v$ in $H$. That is, $d_v = 0$ if $v \not\in V_H$ and $d_v = \deg_H(v)$ otherwise. The main part of the proof is to show that for any parameter $\beta \geq 1$, it holds that $\Pr[d_v \geq p^{-1} \beta] \leq e^{-\beta}$. Then setting $\beta = \alpha \ln n$, for some large enough constant $\alpha$ would imply $\Pr[d_v \geq p^{-1} \alpha \log n] \leq n^{-\alpha}$. Combining this with a simple union bound over the $\poly(n)$ many possible pairs of $v$ and $p$ will conclude the proof of the proposition that $d_v = O(p^{-1} \log n)$ for all $v$ and $p$ w.h.p.
	
	We first describe a random process for generating an independent set $I$. Then we prove that distribution of \rlfmis{p}{G, \pi} and $I$ is exactly the same and thus both have the same probabilistic behavior. We then prove that $I$ significantly prunes the vertex degrees in the residual graph.

	The random process is as follows: A permutation $\psi$ over all vertices in $V$ is fixed uniformly at random and we initialize $I$ to be $\emptyset$. We then iterate over the vertices in the order of $\psi$. Once we encounter a vertex $v$ in the process, if $v$ has a neighbor in $I$, we discard $v$ and call it {\em irrelevant}. Otherwise,  $v$ is {\em relevant} and we draw a Bernoulli random variable $x_v$ which is 1 with probability $p$. If $x_v = 0$, we call $v$ {\em unlucky} and discard it and if $x_v = 1$, we add $v$ to $I$ and call it {\em lucky}.
		
	It is easy to verify that the random process above in determining $I$ is precisely equivalent to the following process: First sample each vertex of $V$ into a set $S$ independently with probability $p$, then fix a random permutation $\psi$ over the vertices in $S$ and let $I$ be the independent set \lfmis{G[S], \psi}. On the other hand, recall that \rlfmis{p}{G, \pi} is the LFMIS obtained once we only process the vertices $v$ with rank $\pi(v)$ in $[0, p]$. Since $\pi$ is a random ranking, the probability that for a vertex $v$, $\pi(v) \in [0, p]$ is $p$ and is independent from the rank of the other vertices. Furthermore, once we condition on the set of vertices with rank within $[0, p]$, their internal ordering will be completely at random. Therefore, the two independent sets $I$ and $\rlfmis{p}{G, \pi}$ have the same distribution and thus the same probabilistic behavior.

	Finally, we prove the promised claim that $\Pr[d_v \geq p^{-1} \beta] \leq e^{-\beta}$ for any vertex $v$ and any parameter $\beta > 1$ by considering the pruning effect of $I$, which we showed above is equivalent to that of \rlfmis{p}{G, \pi}. To have $d_v \geq p^{-1}\beta$, $v$ should survive to the residual graph, i.e., $v \not \in \Gamma(I)$. This means that in the original process for constructing $I$, anytime that we encounter a relevant neighbor $u$ of $v$, it should turn out to be unlucky. Furthermore, the irrelevant neighbors of $v$ do not survive to the residual graph. Therefore, to have $d_v \geq p^{-1}\beta$, we should encounter at least $p^{-1}\beta$ relevant neighbors of $v$. The probability that all these neighbors turn out to be unlucky is $(1-p)^{\beta/p} \leq e^{-\beta}$ as desired.
\end{proof}

\restate{Proposition~\ref{prop:sparseMM}}{\sparseMM{}}

\begin{proof}
	This simply follows by applying Proposition~\ref{prop:sparseMIS} to the line-graph of $G$.
	
	Consider the line-graph $L=(V^L, E^L)$ of graph $G$, i.e., $V^L$ is equivalent to $E$. Note that ranking $\pi$ on $E$ is a random ranking on the vertices of $L$. Moreover, for any edge $e \in E$ and its equivalent vertex $v_e \in V_L$, their eliminators \eliminator{G}{\pi}{e} and \eliminator{L}{\pi}{v_e} are also equivalent (due to the well-known equivalence of MM and MIS on the line-graph). This means that $e \in E_p$, iff $v_e \in V^L_p$ where $V^L_p = \{ v \in V^L \mid \pi(\eliminator{L}{\pi}{v}) > p \}$. By Proposition~\ref{prop:sparseMIS} we know the maximum degree in $L[V^L_p]$ is at most $O(p^{-1} \log n)$ w.h.p. Therefore, each edge $e \in E_p$ is incident to at most $O(p^{-1}\log n)$ other edges in $E_p$ and thus every vertex in $V$ has at most $O(p^{-1}\log n)$ incident edges in $E_p$.
%
\end{proof}

\end{document}